\newtheorem{thm}{Theorem}[section]
\newtheorem{lem}[thm]{Lemma}
\newtheorem{prop}[thm]{Proposition}
\newtheorem{rem}[thm]{Remark}
\newtheorem{defn}[thm]{Definition}
\numberwithin{equation}{section}
\newcommand{\be}{\begin{equation}}
\newcommand{\ee}{\end{equation}}
\newcommand{\bea}{\begin{eqnarray}}
\newcommand{\eea}{\end{eqnarray}}
\newcommand{\ba}{\begin{aligned}}
\newcommand{\ea}{\end{aligned}}
\begin{document}

\title{Poisson reductions of master integrable systems on doubles of compact Lie groups}

\maketitle

\begin{center}

L. Feh\'er${}^{a,b}$

\medskip
${}^a$Department of Theoretical Physics, University of Szeged\\
Tisza Lajos krt 84-86, H-6720 Szeged, Hungary\\
e-mail: lfeher@physx.u-szeged.hu

\medskip
${}^b$Institute for Particle and Nuclear Physics\\
Wigner Research Centre for Physics\\
 H-1525 Budapest, P.O.B.~49, Hungary

\end{center}

\begin{abstract}
We consider three `classical doubles' of any semisimple, connected and simply connected  compact Lie group $G$:
the cotangent bundle, the Heisenberg
double and the internally fused quasi-Poisson double. On each double we identify a pair of `master integrable systems’
and investigate their Poisson reductions. In the simplest cotangent bundle case,
the reduction is defined by taking quotient
 by the cotangent lift of the conjugation action of $G$ on itself, and this  naturally generalizes
  to the other two doubles.  In each case, we derive explicit formulas for the reduced Poisson structure and equations
  of motion and find that they are associated with well known classical dynamical $r$-matrices.
   Our principal result is that we provide a unified treatment of a large family
  of reduced systems, which contains new models as well as examples of spin Sutherland and Ruijsenaars--Schneider models
  that were studied previously. We argue that on generic symplectic leaves of the Poisson quotients
   the reduced systems are integrable in the degenerate sense,
  although further work is required to prove this rigorously.
\end{abstract}

 \setcounter{tocdepth}{2}

 \tableofcontents

 \vspace{1cm}

\def\bI{\mathbb{I}}                         %
\def\span{\mathrm{span}}                    %
\def\1{{\boldsymbol 1}}                     %
\def\cD{{\mathcal D}}                       %
\def\cH{{\mathcal H}}                       %
\def\tr{\mathrm{tr}}                        %
\def\ri{{\rm i}}                            %
\def\bC{\mathbb{C}}                         %
\def\bR{\mathbb{R}}                         %
\def\bZ{\mathbb{Z}}                         %
\def\cF{{\mathcal F}}                       %
\def\reg{\mathrm{reg}}                      %
\def\id{{\mathrm{id}}}                      %
\def\dt {\left.\frac{d}{dt}\right|_{t=0}}   %
\def\fM{\mathfrak{M}}                       %
\def\cG{{\mathcal G}}                       %
\def\cR{{\mathcal R}}                       %
\def\cB{\mathcal{B}}                        %
\def\Dress{{\mathrm{Dress}}}                %
\def\dress{{\mathrm{dress}}}                %
\def\red{{\mathrm{red}}}                    %
\def\rank{{\mathrm{rank}}}                  %
\def\fP{\mathfrak{P}}                       %
\def\ad{\mathrm{ad}}                        %
\def\Ad{\mathrm{Ad}}                        %
\def\cA{\mathcal{A}}                        %
\def\cM{\mathcal{M}}                        %
\def\fN{\mathfrak{N}}                       %
\def\End{\mathrm{End}}                      %
\def\r{{\mathrm r}}                         %
\def\fR{\mathfrak{R}}                       %
\def\cC{\mathcal{C}}                        %
\def\cO{\mathcal{O}}                        %
\def\cW{\mathcal{W}}                        %
\def\fS{\mathfrak{D}}                       %
\def\fF{\mathfrak{F}}                       %
\def\cL{\mathcal{L}}                        %
\def\bfR{{\mathbf R}}                       %
\def\sl2z{\mathrm{SL}(2,\bZ)}               %
\def\cMpreg0{{{\cM}'}_0^\reg}               %
\def\fMpreg0{{{\fM}'}_0^\reg}               %
\def\fSpreg0{{{\fS}'}_0^\reg}               %
\def\CMpreg{{{\cM}'}^\reg}                  %
\def\FMpreg{{{\fM}'}^\reg}                  %
\def\FSpreg{{{\fS}'}^\reg}                  %
\def\cS{{\mathcal S}}                       %

\newpage

\section{Introduction}
\label{S:Sec1}

The variants of the method of Hamiltonian reduction \cite{A,OR,Rud} play a pivotal role
in deriving and analyzing integrable Hamiltonian systems.
The starting point in the applications is always a manifestly integrable
system on a higher dimensional phase space that possesses  a large symmetry group,
which is used for setting up its reduction.
As examples, it is sufficient to mention that key properties of the ubiquitous Calogero--Moser--Sutherland models
\cite{Cal,M,S}
and their relativistic \cite{RS} and spin generalizations \cite{GH,KZ,LX} became transparent from investigations
based on this method \cite{AO,CF,FK1,FGNR,KKS,Res1}.
For reviews of the subject, see  \cite{A,N,OP,PolR}.
Building on our experience gained from previous studies \cite{FFM,F1,F2,FA,FK2,FP} here we wish to explore a general set of
reductions of important families of unreduced `master systems'.

Let $G$ be a compact, connected and simply connected Lie group whose Lie algebra $\cG$ is simple.
In this paper we study Poisson reductions of three  phase spaces
associated with $G$. The first is the cotangent bundle
\be
\cM:= T^* G \simeq G \times \cG,
\ee
presented by means of right-trivialization and the identification $\cG^* \simeq \cG$.
Its Poisson--Lie generalization is the Heisenberg double \cite{STS}
\be
\fM:= G \times B,
\ee
which is obtained by combining the standard multiplicative Poisson structures on $G$
and its dual Poisson--Lie group $B$ into a symplectic structure.
This is a natural generalization since $T^*G$ is the Heisenberg double for $G$
equipped with the zero Poisson structure.
The third unreduced phase space is the so-called internally fused quasi-Poisson double \cite{AKSM}, denoted
\be
\fS:= G \times G,
\ee
that is closely related to the moduli space of flat $G$-connections on the punctured torus.
Each of these spaces carries a pair of degenerate integrable systems, and
reductions of those to integrable many-body models and their spin extensions
have already received considerable attention (see, e.g., \cite{F1,FP,Res1,Res2} and references therein).
The goal of this paper is to describe a very general reduction of these `master integrable systems' in all three cases.
We shall apply the same technique in our study of the distinct cases, and shall highlight the similarities between the resulting
reduced systems.
The principal case of our interest is the Heisenberg double $\fM$.
We include the cotangent bundle in our treatment mainly in order to motive the generalizations,
although  new results will be obtained also in this familiar case.
The unified treatment that we present has not yet been developed in the literature,
and could be useful for further detailed explorations of the reduced systems descending
from the three doubles.

The doubles of $G$ are $G$-manifolds, where $\cM$ carries the cotangent lift of the
conjugation action of $G$ on itself, $G$ acts on $\fS$ by diagonal conjugations,
and there is a similar action on $\fM$ built from the conjugation action and
the dressing action of $G$ on $B$.
The Poisson brackets on $\cM$ and $\fM$ and the quasi-Poisson bracket on $\fS$ share
the property that the $G$-invariant smooth functions form a closed Poisson algebra.
By Poisson reduction, we mean the restriction to this Poisson algebra of invariant functions,
which is to be thought of as a Poisson structure on the corresponding quotient space
defined by the $G$-action.
The first principal goal of our work is to derive an effective description of these `reduced
Poisson algebras'.

Denote $C^\infty(G)^G$, $C^\infty(\cG)^G$ and $C^\infty(B)^G$ the respective rings of
invariant real functions. The functional dimension of these rings of functions equals the rank $\ell$ of $\cG$.
 All three doubles are Cartesian products as manifolds,
and we let $\pi_1$ and $\pi_2$  denote the projections onto the first and second factors of those Cartesian products.
Then, for each of the three doubles,
 $\pi_1^*(C^\infty(G)^G)$ provides an Abelian Poisson subalgebra
of the Poisson algebra of the $G$-invariant functions.
We call the elements of $\pi_1^*(C^\infty(G)^G)$ \emph{pullback invariants}.
Using $\pi_2$ in the analogous manner, one also obtains Abelian Poisson algebras of pullback invariants.
The Poisson and quasi-Poisson structures allow one to associate a (Hamiltonian or quasi-Hamiltonian)
vector field to every function, defining an evolution equation.
We shall explain that the evolution equation obtained from any pullback
invariant gives rise to a degenerate integrable system \cite{MF,Nekh,Res2}, which means\footnote{Here, we implicitly extended Definition \ref{defn:degint} below
to the non-symplectic case of the quasi-Poisson double.}
that it admits a ring of
constants of motion whose functional dimension is equal to $2 \dim(G) - \ell$, where $2\dim(G)$
is the dimension of the phase space.
We shall also explicitly describe the integral curves of the pullback invariants and their
constants of motion in each case.  This yields generalizations of well-known results concerning
$T^*G$.
Our second principal goal is to characterize the reductions of the degenerate
integrable systems induced on the master phase spaces by the pullback invariants.

For clarity, recall  that the functional dimension of a ring of smooth functions $\fF$ on a manifold $X$ is $k$ if
there exists an open dense submanifold $\tilde X \subseteq X$ such that the exterior derivatives of the
elements of $\fF$ span a $k$-dimensional subspace of $T^*_xX$ for every $x\in \tilde X$.
The fact that the rings of invariants of our concern have functional dimension $\ell = \rank(\cG)$ follows
from basic Lie theoretic results,
and the pullback invariants obviously have the same functional
dimension as the original invariants.  Below, the functional dimension of a Poisson algebra is understood
to mean the functional dimension of the underlying ring of functions.

The quotient spaces of the master phase spaces are not smooth manifolds, but stratified
Poisson spaces \cite{OR,SL,Sn},  which still can be decomposed into disjoint unions of smooth symplectic leaves.
However, this is  quite a complicated structure, and we will be content with describing the
Poisson algebras of the invariants, and the reductions of the evolution equations generated by
the pullback invariants,
in terms of convenient partial gauge fixings.
To explain what this means, we next outline the case of the cotangent bundle.
We then briefly summarize how the picture generalizes to the other cases.

\subsection*{The motivating example of $T^*G$ and its generalizations}
Let us fix a maximal torus $G_0 < G$ and let $\cG_0 < \cG$ be its Lie algebra.
The group $G$ acts on itself by conjugations and on $\cG$ by the adjoint action.
We denote by $G^\reg$ and $\cG^\reg$ the dense open subsets formed by the elements whose
isotropy subgroups are maximal tori in $G$, and let $G_0^\reg$ and $\cG_0^\reg$ be their intersections with $G_0$ and $\cG_0$, respectively.
Then the $G$-orbits through the submanifolds
\be
\cM_0^\reg :=\{ (Q,J) \in \cM \mid Q\in G_0^\reg\}
\quad\hbox{and}\quad
{\cM'}_0^\reg:=\{ (g, \lambda)\in \cM \mid \lambda \in \cG_0^\reg\}
\label{M012}\ee
fill dense open subsets of $\cM$, denoted $\cM^\reg$ and
$\CMpreg$.
The restriction of functions leads to isomorphisms
\be
C^\infty(\cM^\reg)^G \Longleftrightarrow  C^\infty(\cM_0^\reg)^\fN
\quad\hbox{and}\quad
C^\infty(\CMpreg)^G \Longleftrightarrow  C^\infty(\cMpreg0)^\fN,
\label{isom12}\ee
where $\fN < G$ denotes the normalizer of $G_0$ inside $G$.
Speaking colloquially, we say that
$\cM_0^\reg$ and $\cMpreg0$ provide partial gauge fixings for the $G$-action on
the dense open submanifolds  $\cM^\reg\subset \cM$ and $ \CMpreg\subset \cM$, and $\fN$ is
the corresponding residual gauge group.
The key point of our work is that we use the isomorphisms \eqref{isom12} of the respective rings of functions
to transfer the Poisson bracket of the $G$-invariant functions to the rings $C^\infty(\cM_0^\reg)^\fN$ and
$C^\infty(\cMpreg0)^\fN$. By definition, this gives the `reduced Poisson algebras'
\be
\left( C^\infty(\cM_0^\reg)^\fN, \{-,- \}_\red\right)
\quad \hbox{and}\quad
\left(C^\infty(\cMpreg0)^\fN , \{- ,- \}'_\red\right).
\label{redPB12}\ee
Since any smooth, even continuous, function can be recovered from its restriction to a dense open
subset, these Poisson algebras furnish two convenient descriptions
of the Poisson brackets of the elements of $C^\infty(\cM)^G$.
Their explicit formulas are given by Theorems \ref{thm:spinSuth} and \ref{thm:spinRS}, the former is well known,
while the latter seems to have escaped attention previously.

Here, a few clarifying remarks are in order. First, it should be noted that the reduced
Poisson algebras \eqref{redPB12} are larger than $(C^\infty(\cM)^G, \{- ,- \})$, since not every
smooth invariant function on a dense open subset extends to a smooth function on the full of $\cM$.
Second, these Poisson algebras can be extended to $G_0$-invariant smooth functions, for
$\cM_0^\reg/G_0$ is a covering space of $\cM_0^\reg/\fN$ with the fiber given by the Weyl group $\fN/G_0$;
and similar for  $\cMpreg0$.
Without resorting  to further ad hoc gauge fixings, it appears
difficult to gain more effective descriptions of the Poisson algebras of the invariant functions.

Now what can we say about the reductions of the two integrable systems on $\cM$?
Take an arbitrary function $\varphi \in C^\infty(\cG)^G$ and consider the restriction of
the pullback invariant $\pi_2^*(\varphi)$ to $\cM_0^\reg$. This `reduced Hamiltonian'
 defines a derivation
of the elements of $C^\infty(\cM_0^\reg)^\fN$ through the Poisson bracket $\{- ,- \}_\red$.
This derivation can be presented as a vector field on $\cM_0^\reg$, which
then gives rise to a `reduced evolution equation' on $\cM_0^\reg$.
We find (Proposition \ref{prop:redeq1}) that the resulting evolution equation takes the following form:
\be
\dot{Q} = (d\varphi(J))_0 Q, \qquad \dot{J} = [\cR(Q) d\varphi(J), J].
\label{redeq1I}\ee
Here, $d\varphi$ denotes the $\cG$-valued  gradient of $\varphi$, the subscript zero refers to
the orthogonal decomposition $\cG = \cG_0 + \cG_\perp$, and $\cR(Q)\in \End(\cG)$ is a well-known
trigonometric solution of the modified classical dynamical Yang--Baxter equation \cite{EV}.
It vanishes on $\cG_0$ and, writing $Q= \exp(\ri q)$ with $q\in \ri \cG_0^\reg$, it is given by
$\cR(Q) = \frac{1}{2}\coth(\frac{\ri}{2} \ad_q)$ on $\cG_\perp$.
(Here, $\ri \cG_0$ is a subset of the complexification of $\cG$.)
Of course, the so-obtained vector fields and evolution equations are unique only up to the addition
of arbitrary vector fields that are tangent to the $G_0$-orbits in $\cM_0^\reg$,
which generate infinitesimal residual gauge transformations.
This ambiguity drops
 out under the eventual projection to the reduced phase space $\cM/G$.
 Thus, our slight abuse of the term \emph{reduced} is harmless.

Similarly, the pullback invariants $\pi_1^*(h)$ associated with the functions $h\in C^\infty(G)^G$ lead to
 interesting reduced evolution equations on  ${\cM'}_0^\reg$. We find (Proposition \ref{prop:redeq2}) that they take the following form:
 \be
\dot{g} = [g, r(\lambda) \nabla h(g)], \qquad \dot{\lambda} = -(\nabla h(g))_0.
\label{redeq2I}\ee
Using the Killing form $\langle -,- \rangle_\cG$ of $\cG$, $\nabla h(g)\in \cG$ is defined by the relation
 $\langle X, \nabla h(g)\rangle_\cG = \dt h(e^{tX} g)$ for all $X\in \cG$,
and $r(\lambda) \in \End(\cG)$ is
the rational dynamical $r$-matrix that vanishes on $\cG_0$
 and operates on
 $\cG_\perp$ as $(\ad_\lambda)^{-1}$.
 These evolution equations matter up to residual gauge transformations
 like in the case of \eqref{redeq1I}.

 By parametrizing $J$ in \eqref{redeq1I} according to
\be
J= -\ri p - \cR(Q) \xi -\frac{1}{2} \xi  \quad\hbox{with}\quad p = \ri \cG_0,\,\, \xi \in \cG_\perp,
\label{Jpar}\ee
and taking $\varphi(J) = - \frac{1}{2} \langle J, J\rangle_\cG$, the system \eqref{redeq1I} can be recognized as
a spin Sutherland system, for which the components of $q$ and $p$ form canonically conjugate pairs
and $\xi$ is a so-called collective spin variable \cite{FP,LX}. (See also
equation \eqref{spinSuthHam}.)
For $G=\mathrm{SU}(n)$, restriction to a small symplectic leaf in the reduced phase space gives
the trigonometric (spinless) Sutherland system \cite{KKS}. On the same symplectic leaf, but using a different
parametrization and the Hamiltonian $h(g) = \Re \tr(g)$, the system \eqref{redeq2I} yields a specific
real form of the rational Ruijsenaars--Schneider system, which enjoys a duality relation with the trigonometric Sutherland system
\cite{FA,FGNR}.

 The above sketched results about reductions of the cotangent bundle are known to experts,
 especially the reduced system described in terms of  $\cM_0^\reg$.
 In this paper we take the lead from this example and characterize the reductions
 of the Heisenberg double $\fM$ and the quasi-Poisson double $\fS$ in a similar manner.
 To highlight a key feature of these generalizations, note that the first model $\cM_0^\reg$ was obtained by `diagonalizing' the first one out
of the pairs of elements forming $\cM$, and the second model $\cMpreg0$ was obtained
by diagonalizing the second constituent of those pairs.
The pullback invariants built by using  $\pi_2^*$ then led to interesting reduced evolution equations on $\cM_0^\red$,
and those built on $\pi_1^*$ led to interesting evolution equations on $\cMpreg0$.
The situation turns out fully analogous for the reductions of the other two doubles.
 In particular, we shall derive two presentations of the Poisson algebras of the $G$-invariant functions,
 and describe the form of the interesting reduced evolution equations induced by the two rings of pullback
 invariants.  Concerning the Heisenberg double, these results are summarized by Theorem \ref{thm:defSuth} together with
 Proposition \ref{prop:REDeq1+}, and Theorem \ref{thm:RED2} with Proposition \ref{prop:REDeq2},
 which are tied with two partial gauge fixings akin to what is displayed in \eqref{M012} for $T^*G$.
 The analogous results pertaining to the quasi-Poisson double are formulated in Theorem \ref{thm:redqPB} and Proposition \ref{prop:redqeq}.
 \emph{These theorems and propositions constitute the main new results of the present paper.}

Motivated by the case of $T^*G$ \cite{Res1} and the results of \cite{FA,FK1,FK2,FGNR}, we say that the two kinds of reduced systems
that arise from the same double are in duality with each other. In the case of the quasi-Poisson double,
duality  actually becomes self-duality. The meaning of these dualities will be elaborated  in the text.

\subsection*{Degenerate integrability and reduction}
First of all, let us specify the precise notion of degenerate integrability used in this paper.

\begin{defn}\label{defn:degint} By definition \cite{Nekh},
a \emph{degenerate integrable system}
on a symplectic manifold of dimension $N$ consists of an Abelian Poisson subalgebra of
the Poisson algebra of smooth functions such that its functional dimension, $\delta$, is smaller than $N/2$,
and the functional dimension of its centralizer is $(N-\delta)$.
To put it more plainly, the system  is built on $1\leq \delta< N/2$ functionally independent, mutually Poisson commuting Hamiltonians
that admit $(N-\delta)$ functionally independent joint constants of motion. An additional requirement is
that the commuting Hamiltonians should possess complete flows.
\end{defn}

Degenerate integrability is a stronger property than Liouville integrability, which corresponds to the limiting case $\delta=N/2$.
For the structure of the systems having this property, see \cite{MF,Nekh,Res2,Rud}.
Further variants of the notion of integrability, as well as their extension to Poisson manifolds, and even
to Abelian Lie algebras of non-Hamiltonian vector fields,
are also discussed in the literature \cite{J,LMV,Zung}.

The restrictions of our reduced systems are expected to give degenerate integrable systems on generic
symplectic leaves of the quotient space of the double in each case.   Reshetikhin has argued \cite{Res1}  that this is the case
for the complex holomorphic analog of the
cotangent bundle $T^*G$, and his arguments can be adapted to the compact real
form. His joint paper with Arthamonov \cite{AR} leads to the same conclusion regarding the quasi-Poisson double.
It may well be that integrability holds on all symplectic leaves (with only Liouville integrability on exceptional leaves),
but  we cannot prove this at present. Nevertheless,  we deem it worthwhile to outline two mechanisms that point toward the heuristic statement that
`degenerate integrability is generically inherited by the reduced systems engendered by Poisson reduction'.

  Let $V$ be a $G$-invariant vector field on a $G$-manifold $X$.
  Equivalently,  if
$x(t)$ is an integral curve of $V$,  then $A_\eta(x(t))$
is also an integral curve for each $\eta \in G$,
where $A_\eta$ denotes the diffeomorphism of $X$ associated with $\eta\in G$.
Suppose now that $G$ is compact and denote by $d_G$ the probability Haar measure on $G$.
For any real function $\cF \in C^\infty(X)$ define the function $\cF^G\in C^\infty(X)^G$ by averaging
the functions $A_\eta^* \cF$ over $G$,
\be
\cF^G(x):= \int_G \cF(A_\eta(x)) d_G(\eta), \qquad \forall x\in X.
\label{aver}\ee
Clearly, if $\cF$ is a constant of motion for the vector field $V$, then
$\cF^G$ is a $G$-invariant constant of motion for $V$.
In \cite{Zung} averaging was used for arguing that, generically,
degenerate integrability  survives Poisson reduction.
In this work it was assumed that the $G$-action is generated by an equivariant
moment map into $\cG^*$.
However, the fine structure of the quotient space of $X$ was treated only
rather casually.  See also the review \cite{J}.
The averaging of the unreduced constants of motion is applicable
in all cases that we study.
The Hamiltonian vector fields of the pullback invariants
are invariant under the `conjugation action' of $G$ on the unreduced phase space, except for the pullback invariants from
$\pi_1^*(C^\infty(G)^G) \subset C^\infty(\fM)$.
In the latter case, $G$-invariance of the Hamiltonian vector fields holds with respect to an action that has the same orbits
as the conjugation action;
this is explained in Appendix \ref{sec:C}.

Now we formulate a second mechanism whereby integrability can descend to reduced systems.
We extracted this mechanism from the work of Reshetikhin \cite{Res1}.  It will turn out to be applicable
to all of our examples of interest.
We begin by listing a number of strong assumptions.
First, consider two $G$-manifolds $X$ and $Y$ for which
both quotient spaces $X/G$ and $Y/G$ are manifolds such that
$\pi_X: X\to X/G$ and $\pi_Y: Y \to Y/G$ are smooth submersions.
Second, suppose that $\Psi: X \to Y$ is a smooth, $G$-equivariant, surjective map.
Then, $ \Psi$ gives rise to a well-defined smooth, surjective map
$\Psi_\red: X/G \to Y/G$, for which
\be
\pi_Y \circ \Psi = \Psi_\red \circ \pi_X.
\label{Psired}\ee
Third, suppose that we have a  vector field $V$ on $X$ that is projectable to
a vector field $V_\red$ on $X/G$.
Coming to the crux, if we now assume that $\Psi$ is constant along the integral
curves of $V$, then we obtain that $\Psi_\red$ is constant along the integral
curves of $V_\red$. Indeed, this holds since the integral curves of $V_\red$ result by
applying $\pi_X$ to the integral curves of $V$.
In such a situation, $\Psi^*(C^\infty(Y))$ gives constants of motion for $V$ and
$\Psi_\red^*(C^\infty(Y/G))$ gives
constants of motion for $V_\red$.
In particular, the functional dimension of the ring of constants of motion for the
projected vector field $V_\red$  is at most $\dim(G)$ less than the dimension of $Y$.
Under favourable circumstances, this mechanism can be used to show the degenerate
integrability of the reduced system on $X/G$ that  descends from the commuting (Hamiltonian) vector fields
 of a degenerate integrable system on $X$.
 The unreduced commuting Hamiltonians must be $G$-invariant, and must remain independent after reduction.
 To put this mechanism into practice, one may have to restrict oneself to dense open submanifolds and
 to generic symplectic leaves of the quotient Poisson structure.
 This will become clear in the examples.

\subsection*{Layout and notations}
The organization of the rest of the paper is shown by the table of contents.
Sections \ref{S:Sec2}, \ref{S:Sec3} and \ref{S:Sec4} are devoted to the three doubles, starting from the cotangent bundle.
In each case, we first describe the unreduced phase space and its degenerate integrable systems,
and then turn to their reductions. We have already delineated the theorems and proposition that
contain our main new results.
These results and open problems are briefly discussed in Section \ref{S:Sec5}.
Three appendices are also included, which contain auxiliary material.
In particular, Appendix \ref{sec:A} summarizes some Lie theoretic background that the reader
may wish to look at before reading Section \ref{S:Sec3}.

Throughout the paper, our notations  `pretend' that we are dealing with matrix Lie groups. For example,
$\eta J \eta^{-1}$ in equation \eqref{Aeta} denotes the adjoint action of $\eta \in G$ on $J\in \cG$.
As another example, $Xg$ in \eqref{cGg} stands for the value at $g\in G$
of the right-invariant vector field on $G$
associated with  the element $X$ from the Lie algebra $\cG$ of  $G$.
Such matrix notations simplify many formulas considerably, and can be easily converted into more abstract notation if desired.
Then one can verify that our results are valid for abstract Lie groups as well.
Alternatively, one may employ faithful matrix representations of the underlying Lie groups.

\section{The case of the cotangent bundle $T^*G$}
\label{S:Sec2}

Let $G$ be a connected and simply connected compact Lie group whose Lie algebra $\cG$ is simple.
In this section we describe two degenerate integrable systems on the cotangent
bundle $T^*G$ and characterize their Poisson reduction induced by
the conjugation action of $G$. The first system
contains the Hamiltonian that generates  free geodesic motion
on $G$, and its reduction leads
to a trigonometric spin Sutherland model.  The reduction of
the other system on $T^*G$  gives rational spin Ruijsenaars--Schneider type models.
Most of the results presented in this section are available in the literature \cite{FP,Res1}.
We
include their treatment
mainly in order to motivate the subsequent generalizations.
 However, the
descriptions of the reduced Poisson brackets and equations of motion
as given by Theorem \ref{thm:spinRS}
and  Proposition \ref{prop:redeq2} appear to be new.

Let us identify the dual space $\cG^*$ with $\cG$  using the (negative definite)
inner product $\langle - , - \rangle_\cG$, which
is a multiple of the Killing form, and then
identify $\cM:=T^*G$ with $G\times \cG$ using right-translations.
The canonical Poisson bracket on the phase space
\be
\cM = G \times \cG = \{(g,J)\mid g\in G,\, J\in \cG\}
\ee
can be written as
\be
\{ \cF, \cH\}(g,J) =
\langle \nabla_1 \cF, d_2 \cH\rangle_\cG - \langle \nabla_1 \cH, d_2 \cF \rangle_\cG + \langle J, [d_2 \cF, d_2 \cH]\rangle_\cG,
\label{PBcot}\ee
where the derivatives are taken at $(g,J)$.
Here and below, we use the $\cG$-valued derivatives of any $\cF \in C^\infty(\cM)$,  defined by
\be
 \langle X, \nabla_1 \cF(g,J) \rangle_\cG + \langle X', \nabla_1' \cF(g,J) \rangle_\cG := \dt \cF(e^{tX} g e^{tX'},J), \quad \forall
  (g,J)\in\cM, \, X,X' \in \cG,
 \label{nabG1}\ee
 and
 \be
 \langle X, d_2 \cF(g,J) \rangle_\cG  := \dt \cF(g, J + tX), \quad \forall
  (g,J)\in\cM, \, X \in \cG.
 \label{d2}\ee
The group $G$ acts by simultaneous conjugations of $g$ and $J$, i.e., the action of $\eta \in G$ on $\cM$
is furnished by the map
\be
A_\eta: (g,J) \mapsto (\eta g \eta^{-1}, \eta J \eta^{-1}).
\label{Aeta}\ee
This Hamiltonian action is generated by the moment map $\Phi: \cM \to \cG$,
\be
\Phi(g,J) = J - \tilde J
\quad \hbox{where}\quad \tilde J := g^{-1} J g.
\label{PhimomT}\ee
The space of $G$-invariant real functions, $C^\infty(\cM)^G$, forms a Poisson subalgebra.
By definition, this is identified  as the Poisson algebra of smooth functions carried by the quotient space $\cM/G$.

Let us first consider the invariant Hamiltonians $\cH \in C^\infty(\cM)^G$ of the form
\be
\cH(g,J) = \varphi(J)
\quad\hbox{with}\quad \varphi\in C^\infty(\cG)^G.
\label{Hams1}\ee
That is,  $\cH = \pi_2^*(\varphi)$ using the natural projection $\pi_2: \cM \to \cG$.
There are $\ell:= \rank(\cG)$ functionally independent Hamiltonians in this set,
since the ring of invariants for the adjoint action of $G$ on $\cG$,
$C^\infty(\cG)^G$, is freely generated by $\ell$ basic invariants (see, e.g., \cite{Mic}, Section 30).
The Hamiltonian vector field engendered by $\cH$ can be written as
\be
\dot{g} = (d \varphi(J)) g, \qquad \dot{J} =0,
\label{unredeq1}\ee
and its integral curve  through the initial value $(g(0), J(0))$ reads
\be
(g(t), J(t))= ( \exp(t d \varphi(J(0))) g(0), J(0)).
\label{unredsol1}\ee
The corresponding constants of motion are given by arbitrary functions of $J$ and $\tilde J$ \eqref{PhimomT}.
Since  $\psi(J) = \psi(\tilde J)$ for every function $\psi \in C^\infty(\cG)^G$, and this gives $\ell$ relations,
the functional dimension of the ring of constants of motion is $2 \dim(\cG) - \ell$.
Therefore the Hamiltonians \eqref{Hams1} form a degenerate integrable system.

 Another degenerate integrable system arises from the Hamiltonians $\cH\in C^\infty(\cM)^G$ of the form
\be
\cH(g,J) = h(g)
\quad\hbox{with}\quad h\in C^\infty(G)^G.
\label{Hams2}\ee
In other words,  $\cH = \pi_1^*(h)$ with the projection $\pi_1: \cM \to G$.
These Hamiltonians are in involution and form a ring of functional dimension $\rank(\cG)$, too.
The corresponding evolution equations read
\be
\dot{g}=0, \qquad \dot{J}  = - \nabla h(g),
\label{unredeq2}\ee
and  their flows are given by
\be
(g(t), J(t)) = \left(g(0), J(0) - t \nabla h(g(0))\right).
\label{unredsol2}\ee
The constants of motion are now found as arbitrary functions of the pair $(g,\Phi)$, where $\Phi$
is the moment map \eqref{PhimomT}.
To show that the functional dimension of this ring of functions is $2\dim(\cG) - \rank(\cG)$,
consider the isotropy subalgebra of $g$,
\be
\cG(g):= \{ X \in \cG \mid X g  - gX=0\},
\label{cGg}\ee
whose dimension  equals $\ell= \rank(\cG)$ for generic $g$.
Then notice the identity
\be
\langle \Phi(g,J), X \rangle_\cG =0\quad\hbox{for all}\quad X\in \cG(g).
\ee
On a dense open subset of $\cM$, this implies $\ell$ relations between the components of $\Phi(g,J)$,
and  apart from this  $\Phi$ varies freely if $g$ is generic.
It follows that the functional dimension of the ring of constants of motion
is $2 \dim(\cG) - \ell$, proving that the Hamiltonians \eqref{Hams2} yield a degenerate integrable system.

An element of $G$ is \emph{regular} if its isotropy group with respect to conjugations
is a maximal torus in $G$, and an  element of $\cG$ is regular if its centralizer in $\cG$
is the Lie algebra of a maximal torus.
We fix a maximal torus $G_0 < G$ and let $\cG_0$ denote its Lie algebra.
Then $G^\reg$, $G_0^\reg$ and $\cG^\reg$, $\cG_0^\reg$ stand for the corresponding
open dense subsets of regular elements.
We also introduce the following sets
\be
\cM^\reg:=\{ (g,J)\in \cM \mid g \in G^\reg\}, \qquad
\cM_0^\reg :=\{ (Q,J) \in \cM \mid Q\in G_0^\reg\},
\label{M01}\ee
and
\be
\CMpreg:=\{ (g,J) \in \cM \mid J\in \cG^\reg\}, \qquad
\cMpreg0:=\{ (g, \lambda)\in \cM \mid \lambda \in \cG_0^\reg\}.
\label{M02}\ee
The submanifolds $\cM_0^\reg  \subset \cM^\reg$ and $\cMpreg0 \subset \CMpreg$ are
stable under the action of the normalizer of $G_0$ in $G$, which we denote by $\fN$:
\be
\fN:= \{ \eta \in G \mid  \eta G_0 \eta^{-1} = G_0\}.
\label{fN}\ee
Note that $G_0$ is a normal subgroup of $\fN$, and the factor group $\fN/G_0$ is the Weyl group
of the pair $(G_0, G)$.

Any continuous function $\cF$ on $\cM$ can be recovered from its restriction to $\cM_0^\reg$,
 as well as from its restriction to ${\cM'}_0^\reg$. The restrictions of the $G$-invariant functions
enjoy residual $\fN$-invariance.
It is also easy to see  that the restrictions of functions
 provide
the following isomorphisms:
\be
C^\infty(\cM^\reg)^G \Longleftrightarrow  C^\infty(\cM_0^\reg)^\fN
\label{isom1}\ee
and
\be
C^\infty(\CMpreg)^G \Longleftrightarrow  C^\infty(\cMpreg0)^\fN.
\label{isom2}\ee

In preparation, now we introduce the dynamical $r$-matrices that will feature below.
For this purpose, we consider the decomposition
\be
\cG = \cG_0 + \cG_\perp,
\label{cGdecperp}\ee
where $\cG_\perp$ is the orthogonal complement of
 the fixed maximal Abelian subalgebra $\cG_0< \cG$ with respect to the Killing form.
Accordingly, we may write any $X\in \cG$ as
\be
X= X_0 + X_\perp \quad \hbox{where}\quad X_0\in \cG_0,\, X_\perp \in \cG_\perp.
\label{cGdecX}\ee
Then, for any $Q \in G_0^\reg$ we introduce $\cR(Q) \in \End(\cG)$ by
\be
 \cR(Q) (X) = \frac{1}{2} (\Ad_Q + \id)\circ (\Ad_Q - \id)_{\vert \cG_\perp}^{-1}(X_\perp),
 \label{RQ1}\ee
 using that $(\Ad_Q -\id)$  is invertible on $\cG_\perp$.
 Moreover, for any $\lambda \in \cG_0^\reg$ we define $r(\lambda)\in \End(\cG)$ by
 \be
 r(\lambda)( X) := (\ad_\lambda)_{\vert \cG_\perp}^{-1} (X_\perp),
 \label{rLa}\ee
 using that $\ad_\lambda$ is invertible on $\cG_\perp$.
These linear operators
are well-known solutions of the
(modified) classical dynamical Yang--Baxter equation \cite{EV}.
They vanish identically on $\cG_0$ and are antisymmetric
\be
\langle \cR(Q) X , Y \rangle_\cG = -\langle  X , \cR(Q) Y \rangle_\cG,
\qquad
\langle r(\lambda) X , Y \rangle_\cG = -\langle  X , r(\lambda) Y \rangle_\cG, \quad \forall X,Y\in \cG.
\label{Rrasym}\ee

With the necessary definitions at hand, we are ready to derive  convenient
characterizations of the Poisson algebras of the invariant functions.
We begin by noting that every $G$-invariant function on $\cM$ satisfies the basic identity
\be
g^{-1} \nabla_1 \cF(g,J) g - \nabla_1 \cF(g,J) = [J, d_2 \cF(g,J)].
\label{crucid1}\ee
This is a consequence of the property
\be
\dt \cF(e^{t X} g e^{-tX}, e^{t X} J e^{-t X})=0, \qquad \forall X \in \cG,
\ee
taking into account the equality $\nabla' \cF(g, J) = g^{-1} \nabla \cF(g,J) g$.

\subsection{Spin Sutherland models from reduction}
\label{S:Sec2.1}

For any $F\in C^\infty(\cM_0^\reg)$ define $\nabla_1 F(Q,J) \in \cG_0$ by
\be
\langle X_0, \nabla_1 F(Q,J) \rangle_\cG := \dt F(e^{t X_0} Q,J), \qquad \forall X_0\in \cG_0,\, (Q,J)\in \cM_0^\reg,
\ee
and define $d_2 F(Q,J)\in \cG$ similarly to \eqref{d2}.

\begin{thm}\label{thm:spinSuth}
Let $F,H \in C^\infty(\cM_0^\reg)^\fN$  be  the restrictions of invariant functions $\cF, \cH \in C^\infty(\cM^\reg)^G$.
Defining the reduced Poisson bracket of $F$ and $H$ by
\be
\{ F,H\}_\red(Q,J) := \{ \cF, \cH\}(Q,J), \qquad \forall (Q,J) \in \cM_0^\reg,
\label{redcot1def}\ee
the following formula holds:
\be
\{F,H\}_\red(Q,J)= \langle \nabla_1 F, d_2 H\rangle_\cG - \langle \nabla_1 H, d_2 F \rangle_\cG
+\langle J, [ \cR(Q) d_2 F, d_2 H] + [d_2 F, \cR(Q) d_2 H] \rangle_\cG,
\label{redcot1T}\ee
where $\cR(Q)$ is given by \eqref{RQ1} and the derivatives are taken at $(Q,J)$.
\end{thm}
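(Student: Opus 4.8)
The plan is to evaluate the canonical bracket \eqref{PBcot} at a point $(Q,J)\in\cM_0^\reg$ and to re-express the $\cG$-valued derivatives of the ambient invariants $\cF,\cH$ in terms of the derivatives of their restrictions $F,H$. First I would observe that the two families of derivatives nearly coincide. Since varying $J$ keeps the point inside $\cM_0^\reg$, one has $d_2\cF(Q,J)=d_2 F(Q,J)$ and $d_2\cH(Q,J)=d_2 H(Q,J)$. For the group derivative the situation is more delicate: only variations of $Q$ along $\cG_0$ preserve the gauge slice, so because $G_0$ is abelian and $e^{tX_0}Q\in G_0^\reg$ for $X_0\in\cG_0$ and small $t$, the restricted gradient recovers exactly the $\cG_0$-component, $\nabla_1 F(Q,J)=\big(\nabla_1\cF(Q,J)\big)_0$, and likewise for $H$. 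The only missing datum is therefore the perpendicular component $\big(\nabla_1\cF\big)_\perp$.

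Next I would recover this perpendicular component from invariance. Evaluating the basic identity \eqref{crucid1} at $g=Q\in G_0^\reg$ and rewriting it with $\Ad$, it becomes $(\Ad_Q^{-1}-\id)\nabla_1\cF(Q,J)=[J,d_2 F(Q,J)]$. Because $Q\in G_0$, the operator $\Ad_Q$ fixes $\cG_0$ pointwise and preserves $\cG_\perp$, where $(\Ad_Q-\id)$ is invertible by regularity; hence the left-hand side has no $\cG_0$-part, which both shows that $[J,d_2 F]\in\cG_\perp$ and allows one to solve
\be
\big(\nabla_1\cF(Q,J)\big)_\perp=(\Ad_Q^{-1}-\id)_{\vert\cG_\perp}^{-1}[J,d_2 F].
\ee
The key algebraic step is the operator identity on $\cG_\perp$
\be
(\Ad_Q^{-1}-\id)_{\vert\cG_\perp}^{-1}=-\tfrac12\,\id-\cR(Q),
\ee
which follows from the definition \eqref{RQ1} after rewriting $\cR(Q)=\tfrac12\,\id+(\Ad_Q-\id)_{\vert\cG_\perp}^{-1}$ on $\cG_\perp$ and using $\Ad_Q^{-1}-\id=-\Ad_Q^{-1}(\Ad_Q-\id)$.

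Finally I would substitute these relations into \eqref{PBcot}. The $\cG_0$-parts of $\nabla_1\cF,\nabla_1\cH$ reproduce the first two terms $\langle\nabla_1 F,d_2 H\rangle_\cG-\langle\nabla_1 H,d_2 F\rangle_\cG$ verbatim. The perpendicular parts contribute $\langle(\nabla_1\cF)_\perp,d_2 H\rangle_\cG-\langle(\nabla_1\cH)_\perp,d_2 F\rangle_\cG$; inserting the displayed formula, the $-\tfrac12\,\id$ pieces combine, via $\ad$-invariance of $\langle-,-\rangle_\cG$, into $-\langle J,[d_2 F,d_2 H]\rangle_\cG$, which exactly cancels the third term of \eqref{PBcot}. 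The surviving $\cR(Q)$-pieces are reshuffled using the antisymmetry \eqref{Rrasym} together with $\ad$-invariance into $\langle J,[\cR(Q)d_2 F,d_2 H]+[d_2 F,\cR(Q)d_2 H]\rangle_\cG$, which is precisely \eqref{redcot1T}.

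I expect the bookkeeping around the perpendicular component to be the main obstacle: one must be careful that $[J,d_2 F]$ genuinely lies in $\cG_\perp$, so that the restricted inverse applies, and that the operator identity relating $(\Ad_Q^{-1}-\id)^{-1}$ to $\cR(Q)$ is handled correctly, including the sign and the factor $\tfrac12$ whose cancellation against the cubic term in $J$ is what makes the $\cR(Q)$-structure appear. As a byproduct, the computation shows that the right-hand side depends on $\cF,\cH$ only through the restricted data $F,H$ and their derivatives, so the reduced bracket \eqref{redcot1def} is well defined independently of the chosen invariant extensions.
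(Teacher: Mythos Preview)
Your proposal is correct and follows essentially the same route as the paper: you identify $d_2\cF=d_2F$ and $(\nabla_1\cF)_0=\nabla_1F$, solve \eqref{crucid1} at $(Q,J)$ to obtain $(\nabla_1\cF)_\perp=(\tfrac12\id+\cR(Q))[d_2F,J]$, and substitute into \eqref{PBcot}; the only cosmetic difference is that the paper first isolates $\langle\nabla_1\cF,d_2\cH\rangle_\cG$ and deduces $[J,d_2F]_0=0$ directly from $G_0$-invariance of $F$, whereas you read it off from the $\cG_0$-component of \eqref{crucid1}.
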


\begin{proof}
In order to evaluate the right-hand side of \eqref{redcot1def}, we have to express the derivatives of $\cF$ and $\cH$ in terms
of the derivatives of the corresponding restricted functions. Plainly, we have
 \be
d_2 \cF(Q,J) = d_2 F(Q,J) \quad \hbox{and}\quad (\nabla_1 \cF(Q,J))_0 = \nabla_1 F(Q,J),
\ee
 where we use the decomposition \eqref{cGdecperp}. The invariance with respect to $G_0 < G$ implies $[J, d_2 F(Q,J)]_0 =0$.
For $X_\perp:= (\nabla_1 \cF(Q,J))_\perp$ and $Y_\perp:= [d_2 F(Q,J), J]$,
the identity \eqref{crucid1} gives
\be
(\Ad_{Q^{-1}} - \id) X_\perp = Y_\perp.
\ee
This can be solved:
\be
X_\perp = \Ad_{Q} \circ (\id - \Ad_Q)^{-1} Y_\perp
= (\frac{1}{2}\id + \cR(Q)) Y_\perp
\ee
with $\cR(Q)$ \eqref{RQ1}, where the inverse is understood to be taken on $\cG_\perp$.
By using these equalities as well as the antisymmetry \eqref{Rrasym} and the invariance property of the Killing form,
 at $(Q,J)$ we obtain
\be
\langle \nabla_1 \cF, d_2 \cH\rangle_\cG = \langle \nabla_1 F, d_2 H \rangle_\cG +
\frac{1}{2} \langle J, [d_2 H, d_2 F]\rangle_\cG + \langle J, [d_2 F, \cR(Q) d_2 H]\rangle_\cG.
\ee
Inserting this and  $\langle \nabla_1 \cH, d_2 \cF\rangle_\cG$ into \eqref{PBcot}, together with
 $\langle J,  [d_2 \cF, d_2 \cH]\rangle_\cG = \langle J,  [d_2 F, d_2 H]\rangle_\cG$,
 leads  to  the claimed formula \eqref{redcot1T}.
\end{proof}

\begin{rem}
On account of the isomorphism \eqref{isom1},
the formula \eqref{redcot1T} defines a Poisson bracket on  $C^\infty(\cM_0^\reg)^\fN$, which
can be identified with
the ring of smooth functions on $\cM^\reg/G\simeq \cM^\reg_0/\fN$.
This space of functions is larger than $C^\infty(\cM/G)$, since there exist $G$-invariant smooth functions on $\cM^\reg$ that do not extend
smoothly to the full manifold $\cM$.
(For example \cite{FeKlu},  in the $G=\mathrm{SU(n)}$  case the ordered eigenvalues of $g\in G$ give such functions.)
On the other hand, the same formula \eqref{redcot1T} yields a Poisson bracket also on $C^\infty(\cM_0^\reg)^{G_0}$, since $\cM_0^\reg/G_0$ is
a covering space of $\cM_0^\reg/\fN$, with the fibers labeled by the elements of the Weyl group $\fN/G_0$.
To avoid any possible confusion, we note that in \eqref{redcot1T} $\langle \nabla_1 F , d_2 H\rangle_\cG = \langle \nabla_1 F , (d_2 H)_0\rangle_\cG$
since $\nabla_1 F$ is $\cG_0$-valued, and similarly for the second term.
\end{rem}

The following statement is an immediate consequence of Theorem \ref{thm:spinSuth} and the
identity
\be
[J, d\varphi(J)] =0, \quad \forall J\in \cG,
\ee
which is verified by every $\varphi \in C^\infty(\cG)^G$.

\begin{prop}\label{prop:redeq1}
If $\cH(g,J) = \varphi(J)$ with $\varphi\in C^\infty(\cG)^G$, then
for its restriction $H\in C^\infty(\cM_0^\reg)^\fN$ and any $F \in C^\infty(\cM_0^\reg)^\fN$
the Poisson bracket \eqref{redcot1T} reads
\be
\{ F,H\}_\red(Q,J) = \langle \nabla_1 F(Q,J), d \varphi(J) \rangle_\cG + \langle d_2F(Q,J), [\cR(Q) d\varphi(J), J]\rangle_\cG.
\label{redevol1}\ee
This gives the derivative of $F$ with respect to an evolution vector field on $\cM_0^\reg$,  and
the corresponding  `reduced evolution equation' on $\cM_0^\reg$ can be taken to be
\be
\dot{Q} = (d\varphi(J))_0 Q, \qquad \dot{J} = [\cR(Q) d\varphi(J), J].
\label{redeq1}\ee
\end{prop}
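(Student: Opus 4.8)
The plan is to specialize the general reduced bracket \eqref{redcot1T} to the Hamiltonian $H = \pi_2^*(\varphi)\vert_{\cM_0^\reg}$ and then to exhibit a representing vector field. First I would record the relevant derivatives of $H$. Since $\cH(g,J)=\varphi(J)$ does not depend on the group variable, its restriction satisfies $\nabla_1 H(Q,J)=0$, while differentiation in the second slot gives $d_2 H(Q,J)=d\varphi(J)$. Substituting these into \eqref{redcot1T} annihilates the term $\langle \nabla_1 H, d_2 F\rangle_\cG$ and leaves
\[
\{F,H\}_\red(Q,J)=\langle \nabla_1 F, d\varphi(J)\rangle_\cG + \langle J, [\cR(Q)d_2 F, d\varphi(J)] + [d_2 F, \cR(Q)d\varphi(J)]\rangle_\cG .
\]

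The key simplification exploits the identity $[J, d\varphi(J)]=0$ stated above, together with the ad-invariance (cyclicity) of the Killing form, $\langle X, [Y,Z]\rangle_\cG = \langle Y, [Z,X]\rangle_\cG$. Applying cyclicity to the first spin term moves $J$ into the bracket and produces $\langle \cR(Q)d_2 F, [d\varphi(J), J]\rangle_\cG$, which vanishes because $d\varphi(J)$ commutes with $J$. The second spin term rearranges, again by cyclicity, into $\langle d_2 F, [\cR(Q)d\varphi(J), J]\rangle_\cG$. This yields exactly the asserted formula \eqref{redevol1}. (As in the preceding remark, one also uses that $\langle \nabla_1 F, d\varphi(J)\rangle_\cG = \langle \nabla_1 F, (d\varphi(J))_0\rangle_\cG$, since $\nabla_1 F$ is $\cG_0$-valued, so only the $\cG_0$-part of $d\varphi(J)$ contributes.)

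It then remains to present this derivation as an evolution vector field on $\cM_0^\reg$. For a curve $(Q(t),J(t))$ that stays in the slice, so that $\dot Q Q^{-1}\in\cG_0$, the chain rule together with the definition of $\nabla_1 F$ and $d_2 F$ gives $\dot F = \langle \dot Q Q^{-1}, \nabla_1 F\rangle_\cG + \langle d_2 F, \dot J\rangle_\cG$. Matching this with \eqref{redevol1} and using $\nabla_1 F\in\cG_0$ forces $\dot J = [\cR(Q)d\varphi(J), J]$ and fixes the $\cG_0$-component of $\dot Q Q^{-1}$ to equal $(d\varphi(J))_0$, which is precisely \eqref{redeq1}.

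I do not expect the computation itself to pose any serious difficulty: it is a one-line application of cyclicity combined with the commutation identity. The single delicate point is the last step. Because $F$ is only $G_0$-invariant, $\nabla_1 F$ detects nothing beyond the $\cG_0$-component of $\dot Q Q^{-1}$, so the representing vector field is determined only up to the addition of vector fields tangent to the $G_0$-orbits. I would therefore emphasize that \eqref{redeq1} is merely \emph{one} admissible representative, namely the natural one keeping $Q$ inside $G_0^\reg$, and that this residual gauge ambiguity is exactly the harmless one already flagged in the discussion following \eqref{redeq1I}.
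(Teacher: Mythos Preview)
Your argument is correct and follows the same approach as the paper, which simply states that the proposition is an immediate consequence of Theorem~\ref{thm:spinSuth} together with the identity $[J,d\varphi(J)]=0$; you have merely spelled out the cyclicity step that the paper leaves implicit. One small refinement: the paper fixes the residual gauge ambiguity not by appealing to $\fN$- or $G_0$-invariance of $F$, but by requiring that the right-hand side of \eqref{redevol1} reproduce the derivative of \emph{every} $F\in C^\infty(\cM_0^\reg)$, which then pins down the vector field \eqref{redeq1} uniquely.
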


The solutions of the evolution equation \eqref{redeq1} result by applying suitable (point dependent) $G$-transformations
to the unreduced integral curves \eqref{unredsol1}, and they project onto the reduced dynamics on $\cM^\reg/G \simeq \cM_0^\reg/\fN$.
This follows from the general theory of Hamiltonian reduction \cite{OR}.
Of course, the evolution vector field on $\cM_0^\reg$ is not unique, because the derivative of
$F\in C^\infty(\cM_0^\reg)^\fN$ is
zero along any vector field that is tangent to the orbits of $G_0$ in $\cM_0^\reg$.
We fixed this ambiguity by requiring that the derivative of \emph{any} $F\in C^\infty(\cM_0^\reg)$ should be given
by the right-hand side of  \eqref{redevol1}.

The reduced system governed by the Poisson bracket \eqref{redcot1T} and equations of motion \eqref{redeq1} can
be interpreted as a spin Sutherland model. Since this is well known \cite{FP,LX}, we only note that for
$\varphi(J):= -\frac{1}{2} \langle J, J\rangle_\cG$
the parametrization \eqref{Jpar}  of $J$ by the new variables $ q$ (with $Q=e^{\ri q}$), $p$ and $\xi$
leads to
\be
-\frac{1}{2} \langle J, J\rangle_\cG = -\frac{1}{2} \langle \ri p, \ri p\rangle_\cG
+  \frac{1}{2} \sum_{\alpha>0} \frac{ \vert \xi_\alpha \vert^2}{\vert \alpha \vert^2 \sin^2(\alpha(q)/2)},
\label{spinSuthHam}\ee
which is a standard spin Sutherland Hamiltonian.
Here, the sum is over the positive roots of the complexification $\cG^\bC$ of $\cG$ with respect
to the Cartan subalgebra $\cG_0^\bC<\cG^\bC$, and the spin variable $\xi\in \cG_\perp$ is expanded as
$\xi = \sum_{\alpha>0} ( \xi_\alpha E_\alpha - \xi_\alpha^* E_{-\alpha})$ using  root vectors
$E_{\pm \alpha}$ (normalized according to Appendix \ref{sec:A}).

\subsubsection{Degenerate integrability after reduction.}
We now discuss how the mechanism outlined around equation \eqref{Psired} is applicable to the present case.
By inspecting the restriction on $\cM_0^\reg$,
it is easily seen that  $\pi_2^*(C^\infty(\cG)^G)$ gives rise to $\ell=\rank(\cG)$ generically independent
Hamiltonians on $\cM/G$.
Let us now define the smooth,  $G$-equivariant map $\Psi_1: \cM \to \cG \times \cG$ by
\be
\Psi_1(g,J) := (\tilde J, J) \quad\hbox{with}\quad \tilde J = g^{-1} J g,
\label{Psi1}\ee
where $\eta \in G$ acts on $\cG \times \cG$ by applying $\Ad_\eta$ to  both components of $(a,b)\in \cG\times \cG$.
Then, taking any function $\chi\in C^\infty(\cG \times \cG)^G$, the function $\Psi_1^* (\chi)\in C^\infty(\cM)^G$
is a smooth, $G$-invariant constant of motion.
We next outline a train of thought indicating that
these constants of motion guarantee degenerate integrability after reduction.

The isotropy subgroup of generic elements from the image of $\Psi_1$
is clearly just the center $Z_G$ of $G$.  These generic elements form a manifold $Y$ of dimension
$2 \dim(G) - \ell$, and its pre-image $X \subset \cM$ is a dense, open, $G$-invariant  subset.
Thus, taking $\Psi:=\Psi_1$  in \eqref{Psired},
we obtain $\dim(G) - \ell$ functionally independent constants of motion for
the restriction of the reduced system to $X/G \subset \cM/G$.
By using the moment map $\Phi$ \eqref{PhimomT},
the $G$-invariant functions of the form $\phi \circ \Phi$, with any $\phi \in C^\infty(\cG)^G$,
descend to $\ell$ independent \emph{Casimir functions} on $\cM/G$.
Fixing the values of these Casimir functions, generically one obtains a symplectic leaf of dimension $\dim(G) - \ell$ in $\cM/G$.
Thus, on the intersection of such a generic symplectic leaf with $X/G$, there
 remain $\dim(G) - 2 \ell$ independent constants of motion.
This is sufficient
for degenerate integrability since the commuting reduced Hamiltonians remain independent
on the generic symplectic leaves.

The above arguments
make us confident to expect degenerate integrability on generic
symplectic leaves of $\cM/G$.
These arguments essentially coincide with those presented by Reshetikin \cite{Res1} for the
corresponding complex holomorphic systems.
A more complete, rigorous analysis of reduced integrability is beyond the scope of the present paper.

\subsection{The duals of the spin Sutherland models}

Now we turn to the characterization of the Poisson algebra of the invariant
functions in terms of their restriction to ${\cM'}_0^\reg$
\eqref{M02}.
For any $F\in C^\infty({\cMpreg0})$,  $d_2 F(g,\lambda) \in \cG_0$ is defined by
\be
\langle X_0, d_2 F(g,\lambda) \rangle_\cG := \dt F(g ,\lambda + t X_0), \qquad
\forall X_0\in \cG_0,\, (g,\lambda)\in {\cM'}_0^\reg,
\ee
and the derivatives with respect to the first variable are given by \eqref{nabG1}.

\begin{thm}\label{thm:spinRS}
Let $F,H \in C^\infty({\cMpreg0})^\fN$  be  restrictions of invariant functions $\cF, \cH \in C^\infty(\CMpreg)^G$.
Defining the reduced Poisson brackets of $F$ and $H$ by
\be
\{ F,H\}'_\red(g,\lambda) := \{ \cF, \cH\}(g,\lambda), \qquad \forall (g,\lambda) \in \cMpreg0,
\label{redprime1}\ee
the following formula holds:
\be
 \{F,H\}'_\red(g,\lambda) =
 \langle \nabla_1 F, d_2 H \rangle_\cG - \langle \nabla_1 H, d_2 F \rangle_\cG
+\langle \nabla_1' F, r(\lambda) \nabla_1' H \rangle_\cG   - \langle \nabla_1 F, r(\lambda) \nabla_1 H \rangle_\cG,
\label{redcot2}\ee
where $r(\lambda)$ is given by \eqref{rLa} and the derivatives are taken at $(g,\lambda)$.
\end{thm}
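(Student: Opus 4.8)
The plan is to mirror the proof of Theorem \ref{thm:spinSuth}, but this time using the key identity \eqref{crucid1} to eliminate the \emph{second}-slot perpendicular derivative rather than the first-slot one. On the gauge slice $\cMpreg0$ the first variable $g$ is unconstrained while $J=\lambda\in\cG_0^\reg$, so the first-slot derivatives of an invariant $\cF\in C^\infty(\CMpreg)^G$ carry over to its restriction $F$ unchanged,
\be
\nabla_1\cF(g,\lambda)=\nabla_1 F(g,\lambda), \qquad \nabla_1'\cF(g,\lambda)=\nabla_1' F(g,\lambda),
\ee
whereas $\cF$ can only be differentiated along $\cG_0$ in the second slot, giving $(d_2\cF(g,\lambda))_0=d_2 F(g,\lambda)$. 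The first task is therefore to recover the inaccessible component $(d_2\cF)_\perp$ from the $G$-invariance of $\cF$.

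For this I would substitute $J=\lambda$ into \eqref{crucid1} and use $\nabla_1'\cF=g^{-1}\nabla_1\cF\,g$ to rewrite its left-hand side as $\nabla_1'\cF-\nabla_1\cF$. Since $\lambda\in\cG_0$ and $\cG_0$ is abelian, the right-hand side reduces to $[\lambda,(d_2\cF)_\perp]=\ad_\lambda (d_2\cF)_\perp\in\cG_\perp$; in particular the $\cG_0$-part of $\nabla_1'\cF-\nabla_1\cF$ vanishes, which is exactly the residual constraint enforced by $G_0$-invariance. As $\ad_\lambda$ is invertible on $\cG_\perp$, inverting it and comparing with the definition \eqref{rLa} of $r(\lambda)$ yields
\be
(d_2\cF(g,\lambda))_\perp = r(\lambda)\bigl(\nabla_1' F-\nabla_1 F\bigr),
\ee
and likewise $(d_2\cH)_\perp=r(\lambda)(\nabla_1'H-\nabla_1 H)$. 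This single inversion is the substantive step; it is the precise analogue of solving for $X_\perp$ in the proof of Theorem \ref{thm:spinSuth}.

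With $d_2\cF$ and $d_2\cH$ now reconstructed, the rest is substitution into \eqref{PBcot} followed by careful bookkeeping. Writing $a=\nabla_1 F,\ a'=\nabla_1'F,\ b=\nabla_1 H,\ b'=\nabla_1'H$, the $\cG_0$-parts reproduce $\langle\nabla_1 F,d_2 H\rangle_\cG-\langle\nabla_1 H,d_2 F\rangle_\cG$ verbatim (recalling $d_2F,d_2H\in\cG_0$), while the pairings of $a,b$ against the perpendicular parts $(d_2\cH)_\perp,(d_2\cF)_\perp$ must be combined with the commutator term $\langle\lambda,[d_2\cF,d_2\cH]\rangle_\cG$. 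In that commutator term only the $\perp$--$\perp$ contribution survives, since $[\lambda,\,\cdot\,]$ annihilates the $\cG_0$-parts; applying the invariance of $\langle-,-\rangle_\cG$ together with $\ad_\lambda\circ r(\lambda)=\id$ on $\cG_\perp$ converts it into $\langle a'-a,\,r(\lambda)(b'-b)\rangle_\cG$.

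The main obstacle — really the only delicate point — is checking that the leftover cross-terms cancel. Using the antisymmetry \eqref{Rrasym} of $r(\lambda)$, equivalently that $(X,Y)\mapsto\langle X,r(\lambda)Y\rangle_\cG$ is skew, I expect the mixed pairings $\langle a,r(\lambda)b'\rangle_\cG$ and $\langle a',r(\lambda)b\rangle_\cG$ to drop out entirely, with the surviving contributions collapsing to $\langle a',r(\lambda)b'\rangle_\cG-\langle a,r(\lambda)b\rangle_\cG$. Reinstating $a'=\nabla_1'F$ and $a=\nabla_1 F$, etc., this is exactly the pair of terms $\langle\nabla_1'F,r(\lambda)\nabla_1'H\rangle_\cG-\langle\nabla_1 F,r(\lambda)\nabla_1 H\rangle_\cG$ in \eqref{redcot2}, which completes the proof.
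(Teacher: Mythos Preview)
Your argument is correct and in fact cleaner than the paper's. The setup---identifying $\nabla_1\cF=\nabla_1 F$, $(d_2\cF)_0=d_2F$, and solving \eqref{crucid1} for $(d_2\cF)_\perp=r(\lambda)(\nabla_1'F-\nabla_1 F)$---is identical to the paper's. The difference lies in how the commutator term $\langle\lambda,[r(\lambda)X,r(\lambda)Y]\rangle_\cG$ is handled: the paper invokes the full classical dynamical Yang--Baxter equation \eqref{CDYBE} and then uses $d_\lambda r(\lambda)=-r(\lambda)$ together with $X_0=Y_0=0$ to collapse it, whereas you simply move $\ad_\lambda$ across the Killing form and use $\ad_\lambda\circ r(\lambda)=\id$ on $\cG_\perp$ directly. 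Your route is more elementary and avoids an appeal to the CDYBE that is, for the rational $r$-matrix, unnecessary overhead; the paper's route has the virtue of exhibiting the general mechanism (the CDYBE) that would be needed for more complicated $r$-matrices, but for this specific computation your shortcut is preferable.
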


\begin{proof}
First of all, we remark that
\be
\nabla_1 \cF(g,\lambda) = \nabla_1 F(g, \lambda),\quad (d_2 \cF(g,\lambda))_0 = d_2 F(g,\lambda),
\ee
and, as a consequence of the invariance under $G_0< G$,
\be
\left( \nabla_1' \cF(g,\lambda) - \nabla_1 \cF(g,\lambda)\right)_0 = 0.
\ee
The subscript $0$ refers to  the decomposition \eqref{cGdecX}.
In view of the formula \eqref{PBcot}, we have to express $(d_2 \cF(g,\lambda))_\perp$  in terms of the
derivatives of $F$ with respect to the variable $g$.
By applying \eqref{crucid1} at $(g,\lambda)$ and using the above relations,
we find
\be
(d_2 \cF (g,\lambda))_\perp = r(\lambda) \left( \nabla_1' F(g,\lambda) - \nabla_1 F(g,\lambda)\right)
\ee
with $r(\lambda)$  \eqref{rLa}.
Then, substitution in the right-hand side of \eqref{redprime1} leads to
\be
\begin{aligned}
\{ F,H\}_\red' &= \langle \nabla_1 F , d_2 H\rangle_\cG - \langle \nabla_1 H , d_2 F  \rangle_\cG\\
 +& \langle \nabla_1 F, r(\lambda) (\nabla_1' H - \nabla_1 H)\rangle_\cG - \langle \nabla_1 H, r(\lambda) (\nabla_1' F - \nabla_1 F)\rangle_\cG\\
 + &  \langle \lambda, [r(\lambda) (\nabla_1' H - \nabla_1 H),  r(\lambda) (\nabla_1' F - \nabla_1 F)]\rangle_\cG .
\end{aligned}
\label{intermed}\ee
This can be simplified by virtue of the classical dynamical Yang--Baxter equation \cite{EV}, which can be written as \cite{FP}
\be
[r(\lambda) X, r(\lambda) Y]  = r(\lambda)\left( [ X, r(\lambda) Y] + [ r(\lambda) X, Y]\right)
  +  d_{Y_0} r(\lambda)X - d_{X_0} r(\lambda) Y + \sum_{i}K^i \langle X, d_{K_i} r(\lambda) Y\rangle_\cG,
\label{CDYBE}\ee
$\forall X,Y\in \cG$, where $d_{X_0}, d_{Y_0}$ and $d_{K_i}$ are directional derivatives, and $\langle K_i, K^j\rangle = \delta_i^j $
with a  pair of dual bases of $\cG_0$.
We observe that
\be
\langle \lambda, \sum_{i} K^i \langle X, d_{K_i} r(\lambda) Y\rangle_\cG =
 \langle X, d_{\lambda} r(\lambda) Y\rangle
 \quad\hbox{and}\quad  d_{\lambda} r(\lambda) = - r(\lambda).
 \ee
We now take $X:=(\nabla_1' H - \nabla_1 H)$, and  $Y:= (\nabla_1' F - \nabla_1 F)$,
for which $X_0 = Y_0 =0$.   Noticing  that
$\langle \lambda, r(\lambda) Z \rangle_\cG = 0$ for all $Z\in \cG$, because $r(\lambda)$ is antisymmetric \eqref{Rrasym} and vanishes on $\cG_0$,
we obtain
\be
 \langle \lambda, [r(\lambda) (\nabla_1' H - \nabla_1 H),  r(\lambda) (\nabla_1' F - \nabla_1 F)]\rangle_\cG =
 - \langle \nabla_1' H - \nabla_1 H, r(\lambda) (\nabla_1' F - \nabla_1 F) \rangle_\cG.
 \ee
 By inserting this into \eqref{intermed}  and collecting terms, we arrive at the claimed formula \eqref{redcot2}.
\end{proof}

\begin{rem}
The formula \eqref{redcot2} defines  a Poisson bracket not only on $C^\infty(\cMpreg0)^\fN$, but also on $C^\infty(\cMpreg0)^{G_0}$.
It should be noted that  $\langle \nabla_1 F , d_2 H\rangle_\cG = \langle (\nabla_1 F)_0 , d_2 H\rangle_\cG$
since $d_2H$ is $\cG_0$-valued.
\end{rem}

The next result follows from Theorem \ref{thm:spinRS} by using that
\be
\nabla h = \nabla' h, \quad \forall h\in C^\infty(G)^G.
\ee

\begin{prop}\label{prop:redeq2}
If $H$ is the restriction of an invariant Hamiltonian $\cH= \pi_2^*h$ displayed in \eqref{Hams2},
then \eqref{redcot2} simplifies to
\be
 \{F,H\}'_\red(g,\lambda) =
  - \langle  d_2 F(g, \lambda), \nabla h(g) \rangle_\cG
+\langle \nabla_1' F(g,\lambda) - \nabla_1 F(g,\lambda), r(\lambda) \nabla h(g) \rangle_\cG.
\label{redcot2+}\ee
The corresponding
reduced evolution equation on $\cMpreg0$  can be taken to be
\be
\dot{g} = [g, r(\lambda) \nabla h(g)], \qquad \dot{\lambda} = -(\nabla h(g))_0.
\label{redeq2}\ee
\end{prop}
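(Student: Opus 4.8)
The plan is to specialize the general reduced Poisson bracket \eqref{redcot2} to the case where one of the two functions is the pullback invariant $\cH = \pi_1^*(h)$ with $h \in C^\infty(G)^G$, and then read off the evolution equation from the resulting derivation. First I would observe that $\cH(g,\lambda) = h(g)$ depends only on the group variable, so its $\lambda$-derivative vanishes, $d_2 H(g,\lambda) = 0$. This immediately kills two of the four terms in \eqref{redcot2}: the term $\langle \nabla_1 F, d_2 H\rangle_\cG$ drops out, and the term $-\langle \nabla_1 H, r(\lambda)\nabla_1 H\rangle_\cG$ is absent because it is quadratic in a single function; more precisely, when computing $\{F,H\}'_\red$ the surviving $H$-dependent pieces are those linear in each of the derivatives of $H$.

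Next I would record the key simplification for the group derivatives of $H$. Since $h$ is a class function, it is invariant under conjugation, which forces $\nabla h(g) = \nabla' h(g)$; this is precisely the identity flagged just before the proposition. Unwinding the definitions \eqref{nabG1}, for the restricted function $H$ this means $\nabla_1 H(g,\lambda) = \nabla_1' H(g,\lambda) = \nabla h(g)$, both equal to the single gradient $\nabla h(g) \in \cG$. Substituting $d_2 H = 0$ and $\nabla_1 H = \nabla_1' H = \nabla h(g)$ into \eqref{redcot2}, the term $-\langle \nabla_1 H, d_2 F\rangle_\cG$ becomes $-\langle \nabla h(g), d_2 F\rangle_\cG$, and the remaining $r$-matrix terms combine: $\langle \nabla_1' F, r(\lambda)\nabla_1' H\rangle_\cG - \langle \nabla_1 F, r(\lambda)\nabla_1 H\rangle_\cG$ collapses to $\langle \nabla_1' F - \nabla_1 F,\, r(\lambda)\nabla h(g)\rangle_\cG$. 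This yields the claimed formula \eqref{redcot2+} after rewriting $-\langle \nabla h(g), d_2 F\rangle_\cG = -\langle d_2 F, \nabla h(g)\rangle_\cG$ by symmetry of the Killing form.

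To extract the evolution equation \eqref{redeq2}, I would interpret \eqref{redcot2+} as $\{F,H\}'_\red = \dt F$ along a vector field $(\dot g, \dot\lambda)$ and match coefficients against the definitions of $d_2 F$ and $\nabla_1, \nabla_1' F$. The $d_2 F$ coefficient reads off $\dot\lambda = -(\nabla h(g))_0$, the $\cG_0$-projection appearing because $d_2 F$ pairs only against $\cG_0$-valued quantities on $\cMpreg0$. For the group variable, I would recognize $\langle \nabla_1' F - \nabla_1 F, r(\lambda)\nabla h(g)\rangle_\cG$ as the derivative of $F$ along the infinitesimal conjugation generated by $r(\lambda)\nabla h(g)$; unwinding \eqref{nabG1} with $X = -X'$ identifies this with $\dot g = [g,\, r(\lambda)\nabla h(g)]$, using the matrix-notation convention $Xg - gX$ for the commutator action on the group.

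The main obstacle is bookkeeping rather than conceptual: one must correctly handle the $\cG_0$-versus-$\cG_\perp$ projections in the definitions of the restricted derivatives on $\cMpreg0$, and verify that the vector field $(\dot g,\dot\lambda)$ so obtained is tangent to $\cMpreg0$ (i.e. preserves $\lambda \in \cG_0^\reg$), so that it genuinely defines a reduced evolution equation. I would check the latter by confirming $\dot\lambda \in \cG_0$, which is immediate since $\dot\lambda = -(\nabla h(g))_0$ is a projection onto $\cG_0$; the ambiguity by vectors tangent to the $G_0$-orbits, already noted in the $T^*G$ discussion, accounts for why \eqref{redeq2} is asserted only up to residual gauge transformations (hence the phrasing ``can be taken to be'').
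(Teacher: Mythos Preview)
Your approach is correct and essentially identical to the paper's: the paper simply says the result follows from Theorem~\ref{thm:spinRS} by using $\nabla h = \nabla' h$ for $h \in C^\infty(G)^G$, which is exactly the substitution you carry out in detail. One minor expositional slip: the sentence about a term ``$-\langle \nabla_1 H, r(\lambda)\nabla_1 H\rangle_\cG$'' being ``absent because it is quadratic'' is garbled, since no such term appears in \eqref{redcot2} to begin with; but your actual substitution in the next paragraph is correct and this glitch does not affect the argument.
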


The counterpart of the  discussion presented after Proposition \ref{prop:redeq1} is applicable
in this case as well. We merely note that the solutions of the evolution equations \eqref{redeq2} can be obtained
 by applying
suitable $G$-transformations to those unreduced integral curves \eqref{unredsol2},
whose initial values belong to $\cMpreg0$.

It is known \cite{FA} that in the $G=\mathrm{SU}(n)$ case the above reduced system contains
a real form of the rational Ruijsenaars--Schneider model
on a special symplectic leaf.
The leaf in question arises by fixing the Casimir functions $\phi \circ \Phi$ ($\phi \in C^\infty(\cG)^G$)
in such a way that the corresponding joint level surface in $\cG\simeq \cG^*$ is a minimal (co)adjoint orbit of dimension $2(n-1)$.
 The main Hamiltonian of this model
is associated with the function $h(g) = \Re \tr(g)$ on $G$.
This lends justification to the terminology `spin Ruijsenaars--Schneider type models' \cite{Res1}
as a name for the  models that stem from the integrable Hamiltonians \eqref{Hams2} in general.
However, in contrast to the spin Sutherland models
described in the preceding subsection, it is still an open problem to separate the variables of these models into canonically
conjugate
pairs complementing the components of $\lambda$  and additional `spin' degrees of freedom.

\subsubsection{Degenerate integrability and duality.}
The degenerate integrability of the reduced systems built on the pullback invariants
$\pi_1^*(C^\infty(G)^G)$ can be analyzed quite similarly to the previous case of $\pi_2^*(C^\infty(\cG)^G)$.
Now one may use the map
\be
\Psi_2: \cM \to G\times \cG
\quad\hbox{defined by}\quad \Psi_2(g,J):= (g, \Phi(g,J)),
\label{Psi2}\ee
 which is constant along the flows of any $\cH\in \pi_1^*(C^\infty(G)^G)$, and is $G$-equivariant
 with respect to the same action that operates on $\cM$.
 The arguments presented at the end of Subsection 2.1 go through with little modification, as is
discussed in \cite{Res1} in the holomorphic case.
In particular, employing any  $\chi\in C^\infty(G \times \cG)^G$,
the function $\Psi_2^* (\chi)\in C^\infty(\cM)^G$
is a smooth, $G$-invariant constant of motion.

Incidentally, the maps $\Psi_1$ \eqref{Psi1} and $\Psi_2$ \eqref{Psi2} are Poisson maps with respect to suitable
Poisson structures on the target spaces $\cG \times \cG$ and $G\times \cG$, which can
be easily found by requiring this property to hold.
Therefore the just mentioned $G$-invariant constants of motion $\Psi_2^* (\chi)$  form a closed Poisson subalgebra of $C^\infty(\cM)^G$
(and similarly for $\Psi_1$).

Finally, let us comment on the duality between the spin Sutherland and the spin Ruijsenaars--Schneider systems.
To this end, we regard the functions of $Q$ in \eqref{M01} and $\lambda$ in \eqref{M02}
as `position variables' for the respective models. Those functions of $Q$ that descend to well-defined
functions on $\cM/G$ arise from $\pi_1^*(C^\infty(G)^G)$ and the functions of $\lambda$ having the same property
arise from $\pi_2^*(C^\infty(\cG)^G)$.
In this way, one of the two sets of pullback invariants plays the role of `global position variables'
in every reduced system, while the other set engenders the commuting Hamiltonians of interest of the same system.
The role of the two sets of pullback invariant is interchanged in the two systems. That is, since
both systems leave on the same phase space $\cM/G$, the global position variables of one system are the
interesting Hamiltonians of the other one, and vice versa.
This kind of duality was originally discovered by Ruijsenaars for spinless models (see the review \cite{RBanff} and references therein).
We call it \emph{Ruijsenaars duality} or \emph{action-position duality}, taking into
account that in integrable models the commuting Hamiltonians are in bijective correspondence with the action variables.
We prefer this to the term action-angle duality, which is also used in the literature.

\section{Integrable systems from the Heisenberg double}
\label{S:Sec3}

In this section we first describe the Heisenberg double associated with a compact
Lie group $G$,  and specify two degenerate integrable systems on this phase space.
We then study the Poisson reduction of these systems.
For notations, see the remark at the end of Section \ref{S:Sec1}, and also Appendix \ref{sec:A}.
For the underlying theory of Poisson--Lie groups,
one may consult the reviews \cite{KS,STS+}.

\subsection{The basics of the Heisenberg double}

We start with a compact simple Lie algebra, $\cG$, and pick a maximal Abelian subalgebra, $\cG_0$.
These can be regarded as real forms of a complex simple Lie algebra, $\cG^\bC$, and its Cartan
subalgebra, $\cG_0^\bC$. Choosing a system of positive roots, we obtain the triangular
decomposition
\be
\cG^\bC = \cG_<^\bC + \cG_0^\bC + \cG_>^\bC,
\label{triang}\ee
where $\cG^\bC_>$ is spanned by the eigenvectors associated with the positive roots.
Referring to this, we may present any $X\in \cG^\bC$ as
\be
X = X_< + X_0 +  X_>
\ee
with the terms taken from the corresponding subspaces.
The real vector space
\be
\cB := \ri \cG_0 + \cG^\bC_>
\label{cBtriang}\ee
is a Lie subalgebra of the `realification' $\cG^\bC_\bR$ of $\cG^\bC$ (i.e. $\cG^\bC$ viewed as a real Lie algebra),
and it gives rise to the direct sum decomposition
\be
\cG^\bC_\bR = \cG + \cB.
\label{cGdec}\ee
Correspondingly, we may write any $X\in \cG^\bC_\bR$ as
\be
X= X_\cG + X_\cB, \qquad X_\cG\in \cG,\, X_\cB\in \cB.
\label{cGdec+}\ee
We equip $\cG^\bC_\bR$ with the invariant, non-degenerate, symmetric bilinear form $\langle - , - \rangle_\bI$,
 defined as the imaginary part of the
 complex Killing form $\langle - , - \rangle$ of $\cG^\bC$.
The decomposition \eqref{cGdec} represents a so-called Manin triple \cite{KS,STS+}, since $\cG$ and $\cB$ are isotropic subalgebras of $\cG_\bR^\bC$
with respect to $\langle - , - \rangle_\bI$.

Let $G^\bC_\bR$
be a connected and simply connected real Lie group whose Lie algebra is $\cG^\bC_\bR$, and denote $G$ and $B$ its
connected subgroups associated with the Lie subalgebras $\cG$ and $\cB$.
These subgroups are simply connected and $G$ is compact.
Later we shall also need the connected subgroup $G_0^\bC <  G^\bC_\bR$ associated with $\cG_0^\bC$
as well as the subgroups $G_0 < G$ and $B_0 < B$ associated with $\cG_0$ and $\ri\cG_0$.
Occasionally, we view $G^\bC_\bR$ as the realification of the corresponding complex Lie group, $G^\bC$.

Now, we recall \cite{STS,STS+} that the group manifold
\be
M:= G^\bC_\bR
\label{MHeis}\ee
 carries the following two natural Poisson brackets:
\be
\{ F, H\}_{\pm}: = \langle \nabla F, \rho \nabla H \rangle_\bI \pm  \langle \nabla' F, \rho \nabla' H \rangle_\bI
\quad\hbox{with}\quad \rho := \frac{1}{2}\left( \pi_{\cG} - \pi_{\cB}\right),
\label{A1T}\ee
where $\pi_\cG$ and $\pi_\cB$  are the projections from $\cG^\bC_\bR$ onto $\cG$ and $\cB$, respectively,
defined by means of \eqref{cGdec}.
Here, we use the $\cG^\bC_\bR$-valued  `left- and right-derivatives' of $F,H \in C^\infty(M)$:
 \be
 \langle X, \nabla F(K) \rangle_\bI + \langle X', \nabla' F(K) \rangle_\bI := \dt F(e^{tX} K e^{tX'}), \quad \forall K\in M, \, X,X' \in \cG_\bR^\bC.
 \label{Nab}\ee
 The minus bracket makes $M$ into a Poisson--Lie group, of which $G$ and $B$ are Poisson--Lie subgroups.
 Their inherited Poisson brackets take the form
\be
\{ \varphi_1, \varphi_2\}_B(b) = \langle D' \varphi_1(b), b^{-1} (D \varphi_2(b)) b \rangle_\bI,
\label{PBBT}\ee
and
\be
\{ f_1, f_2\}_G(g) = - \langle D' f_1(g), g^{-1} (D f_2(g)) g \rangle_\bI.
\label{PBGT}\ee
The derivatives are $\cG$-valued for  $\varphi_i\in C^\infty(B)$ and $\cB$-valued for $f_i\in C^\infty(G)$,
reflecting that these subalgebras are in duality with respect to $\langle -,- \rangle_\bI$.
To be sure, we write the definitions
\be
 \langle X, D \varphi(b) \rangle_\bI + \langle X', D' \varphi(b) \rangle_\bI := \dt \varphi(e^{tX} b e^{tX'}), \quad \forall b\in B, \, X,X' \in \cB,
 \label{derB}\ee
\be
 \langle X, D f(g) \rangle_\bI + \langle X', D' f(g) \rangle_\bI := \dt f(e^{tX} g e^{tX'}), \quad \forall g\in G, \, X,X' \in \cG,
\label{derG} \ee
where $\varphi \in C^\infty(B)$ and $f\in C^\infty(G)$. We shall also use the $\cG$-valued derivatives of $f\in C^\infty(G)$,
\be
 \langle X, \nabla f(g) \rangle_\cG + \langle X', \nabla' f(g) \rangle_\cG := \dt f(e^{tX} g e^{tX'}), \quad \forall g\in G, \, X,X' \in \cG,
 \label{nabG}\ee
and note that the Killing form $\langle -,- \rangle_\cG$ of $\cG$ is the
the restriction to $\cG$ of the complex Killing form $\langle -,- \rangle$ of $\cG^\bC$.
One has
\be
\langle X, Y\rangle_\cG = \langle X,\ri Y\rangle_\bI = \langle X, (\ri Y)_\cB\rangle_\bI, \quad \forall X,Y\in \cG,
\ee
and thus the two kinds of derivatives of $f\in C^\infty(G)$ are related by
\be
D f = (\ri \nabla f)_\cB.
\label{Dfromnab}\ee
Defining $R^\ri \in \End(\cG)$ by
\be
R^\ri (X) := (- \ri X)_\cG, \quad \forall X\in \cG,
\label{Ri}\ee
 the relation of the derivatives can also be written as
\be
D f= \ri \nabla f + R^\ri (\nabla f).
\label{Dnab}\ee
Of course, analogous relations hold for the right-derivative $D'f$, too.
With these relations at hand, one can prove the identity
\be
-\langle D' f_1(g), g^{-1} D f_2(g) g \rangle_\bI = \langle \nabla' f_1(g), R^\ri \nabla' f_2(g) \rangle_\cG
- \langle \nabla f_1(g), R^\ri \nabla f_2(g) \rangle_\cG.
\label{GPL}\ee
In terms of the decomposition
$ X = X_> + X_0 + X_<$,  one has
\be
R^\ri (X) = \ri (X_> - X_<),
\ee
and the right-hand side of \eqref{GPL} has the familiar form of a Sklyanin bracket.

The Poisson bracket $\{- ,- \}_+$ \eqref{A1T} corresponds to a symplectic form \cite{AM}, and $(M, \{- ,- \}_+)$ is called \cite{STS} the Heisenberg double of
the Poisson--Lie groups $G$ and $B$. It is a Poisson--Lie analog\footnote{The cotangent bundle of any Lie group can
be viewed as the Heisenberg double of the group equipped with the identically zero Poisson bracket \cite{STS}.}
 of the cotangent bundle $T^*G$ (and of $T^*B$).
Any element $K\in M$ admits unique (Iwasawa) decompositions \cite{Knapp}  into products of elements of $G$ and $B$, which we write as
\be
K = g_L b_R^{-1} = b_L g_R^{-1} \quad \hbox{with}\quad g_L, g_R \in G,\, b_L, b_R \in B.
\label{KdecT}\ee
These decompositions give rise to the maps $\Xi_L, \Xi_R: M\to G$ and $\Lambda_L, \Lambda_R: M\to B$,
\be
\Xi_L(K) := g_L,\quad \Xi_R(K):= g_R,\quad \Lambda_L(K):= b_L,\quad \Lambda_R(K):= b_R.
\label{XiLaT}
\ee
These are all Poisson maps from the $(M, \{-,- \}_+)$ onto the respective Poisson--Lie groups, and
the same is true for the products of any two of these maps into the same group.
Without going into details, we recall that any Poisson map into a Poisson--Lie group serves as a moment map
that generates a (possibly only infinitesimal)
Poisson--Lie action of the corresponding dual group \cite{Lu}.  In particular, the Poisson map $\Lambda: M\to B$ defined by
\be
\Lambda(K):= \Lambda_L(K) \Lambda_R(K), \quad \forall K\in M,
\label{mom}\ee
generates the so-called quasi-adjoint action of $G$ on the Heisenberg double. As was shown in \cite{Kli}, the corresponding action map,
$\cA^1: G \times M \to M$, is given by
\be
\cA^1(\eta, K) = \eta K \Xi_R( \eta \Lambda_L(K)),
\qquad
\forall \eta \in G,\, K\in M.
\label{cA1}\ee
This is a Poisson map if $G\times M$ is equipped
with the product Poisson bracket coming from  $(G,\{- ,- \}_G)$ and $(M,\{-,- \}_+)$.
According to the general theory \cite{STS}, the ring of $G$-invariant real functions, $C^\infty(M)^G$, forms a Poisson
subalgebra of $(C^\infty(M), \{- ,- \}_+)$, which is, by definition, the Poisson algebra of smooth functions
on the quotient space $M/G$.
Taking this quotient is  an example of Poisson reduction.
It is worth noting that $C^\infty(M)^G$ is nothing but the centralizer of
$\Lambda^* C^\infty(B)$, i.e., $C^\infty(M)^G$  consists of those functions that Poisson commute with the
functions depending only on the moment map $\Lambda$ \eqref{mom}.

For the implementation of the Poisson reduction, an alternative  model of the Heisenberg double will also prove convenient.
This model,  which is akin to a
trivialization of the cotangent bundle $T^*G$,  is the manifold
\be
\fM:= G \times B,
\label{fM}\ee
and we transfer the Poisson bracket from $M$ to $\fM$ by means of the diffeomorphism
\be
m: M\to \fM \quad\hbox{defined by} \quad
m:= (\Xi_R, \Lambda_R).
\label{mMfM}\ee
Said more directly,  the pair $(g_R, b_R)= m(K)$ is used as a new variable instead of $K\in M\equiv G_\bR^\bC$.
It is shown in Appendix \ref{sec:B} that the map $m$ is a Poisson diffeomorphism if  $\fM$ is endowed with the following
Poisson bracket:
\be
\{\cF, \cH\}(g,b) =\left\langle D_2' \cF, b^{-1} (D_2\cH) b \right\rangle_\bI
-\left\langle D'_1\cF, g^{-1} (D_1\cH) g\right\rangle_\bI
 +  \left\langle D_1\cF , D_2\cH \right\rangle_\bI
-\left\langle D_1 \cH , D_2\cF \right\rangle_\bI
\label{A2T}\ee
for functions $\cF, \cH\in C^\infty(\fM)$.
The
 derivatives on the right-hand side
are   taken at $(g,b)\in G\times B$, with respect to the first and second variable,
according to the definitions \eqref{derG} and \eqref{derB}, respectively.
In particular, $D_1 \cF$ is $\cB$-valued and $D_2 \cF$ is $\cG$-valued.
An alternative form of \eqref{A2T} results
by employing $\cG$-valued derivatives with respect to the first variable, defined like in \eqref{nabG}.

In terms of the model $\fM$,  the quasi-adjoint action $\cA^1$ \eqref{cA1} turns into
$\cA^2: G \times \fM \to \fM$,
\be
\cA^2(\eta, (g,b)) =  \left( \Xi_R(\eta b_L)^{-1} g \Xi_R(\eta b_L), \Dress_{\Xi_R(\eta b_L)^{-1}}(b) \right),
\label{cA2}\ee
where $b_L = \Lambda_L(g^{-1} b)^{-1}$. (This expression of $b_L=\Lambda_L(K)$ in terms on
$(g,b)\equiv (g_R, b_R)$
is obtained from \eqref{KdecT} with \eqref{XiLaT}.)
Here, we use the dressing action of $G$ on $B$, defined by
\be
\Dress_\eta (b) := \Lambda_L(\eta b),
\quad
\forall \eta \in G,\, b\in B,
\label{Dress}\ee
whose infinitesimal version reads
\be
\dress_X(b):= \dt \Dress_{e^{tX}} (b) = b (b^{-1} X b)_\cB,\qquad \forall X\in \cG,
\label{dressT}\ee
where the decomposition \eqref{cGdec+} is applied to $(b^{-1} X b)\in \cG^\bC_\bR$.
The action $\cA^2$  is related to $\cA^1$ according to
\be
\cA^2_\eta \circ m  = m\circ \cA^1_\eta, \quad \forall \eta \in G,
\label{cA12}\ee
where $\cA^i_\eta$ denotes the map of the relevant manifold obtained by fixing the first argument of $\cA^i$.
We observe that
the $G$-action $\cA^2$ \eqref{cA2} has the same orbits as the simpler
action given by the map
$\cA: G \times \fM \to \fM$:
\be
\cA(\eta, (g,b)): = (\eta g \eta^{-1}, \Dress_\eta(b)).
\label{cA}\ee
Since the orbits of $\cA$  are the same as those of the Poisson--Lie action $\cA^2$, these two $G$-actions share
the same invariant functions, and thus are equivalent from the point of view of Poisson reduction.

The real Lie algebra $\cG_\bR^\bC$ carries the Cartan involution, $\theta$,  that fixes $\cG$ pointwise and multiplies the elements of $\ri \cG$ by $-1$.
It lifts to a corresponding involutive automorphism  $\Theta$ of $G_\bR^\bC$, of which $G < G_\bR^\bC$
is the fixed point set.
Referring to \eqref{triang},  $\theta$ maps $\cG^\bC_>$ onto $\cG^\bC_<$.
We shall use the notations
\be
Z^\tau:= - \theta(Z), \qquad
K^\tau:= \Theta(K^{-1}),
\qquad\forall Z\in \cG_\bR^\bC,\,\, \forall K\in G_\bR^\bC.
\label{taumap}\ee
The maps $Z \mapsto Z^\tau$ and $K \mapsto K^\tau$ are anti-automorphisms
 satisfying
\be
X^\tau = - X,\quad \forall X\in \cG \quad \hbox{and}\quad K^\tau = K^{-1},\quad \forall K\in G.
\ee
This operation is often denoted simply by dagger, since for the classical Lie groups one can
choose the conventions in such a way that $X^\tau = X^\dagger$
and $K^\tau = K^\dagger$ with dagger denoting the matrix adjoint \cite{Knapp}.
Later we shall also need the closed submanifold
\be
\fP:= \exp(\ri \cG) \subset G^\bC_\bR,
\label{fP}\ee
which is diffeomorphic not only to $\cG$ but also to $B$.
Note that $\fP$ is a connected component of the fixed point set of the anti-automorphism $K\mapsto K^\tau$
of $G_\bR^\bC$, and
 a diffeomorphism with $B$ is provided  by the map
 \be
 \nu: B \to \fP, \quad \nu(b):= b b^\tau.
\label{nu}\ee
The map \eqref{nu} intertwines the dressing action with the obvious conjugation action of $G$ on $\fP$, since we have
\be
\Dress_\eta(b) (\Dress_\eta(b))^\tau = \eta b b^\tau \eta^{-1}, \qquad \forall \eta \in G,\, b\in B.
\label{Dressconj}\ee
This implies that any element of $B$ can be transformed into $B_0= \exp(\ri \cG_0)$ by the dressing action.
As an alternative to $G\times B$, one may also take $G\times \fP$ as a model of the Heisenberg double.

\begin{rem}
After small notational changes, all considerations of the paper apply to reductive compact Lie groups as well.
For example, one can take $G= \mathrm{U}(n)$, $\cG^\bC =\mathrm{ gl}(n,\bC)$, and $G^\bC =\mathrm{ GL}(n,\bC)$, in which case $B$ can be taken
to be the upper triangular subgroup whose diagonal elements  are positive real numbers. Then, $K^\tau = K^\dagger$,
and $\fP$ is the space of positive definite, Hermitian matrices.
The reader may  keep this example (or the example of $G=\mathrm{SU}(n)$) in mind when reading the text.
We restricted ourselves to simple Lie algebras just in order have a shorter presentation.
\end{rem}

\subsection{Two degenerate integrable systems on the Heisenberg double}
Now we present two degenerate integrable systems.
For this, we let $\pi_1$ and $\pi_2$ be the projections from $\fM$ onto $G$ and $B$, respectively,
\be
\pi_1: (g,b) \mapsto g, \quad \pi_2: (g,b) \mapsto b.
\ee
Then, consider the following families of functions on $\fM$,
\be
\pi_1^*(C^\infty(G)^G)
\quad \hbox{and}\quad \pi_2^* (C^\infty(B)^G),
\label{fMHams}\ee
where the superscript refers to invariance with respect to the conjugation and dressing actions of $G$ on $G$ and on $B$,
respectively.
When presented in terms of the model $M$, these become
\be
\Xi_R^*(C^\infty(G)^G) \quad \hbox{and} \quad  \Lambda^*_R (C^\infty(B)^G),
\label{MHams}\ee
since $\pi_1\circ m = \Xi_R$ and $\pi_2 \circ m = \Lambda_R$.
Both of these rings of functions have functional dimension $\ell= \rank(\cG) \equiv \dim(\cG_0)$, since this true
for $C^\infty(G)^G$ and for $C^\infty(B)^G$ (see Appendix A).
All the Hamiltonians in \eqref{MHams} are invariant under the quasi-adjoint action of $G$ on $M$,
as is easily seen from equations \eqref{cA2} and \eqref{cA12}.
In order to see that they yield two Abelian Poisson algebras and to identify
their constants of motion, let us describe the flows generated by these Hamiltonians.
For this, we notice from \eqref{A1T} that the Hamiltonian vector field belonging to  $H\in C^\infty(M)$
generates the evolution equation
\be
\dot{K} = \rho( \nabla H(K)) K + K \rho(\nabla' H(K)).
\label{HV}\ee

\begin{prop}\label{prop:sol1}
The Hamiltonian
$H = \Lambda_R^* \phi$ with $\phi \in C^\infty(B)^G$ generates the following
evolution equation on the Heisenberg double $M=G^\bC_\bR$  by means of the Poisson bracket $\{-,- \}_+$ \eqref{A1T},
\be
\dot{K} =- K D \phi(b_R),
\label{S1}\ee
which in terms of the decompositions $K = b_L g_R^{-1} = g_L b_R^{-1}$ \eqref{KdecT} gives
\be
\dot{g}_R = D \phi(b_R) g_R, \quad  \dot{b}_L= \dot{b}_R=0, \quad \dot{g}_L = - g_L D'\phi(b_R).
\label{S2}\ee
The solution $K(t)$ corresponding
to the initial value $K(0)$ is provided by
\be
K(t) = K(0) \exp\left( - t D\phi(b_R(0))\right),
\label{S3}\ee
or equivalently
\be
b_R(t) = b_R(0),\quad b_L(t) = b_L(0), \quad g_R(t) =  \exp(  t D\phi(b_R(0))) g_R(0), \quad
 g_L(t) = g_L(0) \exp( - t D'\phi(b_R(0))).
\label{S4} \ee
\end{prop}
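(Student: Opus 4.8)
The plan is to read off the Hamiltonian vector field of $H=\Lambda_R^*\phi$ from the general formula \eqref{HV}, for which I need the left and right derivatives $\nabla H$, $\nabla' H$ of \eqref{Nab}. Everything rests on an invariance identity for $\phi$ that I would prove first. Differentiating the dressing invariance $\phi(\Dress_{e^{tX}}(b))=\phi(b)$ at $t=0$ and inserting the infinitesimal dressing \eqref{dressT} gives $\langle (b^{-1}Xb)_\cB, D'\phi(b)\rangle_\bI=0$ for all $X\in\cG$; since $\cG$ is isotropic for $\langle-,-\rangle_\bI$ this forces $\Ad_b D'\phi(b)\in\cG$. Combined with the elementary relation $\langle X,D\phi(b)\rangle_\bI=\langle X,\Ad_b D'\phi(b)\rangle_\bI$ for $X\in\cB$, which follows from $e^{tX}b=b\,e^{t\Ad_{b^{-1}}X}$, this yields the key identity
\be
D\phi(b)=\Ad_b\, D'\phi(b),\qquad D\phi(b),\,D'\phi(b)\in\cG .
\label{Dinv}\ee

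Next I would compute the gradients by perturbing $K$ and tracking the response of $b_R=\Lambda_R(K)$ through the factorization $K=g_L b_R^{-1}$ of \eqref{KdecT}. For the right derivative one writes $Ke^{tX'}=g_L\,(b_R^{-1}e^{tX'})$ and refactorizes $b_R^{-1}e^{tX'}$ into $G\cdot B^{-1}$; the first-order change of $b_R$ is then encoded by an element $\beta_2\in\cB$ satisfying $\Ad_{b_R}w=X'+\beta_2$ with $w\in\cG$, and pairing $\beta_2$ with $D\phi(b_R)$ and using \eqref{Dinv} together with the isotropy of $\cG$ makes the $w$-contribution drop out, leaving $\nabla' H=-D\phi(b_R)$. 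The left derivative is treated identically, using $e^{tX}K=(e^{tX}g_L)\,b_R^{-1}$, and gives $\nabla H=-\Ad_{g_L}D'\phi(b_R)$. The decisive point is that, by \eqref{Dinv}, both gradients take values in $\cG$. I expect this infinitesimal refactorization—differentiating the only implicitly defined Iwasawa maps of \eqref{KdecT}—to be the main obstacle; the coupled $\cG+\cB$ linear systems that arise are exactly tamed by the dressing invariance of $\phi$.

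With $\nabla H,\nabla' H\in\cG$, the operator $\rho=\tfrac12(\pi_\cG-\pi_\cB)$ acts on them as multiplication by $\tfrac12$, so \eqref{HV} becomes $\dot K=-\tfrac12\,g_L D'\phi(b_R)g_L^{-1}K-\tfrac12\,K D\phi(b_R)$. Since $g_L^{-1}K=b_R^{-1}$ and, by \eqref{Dinv}, $b_R^{-1}D\phi(b_R)=D'\phi(b_R)b_R^{-1}$, the two half-terms are equal and collapse to $\dot K=-K D\phi(b_R)$, which is \eqref{S1}. To obtain \eqref{S2} I would substitute $K=g_L b_R^{-1}$ and $K=b_L g_R^{-1}$ into this equation, isolate the logarithmic derivatives, and project onto the direct sum $\cG_\bR^\bC=\cG+\cB$; because $D\phi(b_R)$ and $D'\phi(b_R)$ lie in $\cG$, the $\cB$-components force $\dot b_R=\dot b_L=0$, while the $\cG$-components give $\dot g_R=D\phi(b_R)g_R$ and $\dot g_L=-g_L D'\phi(b_R)$.

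Finally, as $b_R$ is a constant of motion, $D\phi(b_R)$ is constant along the flow and \eqref{S1} integrates immediately to $K(t)=K(0)\exp(-t D\phi(b_R(0)))$, which is \eqref{S3}. The componentwise formulas \eqref{S4} then follow by verifying that the displayed products are genuine factorizations of the type \eqref{KdecT}: for instance $g_L(0)\exp(-tD'\phi(b_R(0)))\in G$, and \eqref{Dinv} gives $\exp(-tD'\phi(b_R(0)))\,b_R(0)^{-1}=b_R(0)^{-1}\exp(-tD\phi(b_R(0)))$, so that $K(t)=\big(g_L(0)\exp(-tD'\phi(b_R(0)))\big)\,b_R(0)^{-1}$ identifies $g_L(t)$ and $b_R(t)$; the expressions for $g_R(t)$ and $b_L(t)$ are confirmed in the same way.
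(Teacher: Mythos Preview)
Your argument is correct and follows essentially the same line as the paper's proof: you establish the identity $D\phi(b)=\Ad_b D'\phi(b)\in\cG$ from dressing invariance, compute $\nabla H$ and $\nabla' H$ (the paper quotes Lemma~\ref{lm:B2} for this, but your direct infinitesimal refactorization is exactly the content of that lemma in the present case), feed these into \eqref{HV}, and collapse the two half-terms using the identity to reach \eqref{S1}.

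The one genuine point of departure is how you obtain $\dot b_R=0$. You insert the factorization $K=g_L b_R^{-1}$ into \eqref{S1}, rewrite $b_R^{-1}D\phi(b_R)b_R=D'\phi(b_R)\in\cG$ via your identity \eqref{Dinv}, and then project onto $\cG+\cB$ to force $b_R^{-1}\dot b_R=0$. The paper instead only uses the factorization $K=b_L g_R^{-1}$ to get $\dot b_L=0$ and $\dot g_R=D\phi(b_R)g_R$, and then invokes that $H\in C^\infty(M)^G$ implies conservation of the moment map $\Lambda=\Lambda_L\Lambda_R$ \eqref{mom}, so constancy of $b_L$ forces constancy of $b_R$. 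Your route is more self-contained and avoids appealing to the Poisson--Lie moment map machinery; the paper's route makes the structural reason (invariance of $H$ under the quasi-adjoint action) more visible. Both are valid, and your verification of \eqref{S4} by exhibiting the factorizations and invoking uniqueness is equivalent to the paper's computation via $g_L=b_L g_R^{-1} b_R$.
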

\begin{proof}
We begin by pointing out that $\phi\in C^\infty(B)^G$ satisfies
\be
D\phi(b) = b D'\phi(b) b^{-1}, \qquad \forall b\in B.
\label{S6}\ee
For arbitrary $\phi\in C^\infty(B)$ one has $D\phi(b)=(b D' \phi(b) b^{-1})_\cG$.
By \eqref{dressT}, the infinitesimal dressing invariance means that $\langle D'\phi(b), (b^{-1} X b)_\cB \rangle_\bI =0$ for all $X\in \cG$.
This is equivalent to $(b D'\phi(b) b^{-1})_\cB=0$, which implies \eqref{S6}.
By using Lemma \ref{lm:B2} in Appendix \ref{sec:B}, we then get
\be
\nabla' H(K) = - D\phi(b_R), \qquad \nabla H(K) = - g_L D' \phi(b_R) g_L^{-1},
\ee
which are both $\cG$-valued.
Formula \eqref{S1} follows by putting these derivatives into \eqref{HV},
\be
\dot{K} = -\frac{1}{2} ( K D\phi(b_R) + g_L D' \phi(b_R) g_L^{-1} K) = - K D \phi(b_R),
\ee
where we applied \eqref{S6} and the decomposition $K= g_L b_R^{-1}$.
By taking $K = b_L g_R^{-1}$ and using that $D\phi(b_R)$ is $\cG$-valued, \eqref{S1} implies
$\dot{g}_R = D \phi(b_R) g_R$ and $\dot{b}_L=0$.
It follows that $b_L$ remains constant. The moment map $\Lambda$ \eqref{mom} is also constant along the flow,
for $H\in C^\infty(M)^G$, and therefore $b_R$ stays constant as well.
Hence we obtain the formula for $g_R(t)$.

The formula for the time development of $g_L$ then follows directly from \eqref{S2}, or alternatively from the identity $g_L = b_L  g_R^{-1} b_R$.
In detail,
\be
\begin{aligned}
g_L(t)& = b_L(0)  g_R(0)^{-1}  \exp( - t D\phi(b_R(0)))  b_R(0)  \\
& =  g_L(0) b_R(0)^{-1} \exp( - t D\phi(b_R(0)))  b_R(0)
= g_L(0) \exp( - t D'\phi(b_R(0))),
\end{aligned}
\ee
where we took \eqref{S6} into account. This also provides a consistency check on our calculations.
\end{proof}

Since all smooth functions depending on $b_L$ and $b_R$ are constants
of motion, we see in particular that the elements of $\Lambda_R^* C^\infty(B)^G$ Poisson commute.\footnote{
Their property \eqref{S6} implies
that the dressing invariant functions, $C^\infty(B)^G$,
form the center of the Poisson algebra $(C^\infty(B), \{- ,- \}_B)$ \eqref{PBBT}.}
The number of independent constants of motion is $2 \dim(B) - \ell$. This is a consequence  of
the identity
\be
 b_L^{-1} (b_L^{-1})^\tau = g_R^{-1} (b_R b_R^\tau) g_R
\label{S7}\ee
that leads to $\ell$ relations between the functions of $b_R$ and $b_L$.
We here used that $G$ acts by conjugations on the model $\fP$ of $B$ \eqref{nu}, and thus
\be
F( b_L^{-1} (b_L^{-1})^\tau) = F( b_R b_R^\tau), \quad \forall F \in C^\infty(\fP)^G.
\label{S8}\ee
The ring   $C^\infty(\fP)^G\simeq C^\infty(B)^G$ is  generated by $\ell=\rank(\cG)$ basic invariants,
which equals the functional dimension of $\Lambda_R^* C^\infty(B)^G$ \eqref{MHams} as well.
In conclusion, these Hamiltonians form a degenerate integrable system on $M$. Of course,
the same is true for the equivalent Hamiltonians $\pi_2^* C^\infty(B)^G$ \eqref{fMHams} on $\fM$.

\begin{prop}\label{prop:sol2}
Consider the Hamiltonian
$H = \Xi_R^* h$ with $h \in C^\infty(G)^G$.
Then, the corresponding evolution equation reads
\be
\dot{K} = K Dh(g_R).
\label{s1}\ee
The constituents in the decompositions $K= b_L g_R^{-1}= g_L b_R^{-1}$ \eqref{KdecT} satisfy
\be
\dot{b}_R = - Dh(g_R) b_R, \quad \dot{b}_L = b_L Dh(g_R), \quad \dot{g}_L = 0,\quad
\dot{g}_R = [(\ri \nabla h(g_R))_\cG, g_R].
\label{s2}\ee
The solution can be written as
\be
b_R(t) = \beta(t)^{-1} b_R(0), \quad b_L(t) =b_L(0) \beta(t), \quad g_L(t) = g_L(0), \quad g_R(t) = \gamma(t) g_R(0) \gamma(t)^{-1},
\label{s3}\ee
where $\beta(t)$ and $\gamma(t)$ are determined by the following factorization problem in $G^\bC_\bR$:
\be
\exp(\ri t \nabla h(g_R(0))) = \beta(t) \gamma(t) \quad\hbox{with}\quad
\beta(t) \in B,\, \gamma(t) = G.
\label{s4}\ee
Equivalently to \eqref{s3}, we have $K(t) = K(0) \beta(t)$.
\end{prop}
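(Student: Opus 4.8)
The plan is to follow the template already used for Proposition \ref{prop:sol1}: first compute the Hamiltonian vector field of $H=\Xi_R^*h$ on $M$ directly from \eqref{HV}, then disassemble it into the two Iwasawa constituents of \eqref{KdecT}, and finally solve the resulting coupled equations by reducing them to a factorization problem in $G^\bC_\bR$. The two structural facts I would lean on throughout are that $h\in C^\infty(G)^G$ satisfies $\nabla h=\nabla'h$ with $\Ad_g\nabla h(g)=\nabla h(g)$ (so $\nabla h(g)$ commutes with $g$), and that the $\cB$- and $\cG$-valued derivatives of $h$ are tied by $Dh=\ri\nabla h+R^\ri(\nabla h)=(\ri\nabla h)_\cB$, which follows at once from \eqref{Dnab} and \eqref{Ri}.

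First I would establish \eqref{s1}. Using Lemma \ref{lm:B2} of Appendix \ref{sec:B} to express the left- and right-derivatives $\nabla H(K)$ and $\nabla'H(K)$ of $H=\Xi_R^*h$ in terms of $Dh(g_R)$ (the exact analogue of how the derivatives of $\Lambda_R^*\phi$ were computed in the proof of Proposition \ref{prop:sol1}), I would substitute them into \eqref{HV} and simplify with the invariance identity $\Ad_{g_R}\nabla h(g_R)=\nabla h(g_R)$ to obtain $\dot K=K\,Dh(g_R)$, i.e. $K^{-1}\dot K=Dh(g_R)\in\cB$. This is the bookkeeping-heavy step, and is where I would be most careful about the placement of $\pi_\cG$ versus $\pi_\cB$.

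Next I would extract \eqref{s2} by differentiating the two decompositions in \eqref{KdecT}. Writing $\dot K=K\,Dh$ with $K=g_Lb_R^{-1}$ yields $g_L^{-1}\dot g_L-b_R^{-1}\dot b_R=\Ad_{b_R^{-1}}Dh(g_R)$; since $\cB$ is a subalgebra and hence $\Ad_B$-stable under \eqref{cGdec}, the right-hand side lies in $\cB$, so matching the $\cG$-part forces $\dot g_L=0$ and the $\cB$-part gives $\dot b_R=-Dh(g_R)b_R$. With $K=b_Lg_R^{-1}$ one gets instead $\Ad_{g_R}(b_L^{-1}\dot b_L)-\dot g_Rg_R^{-1}=Dh(g_R)$, which I would solve for the $\cB$-part $b_L^{-1}\dot b_L$ and the $\cG$-part $\dot g_Rg_R^{-1}$ by exploiting that $\Ad_{g_R}$ preserves $\cG$ (as $g_R\in G$) and that $\Ad_{g_R}\nabla h(g_R)=\nabla h(g_R)$; inserting $Dh=\ri\nabla h-(\ri\nabla h)_\cG$ then collapses this to $\dot g_R=[(\ri\nabla h(g_R))_\cG,g_R]$ and $\dot b_L=b_LDh(g_R)$.

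Finally, to produce the closed-form solution \eqref{s3}--\eqref{s4}, I would set $C(t):=\exp(\ri t\nabla h(g_R(0)))$ and use the global decomposition $G^\bC_\bR=BG$ (guaranteed by \eqref{KdecT}) to write $C(t)=\beta(t)\gamma(t)$ uniquely with $\beta(t)\in B$, $\gamma(t)\in G$. The decisive computation is the right-logarithmic derivative: since $\nabla h(g_R(0))$ commutes with $C(t)$, one has $C^{-1}\dot C=\ri\nabla h(g_R(0))$, whence, after applying $\Ad_\gamma$, one finds $\Ad_\gamma(\ri\nabla h(g_R(0)))=\beta^{-1}\dot\beta+\dot\gamma\gamma^{-1}$. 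Defining $g_R(t):=\gamma(t)g_R(0)\gamma(t)^{-1}$, the equivariance $\nabla h(g_R(t))=\Ad_{\gamma(t)}\nabla h(g_R(0))$ turns the left-hand side into $\ri\nabla h(g_R(t))$, so that this becomes a genuine $\cG+\cB$ splitting, yielding $\dot\gamma\gamma^{-1}=(\ri\nabla h(g_R))_\cG$ and $\beta^{-1}\dot\beta=(\ri\nabla h(g_R))_\cB=Dh(g_R)$. These reproduce exactly the equations of \eqref{s2}, and substituting the claimed constituents into both decompositions of $K(t)=K(0)\beta(t)$ reduces, after cancellation, to the statement that $g_R(0)$ commutes with $C(t)$, i.e. to $[\nabla h(g_R(0)),g_R(0)]=0$. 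I expect this last reconciliation to be the conceptual crux: the factorization is built from the \emph{frozen} gradient $\nabla h(g_R(0))$, yet $g_R(t)$ genuinely evolves, and the two are matched only through the equivariance of $\nabla h$ together with the fact that $g_R$ commutes with its own gradient.
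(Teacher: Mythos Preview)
Your proposal is correct and follows essentially the same route as the paper: compute $\nabla H$, $\nabla' H$ via Lemma~\ref{lm:B2}, plug into \eqref{HV} to get $\dot K=K\,Dh(g_R)$, read off the evolution of the Iwasawa constituents, and solve by the $B\cdot G$ factorization of $\exp(\ri t\nabla h(g_R(0)))$ using the equivariance of $\nabla h$. The one minor deviation is in how you extract $\dot b_L$ and $\dot g_R$: you obtain both directly from the decomposition $K=b_Lg_R^{-1}$ (which works, but requires the maneuver of conjugating by $g_R^{-1}$ and using $\Ad_{g_R}\ri\nabla h(g_R)=\ri\nabla h(g_R)$ to separate the $\cG$- and $\cB$-parts, since $\Ad_{g_R}$ does not preserve $\cB$), whereas the paper takes a shortcut for $\dot b_L$ by invoking that the moment map $\Lambda=b_Lb_R$ is conserved for any $H\in C^\infty(M)^G$, and then recovers $\dot g_R$ from the algebraic identity $g_R=b_Rg_L^{-1}b_L$.
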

 \begin{proof}
 Lemma \ref{lm:B2} now gives
 \be
\nabla H(K) = - b_L D' h(g_R) b_L^{-1},
\qquad
\nabla' H(K) = - g_R D' h(g_R) g_R^{-1}.
\ee
Any function $h\in C^\infty(G)$ satisfies  $D h (g) = (g D'h(g) g^{-1})_\cB$, and  $Dh(g) = D'h(g)$ holds for $h\in C^\infty(G)^G$.
Thus we get
\be
\rho(\nabla' H(K))= ( g_R D'h(g_R) g_R^{-1})_\cB - \frac{1}{2}  g_R D' h(g_R) g_R^{-1} = Dh(g_R) - \frac{1}{2} g_R D h(g_R) g_R^{-1},
\ee
and $\rho(\nabla H(K)) = \frac{1}{2} b_L (D h(g_R)) b_L^{-1}$.
Inserting these into \eqref{HV} leads to \eqref{s1}:
\be
\dot{K}  = K(Dh(g_R) - \frac{1}{2} g_R D h(g_R) g_R^{-1}) + \frac{1}{2}  b_L (D h(g_R)) b_L^{-1} K = K Dh(g_R),
\ee
where the last equality relies on writing  $K= b_L g_R^{-1}$.
Since $Dh(g_R) \in \cB$, \eqref{s1} implies that $\dot{b}_R = - Dh(g_R) b_R$ and $\dot{g}_L = 0$. Then
$\dot{b}_L = b_L Dh(g_R)$ follows, because $\Lambda$ \eqref{mom} Poisson commutes with any $H\in C^\infty(M)^G$.
Due to \eqref{Dfromnab},
\be
 \ri \nabla h(g_R) = Dh(g_R) + (\ri \nabla h(g_R))_\cG,
 \label{s5}\ee
and $[g_R, \ri \nabla h(g_R)] =0$ because of the invariance property of $h$.
By using these relations,  the formula for $\dot{g}_R$ is derived from
\be
g_R = b_R g_L^{-1} b_L.
\label{s6}\ee
Turning to the solution, we first note that the curve $(g_R(t), b_R(t))$ defined by \eqref{s3}
 satisfies the differential equations
\be
\begin{aligned}
& \dot{g}_R(t) = [ \dot{\gamma}(t) \gamma(t)^{-1}, g_R(t)],\\
&\dot{b}_R(t) = - \beta(t)^{-1}\dot{\beta}(t) b_R(t).
\end{aligned}
\label{s7}\ee
Moreover, the equality \eqref{s4} implies
\be
\beta(t) \gamma(t) \ri \nabla h(g_R(0)) = \dot{\beta}(t) \gamma(t) + \beta(t) \dot{\gamma}(t).
\label{s8}\ee
From here, we get
\be
\ri \nabla h(g_R(t)) = \ri \gamma(t) \nabla h(g_R(0)) \gamma(t)^{-1} = \beta(t)^{-1} \dot{\beta}(t) + \dot{\gamma}(t) \gamma(t)^{-1},
\label{s9}\ee
where first equality holds because of the $G$-invariance of $h$.
We see from \eqref{s9} that
\be
\beta(t)^{-1} \dot{\beta}(t) = (\ri \nabla h(g_R(t)))_\cB
\quad\hbox{and}\quad
\dot{\gamma}(t) \gamma(t)^{-1} = (\ri \nabla h(g_R(t)))_\cG .
\label{s10}\ee
Inserting these relations into \eqref{s7}, we obtain
\be
\begin{aligned}
&\dot{b}_R(t) =(-\ri \nabla h(g_R(t)))_\cB\, b_R(t),\\
& \dot{g}_R(t) = [  (\ri \nabla h(g_R(t)))_\cG, g_R(t)].
\end{aligned}
\label{s11}\ee
Since $(\ri \nabla h(g_R))_\cB = Dh(g_R)$, comparison with \eqref{s2} shows that
the proof is complete.
\end{proof}

It is clear that $\Xi_R^* C^\infty(G)^G$ is generated by $\ell = \rank(\cG)$ functionally
independent Hamiltonians, which are in involution, since they remain constant along the
flows \eqref{s3}. To show their degenerate integrability, we observe that any
smooth real function of
\be
W(K):= b_L b_R g_L^{-1} = b_L g_R b_L^{-1}
\label{W}\ee
is a constant of motion.
Indeed, $\Lambda(K)= b_L b_R$ and $\Xi_L(K) = g_L$ are both constants of motion by \eqref{s2}.
We see from \eqref{W} that the set of the elements $W(K)$ is the union of those conjugacy classes in $G^\bC$ that
have representatives in $G_0<G$. Generically, the elements of this set can be parametrized by $(N-\ell)$ real variables,
where $N= 2 \dim(\cG)$ is the dimension of the Heisenberg double.
This holds since the generic elements of $G_0< G_\bR^\bC$ are fixed precisely by $G_0^\bC$ with respect to conjugations.
It follows that the functional dimension of the ring of joint constants of motion
of the Hamiltonians belonging to $\Xi_R^* C^\infty(G)^G$ is $(N-\ell)$,
and thus these Hamiltonians form a  degenerate integrable system.

\begin{rem}
It is a simple exercise to re-derive
the evolution equations for the variables $(g_R, b_R) = m(K)$ working directly with $(\fM,\{- ,- \})$ \eqref{A2T}.
In the above we have chosen to  use the model $(M,\{-,- \}_+)$ of the Heisenberg double since we wished
to present the time development of all constituents that enter $K= g_L b_R^{-1} = b_L g_R^{-1}$.
However, the model $(\fM, \{-,- \})$ will prove more convenient in what follows.
\end{rem}

\subsection{Deformation of spin Sutherland models from Poisson reduction}

Now we consider Poisson reduction based on  the $G$-action $\cA$ \eqref{cA} on $\fM$ \eqref{fM}.
This means that we keep only the $G$-invariant functions, and characterize their
Poisson brackets by restriction
to a convenient gauge slice.

We denote by $C^\infty(\fM)^G$ the ring of invariant functions. Any $\cF\in C^\infty(\fM)^G$ satisfies the identity
\be
D_1\cF(g,b) - D_1' \cF(g,b) + (b D_2' \cF(g,b) b^{-1})_\cB =0, \quad \forall (g,b)\in \fM,
\label{crucid2}\ee
as follows by  the taking derivative of $\cF \circ \cA_{e^{tX}}=\cF$ with respect to $t$, for every $X\in \cG$.
Here we utilized the decomposition \eqref{cGdec}, but below we shall also use the alternative decomposition
\be
\cG^\bC = \cG^\bC_0 + \cG^\bC_\perp, \qquad \cG^\bC_\perp \equiv \cG^\bC_< + \cG^\bC_>,
\ee
whereby we may write
\be
X= X_0 + X_\perp, \quad \forall X\in \cG^\bC.
\ee
Consider the connected subgroup $G_0^\bC < G^\bC$  corresponding to $\cG_0^\bC$.  By definition, the subset $G_{0,\reg}^\bC \subset G_0^\bC$
consists of those $g_0\in G_0^\bC$
for which $\Ad_{g_0} \in \End(\cG^\bC)$ is invertible on $\cG^\bC_\perp$. It is clear that $G_0^\reg \subset G_{0,\reg}^\bC$ and
for any $g_0 \in G_{0,\reg}^\bC$ we extend the definition \eqref{RQ1} by putting
\be
 \cR(g_0) (X) = \frac{1}{2} (\Ad_{g_0} + \id)\circ (\Ad_{g_0} - \id)_{\vert \cG_\perp^\bC}^{-1}(X_\perp),
 \qquad \forall g_0\in G^\bC_{0,\reg},\,\, X\in \cG^\bC.
 \label{RQ2}\ee

We introduce the following subsets of $\fM$ \eqref{fM},
\be
\fM^\reg:= \{ (g,b) \in \fM \mid g\in G^\reg \}, \quad \fM_0^\reg := \{ (Q,b) \in \fM \mid Q\in G_0^\reg \},
\ee
and observe that the restriction of functions provides an isomorphism
\be
C^\infty(\fM^\reg)^G \Longleftrightarrow  C^\infty(\fM_0^\reg)^\fN,
\label{isom1+}\ee
 using the normalizer $\fN$ \eqref{fN}.
For any $F\in C^\infty(\fM_0^\reg)$, we introduce the derivative $D_1 F(Q,b)\in \cB_0$ by
\be
\langle D_1 F(Q,b), X_0 \rangle_\bI = \dt F(e^{t X_0} Q, b), \qquad \forall X_0\in \cG_0.
\ee
The $\cG$-valued derivatives
$D_2 F$ and $D'_2 F$ are determined analogously to \eqref{derB}.

\begin{thm}\label{thm:defSuth}
 Let $F,H \in C^\infty(\fM_0^\reg)^\fN$ be the restrictions of $\cF, \cH \in C^\infty(\fM^\reg)^G$, and define
\be
\{ F,H\}_\red(Q,b) := \{\cF, \cH\}(Q,b).
\ee
Then equation \eqref{A2T} leads to the following formula of the reduced Poisson bracket:
\be
 \begin{aligned}
\{F,H\}_\red(Q,b)& = \langle D_1 F, D_2 H \rangle_\bI - \langle D_1 H, D_2 F \rangle_\bI \\
&+  \langle \cR(Q)(b D_2'H b^{-1})_\cB, D_2 F\rangle_\bI -  \langle \cR(Q)(b D_2'F b^{-1})_\cB, D_2 H\rangle_\bI,
 \end{aligned}
\label{RED1}\ee
where the subscript $\cB$ refers to \eqref{cGdec}, the derivatives are taken at $(Q,b)$, and
$\cR(Q)$ is given by \eqref{RQ2}.
\end{thm}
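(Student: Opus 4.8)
The plan is to follow the template of the proof of Theorem~\ref{thm:spinSuth}: express the $\cG_\bR^\bC$-valued derivatives of the unreduced functions $\cF,\cH$ at a slice point $(Q,b)$ with $Q\in G_0^\reg$ in terms of the derivatives of their restrictions $F,H$, and then substitute into \eqref{A2T}. First I would record the elementary relations $D_2\cF=D_2F$, $D_2'\cF=D_2'F$, and $(D_1\cF)_0=D_1F$, where the subscript $0$ denotes the $\cB_0=\ri\cG_0$ component of the $\cB$-valued derivative $D_1\cF$ (the last identity holds because pairing $D_1\cF$ against $\cG_0$ only detects this component). I would also invoke the two universal left--right relations valid for any smooth function: $D_1'\cF(Q,b)=\Ad_{Q^{-1}}D_1\cF(Q,b)$ on the $G$-factor, which is legitimate here because $Q\in G_0$ and hence $\Ad_{Q^{-1}}$ preserves $\cB$, and $D_2F=(bD_2'Fb^{-1})_\cG$ on the $B$-factor.

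The crux is to determine the transverse part of $D_1\cF$. Combining the invariance identity \eqref{crucid2} with the $\Ad$-relation yields $(\id-\Ad_{Q^{-1}})D_1\cF=-(bD_2'\cF b^{-1})_\cB=:-\mu$. Projecting this onto $\cB_0$ shows, just as $[J,d_2F]_0=0$ did in the cotangent case, that $\mu$ has vanishing $\cB_0$-component, so $\mu\in\cG^\bC_>$. Since $Q\in G_0^\reg\subset G^\bC_{0,\reg}$, the operator $\id-\Ad_{Q^{-1}}$ is invertible on $\cG^\bC_>$, and the same manipulation as in the proof of Theorem~\ref{thm:spinSuth} identifies its inverse with $\tfrac12\id+\cR(Q)$, where $\cR(Q)$ is the extension \eqref{RQ2}. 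Hence $D_1\cF=D_1F-(\tfrac12\id+\cR(Q))\mu$, and likewise $D_1\cH=D_1H-(\tfrac12\id+\cR(Q))\nu$ with $\nu:=(bD_2'\cH b^{-1})_\cB$.

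Finally I would substitute into \eqref{A2T} and collect terms, leaning at every stage on the isotropy of both $\cG$ and $\cB$ for $\langle-,-\rangle_\bI$. The term $-\langle D_1'\cF,Q^{-1}(D_1\cH)Q\rangle_\bI$ becomes $-\langle D_1\cF,D_1\cH\rangle_\bI$ after the $\Ad$-relation, and vanishes because $\cB$ is isotropic. The two middle terms $\langle D_1\cF,D_2\cH\rangle_\bI-\langle D_1\cH,D_2\cF\rangle_\bI$ produce the desired $\langle D_1F,D_2H\rangle_\bI-\langle D_1H,D_2F\rangle_\bI$, the two $\cR(Q)$-terms of \eqref{RED1}, and a leftover $-\tfrac12\langle\mu,D_2H\rangle_\bI+\tfrac12\langle\nu,D_2F\rangle_\bI$. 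The remaining term $\langle D_2'\cF,b^{-1}(D_2\cH)b\rangle_\bI$ reduces, by isotropy of $\cG$, to $\langle\mu,D_2H\rangle_\bI$. Adding these, the non-target contributions amount to $\tfrac12\bigl(\langle\mu,D_2H\rangle_\bI+\langle\nu,D_2F\rangle_\bI\bigr)$, which I expect to rewrite as $\tfrac12\langle bD_2'Fb^{-1},bD_2'Hb^{-1}\rangle_\bI=\tfrac12\langle D_2'F,D_2'H\rangle_\bI=0$, once more by isotropy of $\cG$ and $\Ad$-invariance of the form. This leaves exactly \eqref{RED1}.

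The main obstacle is the disciplined bookkeeping demanded by the two different decompositions in play: $\cG^\bC=\cG^\bC_0+\cG^\bC_\perp$, which governs $\cR(Q)$, and the Manin-triple splitting $\cG_\bR^\bC=\cG+\cB$, which governs \eqref{A2T}. One must keep straight that $D_1\cF$ is $\cB$-valued while $D_2\cF$ is $\cG$-valued, that $\mu$ lands in $\cG^\bC_>\subset\cB$, and that $\cR(Q)$ acts through the $\cG^\bC_\perp$-projection. The one genuinely non-obvious step is the vanishing of the half-terms, for which the identity $\langle bD_2'Fb^{-1},bD_2'Hb^{-1}\rangle_\bI=\langle D_2'F,D_2'H\rangle_\bI=0$ must be spotted; everything else is the same structural argument as in the cotangent case, with the Killing form replaced by $\langle-,-\rangle_\bI$ and the isotropy of the Manin-triple halves doing the work previously done by the antisymmetry of $\cR(Q)$.
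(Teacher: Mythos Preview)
Your proposal is correct and follows essentially the same route as the paper: both use \eqref{crucid2} at $(Q,b)$ to solve for the $\cG^\bC_>$-part of $D_1\cF$ in terms of $(bD_2'Fb^{-1})_\cB$ via $-(\tfrac12\id+\cR(Q))$, note that the $-\langle D_1'\cF,Q^{-1}(D_1\cH)Q\rangle_\bI$ term vanishes by isotropy of $\cB$, and then collect terms. The only cosmetic difference is in how the residual half-terms cancel: the paper pre-symmetrizes $\langle D_2'\cF,b^{-1}(D_2\cH)b\rangle_\bI$ as $\tfrac12\langle\mu,D_2H\rangle_\bI-\tfrac12\langle\nu,D_2F\rangle_\bI$ by invoking antisymmetry of the bracket, so the halves cancel pairwise, whereas you compute this term directly as $\langle\mu,D_2H\rangle_\bI$ and then observe that the leftover $\tfrac12(\langle\mu,D_2H\rangle_\bI+\langle\nu,D_2F\rangle_\bI)=\tfrac12\langle D_2'F,D_2'H\rangle_\bI=0$ by isotropy of $\cG$---which is precisely what underlies the antisymmetry the paper invokes.
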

\begin{proof}
At any $(Q,b)\in\fM_0^\reg$, we have
\be
D_2' \cF(Q,b) = D_2' F(Q,b), \quad (D_1\cF(Q,b))_0 = (D_1' \cF(Q,b))_0 = D_1 F(Q,b),
\ee
where we use the decomposition \eqref{triang}.
Then the identity \eqref{crucid2} implies that
\be
( b D_2' \cF b^{-1})_{\cB_0} = 0,
\quad\hbox{with}\quad \cB= \cB_0 + \cB_>,
\label{invc}\ee
and
\be
( \Ad_{Q^{-1}} - \id) D_1 \cF(Q,b)_> = (b D_2' F(Q,b) b^{-1})_\cB.
\ee
This is solved by
\be
(D_1 \cF(Q,b))_> = -( \frac{1}{2} \id + \cR(Q)) (b D_2' F(Q,b) b^{-1})_\cB,
\ee
where we use the triangular decomposition \eqref{triang}.

We have to substitute the above relations into
\be
\{\cF, \cH\}(Q,b) =\left\langle D_2' \cF, b^{-1} (D_2\cH) b \right\rangle_\bI
 +  \left\langle D_1\cF , D_2\cH \right\rangle_\bI
-\left\langle D_1 \cH , D_2\cF \right\rangle_\bI,
\label{A2T+}\ee
which is obtained from \eqref{A2T} by noting that $\left\langle D'_1\cF, Q^{-1} (D_1\cH) Q\right\rangle_\bI=0$
since $\Ad_{Q^{-1}}$ maps $\cB$ to $\cB$.
At $(Q,b)$, because the Poisson bracket is antisymmetric,
\be
\left\langle D_2' \cF,  b^{-1} (D_2\cH) b \right\rangle_\bI = \frac{1}{2} \left\langle (b D_2'F  b^{-1})_\cB, D_2 H \right\rangle_\bI
- \frac{1}{2} \left\langle (b D_2'H  b^{-1})_\cB, D_2 F \right\rangle_\bI.
\ee
On the other hand, we get
\be
\left\langle D_1\cF , D_2\cH \right\rangle_\bI =\langle D_1 F  - (\frac{1}{2}\id + \cR(Q))  (b D_2' F(Q,b) b^{-1})_\cB, D_2 H\rangle_\bI,
\ee
and similar for $\left\langle D_1\cH , D_2\cF \right\rangle_\bI$.
The sum of these expressions gives us the formula \eqref{RED1}.
\end{proof}

\begin{rem}
Analogous to the reduced Poisson brackets \eqref{redcot1T} presented in Section \ref{S:Sec2}, formula \eqref{RED1} defines
a Poisson algebra structure not only on $C^\infty(\fM_0^\reg)^\fN$, but on the larger ring
$C^\infty(\fM_0^\reg)^{G_0}$ as well.
\end{rem}

Now we deal with the reduction of the dynamics induced by the Hamiltonians in $\pi_2^*(C^\infty(B)^G)$.
The projection of the Hamiltonian vector field of $\cH= \pi_2^* \phi$  on the quotient space $\fM^\reg/G$ descends also
from the evolution vector field living on $\fM_0^\reg$ that is described below.
This represents an intermediate step between the dynamics on $\fM^\reg$ and on $\fM^\reg/G$.

\begin{prop}\label{prop:REDeq1}
If $\cH = \pi_2^* \phi$ with $\phi\in C^\infty(B)^G$, i.e., $\cH(g,b) = \phi(b)$, then the formula
\eqref{RED1} of the reduced Poisson bracket simplifies to
\be
\{ F, H\}_\red(Q,b)= \langle D_1 F(Q,b), D \phi(b) \rangle_\bI
+ \langle D_2'F(Q,b), \left( b^{-1} (\cR(Q) D\phi(b)) b\right)_\cB \rangle_\bI,
\label{RED1+}\ee
and the corresponding reduced evolution equation on $\fM_0^\reg$ can be taken to be
\be
\dot{Q} =  (D \phi(b))_0 Q, \qquad
\dot{b} = b \left( b^{-1} (\cR(Q) D \phi(b)) b\right)_\cB.
\label{REDeq1}\ee
\end{prop}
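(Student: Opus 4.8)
The plan is to specialize the general reduced bracket formula \eqref{RED1} to the Hamiltonian $H$ obtained by restricting $\cH(g,b)=\phi(b)$ to $\fM_0^\reg$, and then to exhibit a vector field whose directional derivative reproduces the resulting derivation. First I would record the derivatives of $H$: since $H(Q,b)=\phi(b)$ is independent of $Q$, one has $D_1 H=0$, while $D_2 H=D\phi(b)$ and $D_2'H=D'\phi(b)$ are the $\cG$-valued derivatives of $\phi$ on $B$. Substituting these into \eqref{RED1} kills the term $\langle D_1 H, D_2 F\rangle_\bI$ outright. The term carrying $(b D_2'H b^{-1})_\cB=(b D'\phi(b) b^{-1})_\cB$ also vanishes, because $\phi$ is dressing invariant; this is exactly the property $(b D'\phi(b)b^{-1})_\cB=0$ established in the proof of Proposition \ref{prop:sol1}. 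What survives is $\langle D_1 F, D\phi(b)\rangle_\bI$ together with $-\langle \cR(Q)(b D_2'F b^{-1})_\cB, D\phi(b)\rangle_\bI$.

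It then remains to match this last $\cR$-term with the second term of \eqref{RED1+}, i.e.\ to prove
\be
-\langle \cR(Q)(b D_2'F b^{-1})_\cB, D\phi(b)\rangle_\bI = \langle D_2'F,\, (b^{-1}(\cR(Q)D\phi(b))b)_\cB\rangle_\bI .
\ee
The observation that makes this go through is that $\cR(Q)D\phi(b)\in\cG$: indeed $D\phi(b)\in\cG$, and for $Q\in G_0^\reg$ the operator $\Ad_Q$ is real on $\cG_\perp$, so $\cR(Q)$ maps $\cG$ into $\cG$ (it annihilates $\cG_0$ and preserves $\cG_\perp$). Writing $W:=\cR(Q)D\phi(b)\in\cG$, I would run the chain: drop the $\cB$-projection on the right using that $D_2'F\in\cG$ and $\cG$ is isotropic; apply $\Ad$-invariance of $\langle-,-\rangle_\bI$ to pass to $\langle b D_2'F b^{-1}, W\rangle_\bI$; drop the $\cG$-part of $b D_2'F b^{-1}$ because $W\in\cG$ and $\cG$ is isotropic; and finally transfer $\cR(Q)$ across by its antisymmetry with respect to $\langle-,-\rangle_\bI$. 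That antisymmetry holds because $\cR(Q)$ is complex-linear and antisymmetric for the complex Killing form (it is a Cayley transform of the $\Ad$-orthogonal operator $\Ad_Q$), hence also for the imaginary part $\langle-,-\rangle_\bI$. This identity, resting on the isotropy of the Manin-triple summands $\cG$ and $\cB$ and on $W\in\cG$, is the only genuinely nonroutine step; the rest is bookkeeping.

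Having obtained \eqref{RED1+}, the evolution equation follows by identifying the vector field that reproduces the two surviving terms. By the definition of $D_1 F$, the field $\dot{Q}=(D\phi(b))_0 Q$ contributes $\langle D_1 F, (D\phi(b))_0\rangle_\bI$, which equals $\langle D_1 F, D\phi(b)\rangle_\bI$ since $D_1 F\in\cB_0$ pairs only with the $\cG_0$-part of $D\phi(b)$. By the definition \eqref{derB} of the right-derivative $D_2'F$, the field $\dot{b}=b\,(b^{-1}(\cR(Q)D\phi(b))b)_\cB$ contributes $\langle D_2'F,\,(b^{-1}(\cR(Q)D\phi(b))b)_\cB\rangle_\bI$. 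Their sum is precisely the right-hand side of \eqref{RED1+}, so \eqref{REDeq1} reproduces $\{F,H\}_\red$ for every $F$, up to the harmless gauge ambiguity along $G_0$-orbits noted in the $T^*G$ discussion. I would close by observing that $\dot{Q}=(D\phi(b))_0 Q$ stays tangent to $G_0$ and $\dot{b}=b\eta$ with $\eta\in\cB$ stays tangent to $B$, so the vector field indeed lives on $\fM_0^\reg$.
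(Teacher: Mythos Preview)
Your proof is correct and follows essentially the same route as the paper's: both arguments use $D_1 H=0$ and the dressing-invariance identity $(b D'\phi(b)b^{-1})_\cB=0$ to kill two terms of \eqref{RED1}, then transfer $\cR(Q)$ across the remaining pairing by antisymmetry, and finally read off the vector field. Your write-up simply unpacks the antisymmetry step in more detail (via the isotropy of $\cG$ and the observation $\cR(Q)D\phi(b)\in\cG$), which the paper compresses into a single clause.
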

\begin{proof}
For the restriction $H$ of $\cH$ on $\fM_0^\reg$, we get
\be
( b D_2' H(Q,b) b^{-1})_\cB = ( b D'\phi(b) b^{-1})_\cB = 0,
\ee
because $\phi \in C^\infty(B)^G$, as was noted before \eqref{S6}.
Moreover, we have $D_1 H(Q,b) = 0$ and, due to the antisymmetry of $\cR(Q)$,
\be
\langle \cR(Q)(b D_2'F b^{-1})_\cB, D_2 H\rangle_\bI = -
\langle D_2'F(Q,b), \left( b^{-1} (\cR(Q) D\phi(b)) b\right)_\cB \rangle_\bI.
\ee
Thus we obtain \eqref{RED1+} from \eqref{RED1}.
Then we see that the derivative of $F$ given by $\{F,H\}_\red$ coincides
with the derivative along the integral curves of the evolution equation \eqref{REDeq1},
which is the very justification of this equation.
Of course, the evolution equation on $\fM_0^\reg$ can be changed by adding any vector field that
vanishes upon projection on $\fM^\reg/G \simeq \fM^\reg_0/\fN$.
\end{proof}

Let us recall that the manifold $\fP$ \eqref{fP} can serve as a model of $B$ by means
of the diffeomorphism $\nu$ \eqref{nu}. For any function $\phi$ on $B$ we introduce the corresponding
function $\tilde \phi$ on $\fP$ by the definition
\be
\tilde \phi(L) = \phi(b)\quad \quad \hbox{with} \quad L :=\nu(b) = b b^\tau.
\label{Ldef}\ee
Then we define the derivative $\cD\tilde \phi(L)\in \cG$ by
\be
\langle \cD\tilde \phi(L), X\rangle_\bI = \dt \tilde \phi(e^{tX} L e^{t X^\tau}), \qquad \forall X\in \cB,
\ee
where used the notation \eqref{taumap}.
This implies the equality
\be
\cD\tilde \phi(L) = D\phi(b)
\quad\hbox{for}\quad L= b b^\tau.
\ee

\begin{prop}\label{prop:REDeq1+}
In terms of the variables $(Q,L)\in G_0^\reg \times \fP$ introduced in \eqref{Ldef}, the reduced evolution equation
\eqref{REDeq1} takes the form
\be
\dot Q = (\cD \tilde \phi(L))_0 Q, \qquad
\dot{L} = [ \cR(Q)(\cD \tilde \phi(L)), L],
\label{REDeq1+}\ee
which generalize the spin Sutherland evolution equation \eqref{redeq1}.
\end{prop}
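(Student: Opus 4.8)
The plan is to push the evolution equation \eqref{REDeq1} through the diffeomorphism $\nu$ of \eqref{nu} by differentiating $L=bb^\tau$ directly. The $Q$-equation needs nothing new: since $\cD\tilde\phi(L)=D\phi(b)$ for $L=bb^\tau$ (the identity recorded just before the statement), the equation $\dot Q=(D\phi(b))_0 Q$ of \eqref{REDeq1} already reads $\dot Q=(\cD\tilde\phi(L))_0 Q$ after this substitution. So all the work is in the $L$-equation.

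For the second equation I would set $Y:=\cR(Q)D\phi(b)=\cR(Q)\cD\tilde\phi(L)$, noting that $Y\in\cG$ because $Q\in G_0^\reg$ makes $\cR(Q)$ preserve $\cG$ while $D\phi(b)$ is $\cG$-valued. Writing the second line of \eqref{REDeq1} as $\dot b=bZ$ with $Z:=(b^{-1}Yb)_\cB$, I differentiate $L=bb^\tau$ to get $\dot L=\dot b\,b^\tau+b\,\dot b^\tau$. Since $K\mapsto K^\tau$ is the (matrix-adjoint) anti-automorphism of \eqref{taumap}, differentiation commutes with $\tau$, so $\dot b^\tau=(\dot b)^\tau=Z^\tau b^\tau$, which yields the compact intermediate form $\dot L=b(Z+Z^\tau)b^\tau$.

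The crux is to remove the $\cB$-projection hidden in $Z$ at the cost of completing a commutator. Putting $W:=b^{-1}Yb$ and decomposing $W=W_\cG+W_\cB$ per \eqref{cGdec}, one has $Z=W_\cB$. Because $Y\in\cG$, the $\cG$-component $W_\cG$ lies in $\cG$, and the key fact $X^\tau=-X$ for $X\in\cG$ forces $W_\cG+(W_\cG)^\tau=0$; hence $Z+Z^\tau=W_\cB+(W_\cB)^\tau=W+W^\tau$. Finally I would identify $bWb^\tau=Ybb^\tau=YL$, and, using $Y^\tau=-Y$ together with $(b^{-1})^\tau=(b^\tau)^{-1}$, compute $bW^\tau b^\tau=-bb^\tau Y=-LY$, so that $\dot L=b(W+W^\tau)b^\tau=[Y,L]=[\cR(Q)\cD\tilde\phi(L),L]$, as claimed.

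I expect the only genuine subtlety to be the bookkeeping of the $\tau$-operation: one must be careful that $\dot b^\tau$ really means $(\dot b)^\tau$ with $\tau$ the group anti-automorphism (legitimate in the matrix conventions of \eqref{taumap}), and that the $\cG$-component cancels exactly rather than merely symbolically. A useful consistency check to record is $([Y,L])^\tau=[Y,L]$, which confirms that the resulting velocity is tangent to $\fP$, as it must be for \eqref{REDeq1+} to be a well-posed equation on $G_0^\reg\times\fP$.
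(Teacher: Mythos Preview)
Your proof is correct and follows essentially the same route as the paper: both differentiate $L=bb^\tau$, use that $\cR(Q)D\phi(b)\in\cG$, and exploit $X^\tau=-X$ for $X\in\cG$ to make the $\cG$-projection terms cancel, leaving the commutator $[Y,L]$. Your packaging via the identity $Z+Z^\tau=W+W^\tau$ (because $(W_\cG)^\tau=-W_\cG$) is a slightly cleaner formulation of exactly the cancellation the paper writes out term by term.
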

\begin{proof}
Let us put $X:= \cR(Q)(\cD \tilde \phi(L))$, which belongs to $\cG$, and note that
\be
\left( b^{-1} (\cR(Q) D \phi(b)) b\right)_\cB = b^{-1} X b - (b^{-1} X b)_\cG.
\ee
Then, starting from \eqref{REDeq1}, we get
\be
\dot{L} = \dot{b} b^\tau + b {\dot{b}}^\tau =
Xb b^\tau - b (b^{-1} X b)_\cG b^\tau + b (b^\tau X^\tau - ((b^{-1} X b)_\cG)^\tau b^\tau) = [X, L],
\ee
because $Y^\tau = -Y$ for all $Y\in \cG$.
\end{proof}

It is an interesting exercise to recast the Poisson bracket \eqref{A2T} and its reduced version \eqref{A2T+} in terms
of the models $G\times \fP$ and $G_0^\reg \times \fP$ of $\fM$ and $\fM_0^\reg$.

\subsubsection{Reduced integrability and interpretation as deformed spin Sutherland models.}

Let us define the smooth,  $G$-equivariant map $\Psi_3: \fM \to \fP \times \fP$ by
\be
\Psi_3(g,b) := (\tilde L, L) \quad\hbox{with}\quad L= bb^\dagger \quad\hbox{and}\quad \tilde L := g^{-1} L g,
\label{Psi3}\ee
where $\eta \in G$ acts on $\fP \times \fP$ by conjugating  both components of $(a,b)\in \fP\times \fP$.
Then, for any  function $\chi\in C^\infty(\fP \times \fP)^G$, $\Psi_3^* (\chi)\in C^\infty(\fM)^G$
gives a smooth, $G$-invariant constant of motion.
Recalling that $\fP = \exp(\ri \cG)$, we see the close analogy with the constants of motion
observed in the cotangent bundle case (cf.~equation \eqref{Psi1}).
Thus, degenerate integrability on generic symplectic leaves of $\fM/G$ should hold in our present case as well.
We do not repeat the arguments of Section \ref{S:Sec2.1}, only make two remarks.
First,  $\ell=\rank(\cG)$ Casimir functions on $\fM/G$ arise
from the functions of the form $\phi \circ \Lambda \circ m^{-1}$, where $m: M\to \fM$ is given in \eqref{mMfM},
$\Lambda: M\to B$ is the moment map  \eqref{mom}, and $\phi\in C^\infty(B)^G$.
Second,
  $\pi_2^*(C^\infty(B)^G)$ gives rise to $\ell$ generically independent
Hamiltonians on $\fM/G$. These Hamiltonians are the $G$-invariant functions of $L$,
and they generically remain independent after fixing the just mentioned Casimir functions.

Let us consider a dressing orbit $\cO_B$ of $G$ in $B$, i.e., a symplectic leaf in the Poisson--Lie group $B$.
It is known from general theory \cite{KS,Lu,STS+} that the quotient spaces $\Lambda^{-1}(\cO_B)/G$ are Poisson subspaces of $M/G\simeq \fM/G$.
(They are stratified symplectic spaces  in general, which are unions of a dense open symplectic leaf and lower-dimensional strata \cite{OR,SL,Sn}.)
By using the Poisson--Lie version of symplectic reduction, we developed a detailed description of these
subspaces in \cite{F1}. Now, we translate the result into our present Poisson reduction setting.

Let $G_0^o$ be an arbitrarily chosen connected\footnote{
The  closure of $G_0^o$ in $G_0$ is homeomorphic to the space of conjugacy classes of $G$,
and is also homeomorphic to a convex polytope in $\cG_0$, a so-called Weyl alcove \cite{DK}.}
 component of $G_0^\reg$. Denote by $B_>$ the maximal nilpotent subgroup of $B$
associated with the subalgebra $\cB_> = \cB \cap \cG_>^\bC$ \eqref{cBtriang}, and consider also
$B_0:= \{ e^p \mid p \in \ri \cG_0\}$.
For arbitrarily given $Q\in G_0^o$ and $S_+ \in B_>$, the equation
\be
Q^{-1} b_+^{-1} Q b_+ S_+ = \1_B
\ee
admits a unique solution for $b_+\in B_>$, which defines the function $b_+(Q,S_+)$.
Then, all the elements $(Q,b) \in \fM_0^\reg$ with $Q\in G_0^o$ can be uniquely written in the form
\be
b= e^{p} b_+(Q,S_+)
\quad \hbox{with}\quad (Q,p, S_+) \in G_0^o \times \ri \cG_0 \times B_>,
\ee
and this induces the identification
\be
\fM^\red/G \simeq \fM_0^\reg/\fN \simeq  G_0^o \times \ri \cG_0 \times (B_>/G_0).
\ee
In this parametrization of the Poisson quotient the components of $q$ in $Q=e^{\ri q}$ and $p$ form canonically conjugate pairs,
and $S_+ \in B_>$ is a `collective spin degree of freedom' that decouples from $q$ and $p$ under the
reduced Poisson bracket.
The space $B_>/G_0$ represents the reduction $B$ with respect to $G_0<G$, at the zero value
of the classical moment map that generates the conjugation action of $G_0$ on $B$.
The `main reduced Hamiltonians' are obtained by taking the trace of
\be
L(Q,p,S_+) = e^p b_+(Q,S_+) b_+(Q,S_+)^\tau e^p
\label{defLax}\ee
 in the fundamental irreducible representations of $G^\bC$.
 In \cite{F1}, the structure of
$b_+(Q,S_+)$ was  elaborated (for $G=\mathrm{U}(n)$  even its fully
explicit formula was given),
and by using this it was shown that the Lax matrices $L(Q,p,S_+)$  and the main Hamiltonians
 of the models at issue are deformations of the Lax matrices
\eqref{Jpar}
and main Hamiltonians
of the spin Sutherland models \eqref{spinSuthHam}.
For the details of these results, one can consult \cite{F1}.

\subsection{The duals of the deformed spin Sutherland models}

Now, we describe the reduction of the integrable system of
Proposition \ref{prop:sol2} by utilizing that any element of $B$ can be
transformed into the subgroup $B_0 = B \cap G^\bC_0$ by the dressing action of $G$.

Let us introduce
\be
B_0^\reg:= B_0 \cap G^\bC_{0,\reg}
\ee
and denote by $B^\reg$ the union of the $G$-orbits in $B$ that intersect $B_0^\reg$.
Then define
\be
\FMpreg:= \{ (g,b) \in \fM \mid  b\in B^\reg \}, \quad \fMpreg0 := \{ (g,\Gamma) \in \fM \mid \Gamma\in B_0^\reg \}.
\ee
We remark that all powers of $\Gamma\in B_0^\reg$ belong to $B_0^\reg$.
Similarly to \eqref{isom1+}, the restriction of functions provides the isomorphism
\be
C^\infty(\fM'^\reg)^G \Longleftrightarrow  C^\infty(\fMpreg0)^\fN.
\label{isom2+}\ee
Our aim is to find the formula for the Poisson bracket on $C^\infty(\fMpreg0)^\fN$ induced by
this isomorphism.
The derivation follows the steps of the previous section, but it is slightly more complicated.
In preparation, we introduce  $\varrho(\Gamma) \in \End(\cG^\bC)$ by
\be
\varrho(\Gamma)(X) = (\sinh \ad_\gamma)^{-1}(X_\perp),\qquad \forall \Gamma = e^\gamma \in B_0^\reg,\,\, X=(X_0+X_\perp)\in \cG^\bC.
\label{vrho}\ee
This is well-defined\footnote{In fact, $\cG^\bC_\perp$ is spanned by
the root vectors $E_\alpha$ and $\alpha(\gamma)$ is a nonzero real number for $e^\gamma \in B_0^\reg$.
See Appendix \ref{sec:A}.} due to the definition of $B_0^\reg$.
Note that $\varrho(\Gamma)$ vanishes on $\cG^\bC_0$ and it maps $\cB_>$ to itself.
Below we shall apply the operator $\cR(\Gamma^2)$ \eqref{RQ2}, which can also be written as
\be
\cR(\Gamma^2)(X) = \frac{1}{2} (\coth \ad_\gamma) (X_\perp).
\label{RGam}\ee
For any $F\in C^\infty(\fMpreg0)$, the derivative $D_2 F(g,\Gamma)\in \cG_0$ is determined by
\be
\langle D_2 F(g,\Gamma), X_0 \rangle_\bI = \dt F(g, e^{t X_0}  \Gamma), \qquad \forall X_0\in \cB_0,
\ee
and the $\cB$-valued derivatives
$D_1 F$ and $D'_1 F$ are determined  analogously to \eqref{derG}.

Let $F\in C^\infty(\fMpreg0)^\fN$ be the restriction of $\cF \in C^\infty(\FMpreg)^G$.
Then we obviously have
\be
D_1 \cF(g,\Gamma) = D_1 F(g,\Gamma),\,\,
D_1' \cF(g,\Gamma) = D_1' F(g,\Gamma),
\qquad
 D_2\cF(g,\Gamma)_0 = D_2' \cF(g,\Gamma)_0 = D_2 F(g,\Gamma),
\label{cFandF1}\ee
and the residual $G_0$-invariance of $F$ implies
\be
D_1 F(g,\Gamma)_0 = D_1'F(g,\Gamma)_0.
\ee
The full expression of $D_2\cF(g,\Gamma)$ through the derivatives of $F$ is given by the next lemma.

\begin{lem}  \label{LemDers}
Let $F\in C^\infty(\fMpreg0)^\fN$ be the restriction of $\cF \in C^\infty(\FMpreg)^G$.
At any fixed $(g,\Gamma)\in \fMpreg0$,
 put
\be
X_0:= D_2 F\in \cG_0, \quad Y:= (D_1' F - D_1 F)\in \cB_> .
\ee
Then, the invariance condition \eqref{crucid2} reads
\be
(\Gamma D_2' \cF \Gamma^{-1})_\cB = Y,
\label{tosolve}  \ee
and this implies  the formula
 \be
 D_2' \cF = X_0 + \frac{1}{2} \varrho(\Gamma) ( Y + Y^\tau).
 \ee
Furthermore, we have
\be
\Gamma \cD_2'\cF \Gamma^{-1} =  X_0 + \frac{1}{2} (Y + Y^\tau) +  \cR(\Gamma^2) ( Y + Y^\tau),
\ee
and
\be
D_2 \cF  \equiv (\Gamma D_2' \cF \Gamma^{-1})_\cG =  X_0 +  \frac{1}{2} ( Y^\tau - Y) +  \cR(\Gamma^2) ( Y + Y^\tau).
\ee
Here, $\Gamma = e^\gamma\in B_0^\reg$ and the operators  \eqref{vrho},  \eqref{RGam} are employed.
For the definition of $Y^\tau$, see equation \eqref{taumap} and Appendix \ref{sec:A}.
\end{lem}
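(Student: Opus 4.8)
\emph{Overview of the approach.} The plan is to run the entire computation in the triangular root-space decomposition \eqref{triang}, exploiting that for $\Gamma = e^\gamma \in B_0^\reg$ the operator $\ad_\gamma$ acts diagonally on the root vectors spanning $\cG^\bC_\perp$, with real and (by the very definition of $B_0^\reg$) nonzero eigenvalues $\alpha(\gamma)$. This is exactly what makes $\varrho(\Gamma)$ \eqref{vrho} and $\cR(\Gamma^2)$ \eqref{RGam} well defined, and it renders every operator in sight diagonal in the roots, so that the whole lemma reduces to elementary hyperbolic-function identities together with careful bookkeeping of the projections attached to \eqref{cGdec}.

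First I would specialise the invariance identity \eqref{crucid2} to the point $(g,\Gamma)$. Using the relations \eqref{cFandF1} and the residual $G_0$-invariance recorded just above the lemma, the $\cB_0$-parts of $D_1 F$ and $D_1'F$ cancel, so that $Y = D_1'F - D_1 F$ indeed lies in $\cB_>$, and \eqref{crucid2} collapses to $(\Gamma D_2'\cF \Gamma^{-1})_\cB = Y$, which is the asserted form \eqref{tosolve}. Next I would solve \eqref{tosolve} for $D_2'\cF$. By \eqref{cFandF1} its $\cG_0$-component is forced to be $X_0$, so writing $D_2'\cF = X_0 + W$ with $W := (D_2'\cF)_\perp \in \cG_\perp$, and noting that $\Ad_\Gamma$ fixes $\cG_0$ while preserving $\cG^\bC_\perp$, equation \eqref{tosolve} becomes the linear equation $(\Gamma W \Gamma^{-1})_\cB = Y$ on $\cG_\perp$.

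To invert this I would expand $W = \sum_{\alpha>0}(w_\alpha E_\alpha - w_\alpha^* E_{-\alpha})$ in root vectors normalised as in Appendix \ref{sec:A} and apply $\Ad_\Gamma = e^{\ad_\gamma}$. The one delicate point is the behaviour of the projection onto $\cB$ along \eqref{cGdec}: since $\cG^\bC_> \subset \cB$ while $E_{-\alpha} = (E_{-\alpha} - E_\alpha) + E_\alpha$ has $\cG$-part $E_{-\alpha}-E_\alpha$, the projection acts by $(cE_\alpha)_\cB = cE_\alpha$ but \emph{conjugate}-linearly on the lower triangular part, $(cE_{-\alpha})_\cB = c^* E_\alpha$. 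Matching the resulting $E_\alpha$-coefficients to $Y = \sum_{\alpha>0} y_\alpha E_\alpha$ and using $e^{\alpha(\gamma)}-e^{-\alpha(\gamma)} = 2\sinh\alpha(\gamma)$ yields the unique solution $W = \tfrac12 \varrho(\Gamma)(Y+Y^\tau)$, where $Y^\tau \in \cG^\bC_<$ by \eqref{taumap}; this gives the second displayed formula.

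Finally I would read off the last two formulas by projection. For $\Gamma D_2'\cF \Gamma^{-1} = X_0 + \Ad_\Gamma W$ I would rewrite $\Ad_\Gamma \circ \tfrac12(\sinh\ad_\gamma)^{-1}$ via the scalar identity $e^t(\sinh t)^{-1} = 1 + \coth t$ (equivalently $\sinh t + \cosh t = e^t$), which together with \eqref{RGam} turns $\Ad_\Gamma W$ into $\tfrac12(Y+Y^\tau) + \cR(\Gamma^2)(Y+Y^\tau)$, the third formula. For $D_2\cF = (\Gamma D_2'\cF \Gamma^{-1})_\cG$ — the general left/right derivative relation $D\phi = (b D'\phi\, b^{-1})_\cG$ for functions on $B$ that was already used in the proof of Proposition \ref{prop:sol1} — I would project the third formula onto $\cG$: the term $\cR(\Gamma^2)(Y+Y^\tau)$ already lies in $\cG_\perp$, while $Y \in \cB_>$ has vanishing $\cG$-part and $(Y^\tau)_\cG = Y^\tau - Y$, so $(Y+Y^\tau)_\cG = Y^\tau - Y$, producing the fourth formula. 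The main obstacle is precisely this conjugate-linear bookkeeping of the $\cB$- and $\cG$-projections on $\cG^\bC_\perp$ — getting $(cE_{-\alpha})_\cB = c^* E_\alpha$ and $(Y+Y^\tau)_\cG = Y^\tau - Y$ exactly right — since once these are pinned down everything else is diagonal in the roots and purely computational.
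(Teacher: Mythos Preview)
Your proposal is correct and follows essentially the same route as the paper. The paper packages the ``conjugate-linear bookkeeping'' you single out into the coordinate-free formulas $V_\cB = V_> + (V_<)^\tau + V_0^\r$ and $V_\cG = V_0^\ri + V_< - (V_<)^\tau$ (valid for any $V\in\cG^\bC$) and then works directly with the operators $\sinh\ad_\gamma$, $\cosh\ad_\gamma$ rather than expanding in root vectors, but the content---specialising \eqref{crucid2}, inverting $(\Gamma W\Gamma^{-1})_\cB=Y$ via $\sinh\ad_\gamma$, using $e^{\ad_\gamma}=\cosh\ad_\gamma+\sinh\ad_\gamma$, and computing $(Y+Y^\tau)_\cG=Y^\tau-Y$---is identical.
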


\begin{proof}
Let us note  that any $V\in \cG^\bC$ can be decomposed as
\be
V = V_> + V_0 + V_< = V_> + V_0^\ri + V_0^\r + V_<, \quad V_0^\ri\in \cG_0,\, V_0^\r\in \cB_0,
\ee
and then
\be
V_\cB = V_> + (V_<)^\tau + V_0^\r,\qquad
V_\cG = V_0^\ri + V_< - (V_<)^\tau.
\label{Vdec}\ee
According to \eqref{tosolve}, we need to solve an equation of the form
\be
(\Gamma X \Gamma^{-1})_\cB = Y
\ee
for $X$, where $Y = Y_> \in \cB_>$ and $X =  (X_> + X_< + X_0^\ri) \in \cG$.
By using that $\Gamma^\tau = \Gamma$  and $X_> = - (X_<)^\tau$, we get
\be
(\Gamma X \Gamma^{-1})_\cB = X_0^\ri + \Gamma X_> \Gamma^{-1} - \Gamma^{-1} X_> \Gamma = X_0^\ri + 2 (\sinh \ad_\gamma) (X_>).
\ee
From here, we get
\be
X_> = \frac{1}{2} \varrho(\Gamma) (Y).
\ee
Since $X_< = - (X_>)^\tau$ and $(\ad_\gamma V)^\tau = - \ad_\gamma V^\tau$ for all $V\in \cG^\bC$, this implies
\be
X= X_0^\ri + \frac{1}{2} \varrho(\Gamma) (Y + Y^\tau).
\ee
Then we find
\be
\Gamma X \Gamma^{-1} =(\sinh\ad_\gamma + \cosh \ad_\gamma)(X)
= X_0^\ri + \frac{1}{2} (Y + Y^\tau) + \cR(\Gamma^2)( Y + Y^\tau).
\ee
Notice that $\cR(\Gamma^2) (Y + Y^\tau) \in \cG$.
Consequently, by applying \eqref{Vdec} to $(Y+ Y^\tau)$ with $Y = Y_>$, we obtain
\be
(\Gamma X \Gamma^{-1})_\cG = X_0^\ri + \frac{1}{2}(Y^\tau - Y) + \cR(\Gamma^2)(Y+ Y^\tau).
\ee
By taking $X= \cD_2' \cF(g,\Gamma)$, the proof is finished.
 \end{proof}

Let us recall that the functions $f$ on $G$ have  the $\cG$-valued
derivatives $\nabla f$ and $\nabla' f$ defined in \eqref{nabG},
and (as seen from \eqref{Dnab}) these are related to the $\cB$-valued derivatives $Df$ and $D'f$ by
\be
\ri \nabla f(g) = \frac{1}{2} ( D f(g) + (D f(g))^\tau),\,\,\,
\ri \nabla' f(g)  = \frac{1}{2} ( D' f(g) + (D' f(g))^\tau).
\label{gnab}\ee
In consequence  of \eqref{Ri} and \eqref{Dnab}, one also has
\be
R^\ri \nabla f(g)= (-\ri \nabla f(g))_\cG = \frac{1}{2} (  D f(g) - (D f(g))^\tau ),\,\,\,
R^\ri \nabla' f(g) = (- \ri \nabla' f(g))_\cG = \frac{1}{2} ( D' f(g) -(D' f(g))^\tau ).
\label{gnab+}\ee
These relations are applied below to functions on $\fMpreg0=G \times B_0^\reg$, regarding
the derivatives with respect to the first variable.

\begin{thm}\label{thm:RED2}  Let $F,H\in C^\infty(\fMpreg0)^\fN$ be the restrictions of the functions
$\cF, \cH \in C^\infty(\FMpreg)^G$,
respectively, and define
\be
\{ F,H\}_\red'(g,\Gamma) := \{ \cF, \cH\}(g,\Gamma), \qquad \forall (g,\Gamma)\in \fMpreg0.
\ee
Then, equation  \eqref{A2T} implies the following formula for this reduced Poisson bracket:
\be
 \begin{aligned}
\{F,H\}'_\red(g,\Gamma) & = \langle \nabla_1 F, D_2 H\rangle_\cG - \langle \nabla_1 H , D_2 F\rangle_\cG \\
&+  2\langle \nabla_1' F , \cR(\Gamma^2) (\ri \nabla_1' H) \rangle_\cG - 2\langle \nabla_1 F, \cR(\Gamma^2) (\ri \nabla_1 H)\rangle_\cG,
 \end{aligned}
\label{RED2}\ee
where $\nabla_1 F$ and $\nabla_1 H$ denote the $\cG$-valued derivatives defined similarly to \eqref{nabG}, and $\cR(\Gamma^2)$ \eqref{RGam} is used.
\end{thm}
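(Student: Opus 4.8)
The plan is to proceed exactly as in the proofs of Theorems \ref{thm:defSuth} and \ref{thm:spinRS}: evaluate \eqref{A2T} at $(g,\Gamma)\in\fMpreg0$ and replace every unreduced derivative of $\cF$ and $\cH$ by derivatives of the restrictions $F$ and $H$. The elementary relations \eqref{cFandF1} cover the first-variable derivatives and the $\cG_0$-components, while Lemma \ref{LemDers} supplies the missing ingredient, namely the full $\cG$-valued derivative $D_2\cF = X_0^F + \tfrac12((Y^F)^\tau - Y^F) + \cR(\Gamma^2)(Y^F + (Y^F)^\tau)$, with $X_0^F:=D_2 F\in\cG_0$ and $Y^F:=D_1'F-D_1 F\in\cB_>$ (and similarly for $\cH$). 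The only term of \eqref{A2T} not directly addressed by the Lemma, $-\langle D_1'\cF, g^{-1}(D_1\cH)g\rangle_\bI$, I would rewrite at once through the Sklyanin-type identity \eqref{GPL} as $\langle\nabla_1'F, R^\ri\nabla_1'H\rangle_\cG - \langle\nabla_1 F, R^\ri\nabla_1 H\rangle_\cG$.

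First I would collapse the term $\langle D_2'\cF, \Gamma^{-1}(D_2\cH)\Gamma\rangle_\bI$. Transferring $\Gamma$ by invariance of $\langle-,-\rangle_\bI$ and using that $D_2\cH\in\cG$ is isotropically paired, only the $\cB$-part of $\Gamma D_2'\cF\Gamma^{-1}$ survives; by \eqref{tosolve} this part equals $Y^F$, so the term becomes $\langle D_1'F-D_1 F, D_2\cH\rangle_\bI$. Combined with $\langle D_1\cF, D_2\cH\rangle_\bI=\langle D_1 F, D_2\cH\rangle_\bI$, the first and third terms of \eqref{A2T} merge into $\langle D_1'F, D_2\cH\rangle_\bI$, and the whole reduced bracket is cut down to $\langle D_1'F, D_2\cH\rangle_\bI - \langle D_1 H, D_2\cF\rangle_\bI - \langle D_1'F, g^{-1}(D_1 H)g\rangle_\bI$.

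The central technical device I would establish next is the conversion $\langle D_1 F, Z\rangle_\bI = \langle\nabla_1 F, Z\rangle_\cG$, valid for every $Z\in\cG$ (and likewise with primes), where $\nabla_1 F$ is fixed by $\ri\nabla_1 F=\tfrac12(D_1 F+(D_1 F)^\tau)$ as in \eqref{gnab}. This rests on the isotropy of $\cG$ and $\cB$, the orthogonality of $\cG_0$ and $\cG_\perp$, and the fact that the conjugate-linear Cartan involution $\theta$ obeys $\langle X^\tau, Y^\tau\rangle_\bI = -\langle X, Y\rangle_\bI$. Since the $\cG_0$-parts $X_0^H=(D_2\cH)_0$ and $X_0^F=(D_2\cF)_0$ lie in $\cG\subset\cG^\bC$, this conversion turns $\langle D_1'F, X_0^H\rangle_\bI$ and $-\langle D_1 H, X_0^F\rangle_\bI$ into $\langle\nabla_1'F, D_2 H\rangle_\cG$ and $-\langle\nabla_1 H, D_2 F\rangle_\cG$; and because $\nabla_1'F-\nabla_1 F\in\cG_\perp$ is Killing-orthogonal to $D_2 H\in\cG_0$, the first of these equals $\langle\nabla_1 F, D_2 H\rangle_\cG$. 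This produces precisely the first two terms of \eqref{RED2}.

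The hard part will be the remaining contribution. Applying the same conversion to the perpendicular part $(D_2\cH)_\perp=\tfrac12((Y^H)^\tau-Y^H)+\cR(\Gamma^2)(Y^H+(Y^H)^\tau)$, it collapses to $\langle\nabla_1'F, (D_2\cH)_\perp\rangle_\cG - \langle\nabla_1 H, (D_2\cF)_\perp\rangle_\cG$ plus the two $R^\ri$ terms. I would then feed in the two consequences of \eqref{gnab} and \eqref{gnab+}, namely $\tfrac12(Y^H+(Y^H)^\tau)=\ri(\nabla_1'H-\nabla_1 H)$ and $\tfrac12((Y^H)^\tau-Y^H)=R^\ri(\nabla_1 H-\nabla_1'H)$. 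The first yields the desired $2\cR(\Gamma^2)(\ri\nabla')$-structure; the second exhibits the antisymmetric part of $D_2\cH$ (and of $D_2\cF$) as a pure $R^\ri$ term, which must cancel the Sklyanin terms coming from \eqref{GPL}. The genuine obstacle is to verify these cancellations cleanly: the $R^\ri$ terms annihilate each other once the skew-symmetry of $R^\ri$ is invoked, and the surviving $\cR(\Gamma^2)$ cross-terms pair off by the antisymmetry \eqref{Rrasym} of $\cR(\Gamma^2)$ together with its commuting with multiplication by $\ri$ on $\cG^\bC_\perp$, leaving only the two diagonal $\cR(\Gamma^2)(\ri\nabla)$ contributions of \eqref{RED2}. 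Tracking the $\ri$-factors and the $\tau$-signs through this linear (not quadratic) recombination is where the computation demands care.
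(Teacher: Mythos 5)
Your proposal is correct and follows essentially the same route as the paper's proof: evaluate \eqref{A2T} on the slice, feed in \eqref{cFandF1} and Lemma \ref{LemDers}, and cancel the residual Sklyanin contribution of \eqref{GPL} using the antisymmetry of $R^\ri$ and of $X\mapsto\cR(\Gamma^2)(\ri X)$. The only differences are bookkeeping ones — you dispose of the $\langle D_2'\cF,\Gamma^{-1}(D_2\cH)\Gamma\rangle_\bI$ term via isotropy of $\cG$ plus \eqref{tosolve} rather than by direct substitution of the Lemma's formula, and you convert to $\cG$-valued derivatives earlier — and the final recombination you outline does close up exactly as you claim.
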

\begin{proof}
We have to evaluate the formula \eqref{A2T} at $(g,\Gamma)\in G \times B_0^\reg$
for invariant functions.
 By using the equalities \eqref{cFandF1} and Lemma \ref{LemDers}
we find
\be
\begin{aligned}
&\langle D_1\cF , D_2\cH \rangle_\bI
-\langle D_1 \cH , D_2\cF \rangle_\bI  =\langle D_1 F, D_2 H \rangle_\bI - \langle D_1 H, D_2 F \rangle_\bI \\
& \quad +  \langle D_1 F, (\cR(\Gamma^2) + \frac{1}{2}\id) (D_1' H - D_1 H)^\tau \rangle_\bI -
 \langle D_1 H, (\cR(\Gamma^2) + \frac{1}{2}\id)(D_1' F - D_1 F)^\tau \rangle_\bI.
\end{aligned}
\ee
We took into account that $\cR(\Gamma^2)$ maps $\cB$ to itself and that $\langle X,Y\rangle_\bI=0$
for any $X,Y\in \cB$.
Next, direct substitution gives
\be
\langle \Gamma D_2' \cF \Gamma^{-1}, D_2 \cH \rangle_\bI =  \langle D_1'F - D_1F , (\cR(\Gamma^2) + \frac{1}{2}\id) (D_1' H - D_1 H)^\tau\rangle_\bI.
\ee
We now collect terms, and in doing so employ the antisymmetry of $\cR(\Gamma^2)$ together with the properties
\be
\langle U^\tau, V^\tau \rangle_\bI = - \langle U, V \rangle_\bI, \quad \cR(\Gamma^2)(U^\tau) = -(\cR(\Gamma^2)U)^\tau,
\quad \forall U,V\in \cG^\bC,
\ee
which follow from the definitions. This gives
\be
\begin{aligned}
&\langle D_1\cF , D_2\cH \rangle_\bI
-\langle D_1 \cH , D_2\cF \rangle_\bI  + \langle \Gamma D_2' \cF \Gamma^{-1}, D_2 \cH \rangle_\bI =
\langle D_1 F, D_2 H \rangle_\bI - \langle D_1 H , D_2 F \rangle_\bI\\
& \quad + \langle D_1' F, \cR(\Gamma^2) (D_1'H)^\tau \rangle_\bI  -  \langle D_1 F, \cR(\Gamma^2) (D_1 H)^\tau \rangle_\bI
+ \frac{1}{2} \langle D_1' F, (D_1'H)^\tau \rangle_\bI   - \frac{1}{2} \langle D_1 F, (D_1 H)^\tau \rangle_\bI.
\end{aligned}
\label{2sum}\ee
Since $D_2 H\in \cG_0$, we have
\be
\langle D_1 F, D_2 H \rangle_\bI = \langle (D_1 F)_0, D_2 H\rangle_\bI
 = \langle \ri (\nabla_1 F)_0, D_2 H\rangle_\bI = \langle \nabla_1 F, D_2 H\rangle_\cG.
\label{zerograde}\ee
Referring \eqref{gnab+}, we can write
\be
\langle D_1' F, \cR(\Gamma^2) (D_1'H)^\tau \rangle_\bI = \langle D_1' F, \cR(\Gamma^2) ((D_1'H)^\tau + D_1'H)  \rangle_\bI
 = 2 \langle \nabla_1' F, \cR(\Gamma^2) (\ri \nabla_1' H)  \rangle_\cG,
\ee
where the last step holds since $\langle D_1' F, X\rangle_\bI = \langle \nabla_1' F, X\rangle_\cG$ for all $X\in \cG$,
and $\cR(\Gamma^2) (\ri X) \in \cG$ for all $X\in \cG$.
Therefore, the first 4 terms of \eqref{2sum} yield the right-hand side of \eqref{RED2}.
The rest of the terms cancel, because (by \eqref{gnab+}) we have
\be
\frac{1}{2} \langle D_1' F, (D_1'H)^\tau \rangle_\bI   - \frac{1}{2} \langle D_1 F, (D_1 H)^\tau \rangle_\bI
= \langle \nabla_1 F, R^\ri \nabla_1 H \rangle_\cG - \langle \nabla_1' F, R^\ri \nabla_1' H \rangle_\cG,
\label{PLGid}\ee
and  this is just the opposite of the remaining term  $-\langle(  D_1' F , g^{-1} (D_1 H) g \rangle_\bI $ of \eqref{A2T}.
This holds due to the identity \eqref{GPL}.
\end{proof}

\begin{rem}
The formula \eqref{RED2} defines a Poisson bracket not only on the ring $C^\infty(\fMpreg0)^\fN$, but on the
larger ring
$C^\infty(\fMpreg0)^{G_0}$, too.
The proof shows that one may rewrite it in the alternative form
\be
 \begin{aligned}
\{F,H\}'_\red(g,\Gamma) & = \langle \ri (\nabla_1 F)_0, D_2 H\rangle_\bI - \langle \ri (\nabla_1 H)_0, D_2 F\rangle_\bI\\
&+  \langle D_1' F , \cR(\Gamma^2) (D_1' H)^\tau )\rangle_\bI  - \langle D_1 F , \cR(\Gamma^2) (D_1 H)^\tau )\rangle_\bI.
 \end{aligned}
\label{RED2+}\ee
One can also use the identity \eqref{PLGid} to get an alternative formula the Poisson--Lie struture \eqref{GPL} on $G$.
It may be worth noting that
\be
\langle \nabla_1 F, D_2 H\rangle_\cG= \langle \ri \nabla_1 F, D_2 H\rangle_\bI =\langle \ri (\nabla_1 F)_0, D_2 H\rangle_\bI=
\langle (D_1 F)_0, D_2 H\rangle_\bI = \langle D_1 F, D_2 H\rangle_\bI,
\ee
\ since $D_2 H\in \cG_0$.
 \end{rem}

\begin{prop}\label{prop:REDeq2}
If $\cH = \pi_1^* h$ with $h\in C^\infty(G)^G$, i.e., $\cH(g,b) = h(g)$, then formula
\eqref{RED2} of the reduced Poisson bracket becomes
\be
\{F,H\}'_\red(g,\Gamma)  = - \langle  D_2 F(g,\Gamma), \ri (\nabla h(g))_0)\rangle_\bI
+  2\langle \nabla_1' F(g, \Gamma) - \nabla_1 F(g,\Gamma), \cR(\Gamma^2) (\ri \nabla h(g))\rangle_\cG.
\label{RED2++}\ee
Introducing  the new variable $P:= \Gamma^2$, the  reduced evolution equation induced by the
Hamiltonian $H$
on $\fMpreg0$ can be written as
\be
\dot{g} =  2 [ g, \cR(P) (\ri \nabla h(g))], \qquad \dot{P} = - 2 \ri (\nabla h(g))_0 P,
\label{REDeq2}\ee
which is quite analogous to \eqref{REDeq1+}.
\end{prop}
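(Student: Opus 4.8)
The plan is to specialize the general reduced bracket \eqref{RED2} to the case $H=\pi_1^*h$ and then read off the evolution equation \eqref{REDeq2} by matching the resulting expression against a directional derivative of $F$. First I would record the derivatives of $H$. Since $\cH(g,b)=h(g)$ is independent of the second argument, $D_2H=0$, which annihilates the first term of \eqref{RED2}. The conjugation invariance of $h$ yields $\nabla_1 H=\nabla_1'H=\nabla h(g)$ (this is the identity $\nabla h=\nabla'h$ for $h\in C^\infty(G)^G$ used already in Section \ref{S:Sec2}), so the last two terms of \eqref{RED2} collapse to $2\langle \nabla_1'F-\nabla_1 F,\cR(\Gamma^2)(\ri\nabla h(g))\rangle_\cG$.

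The surviving second term of \eqref{RED2} is $-\langle \nabla h(g),D_2 F\rangle_\cG$. To bring it to the stated form I would use $\langle X,Y\rangle_\cG=\langle X,\ri Y\rangle_\bI$ together with the facts that $D_2 F\in\cG_0$ and that $\langle \cG_0,\ri\cG_\perp\rangle_\bI=0$; this gives $-\langle \nabla h(g),D_2 F\rangle_\cG=-\langle D_2 F,\ri(\nabla h(g))_0\rangle_\bI$, and combining with the previous paragraph reproduces \eqref{RED2++}.

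For the evolution equation I would set $Z:=\cR(\Gamma^2)(\ri\nabla h(g))$, which lies in $\cG$ (as established within the proof of Theorem \ref{thm:RED2}), and present each term of \eqref{RED2++} as a derivative of $F$ along a curve. The key relation is $\nabla_1'F=g^{-1}(\nabla_1 F)g$, valid because $F(ge^{tX'},\Gamma)=F(e^{tgX'g^{-1}}g,\Gamma)$; it converts the $g$-dependent term into $2\langle\nabla_1 F,\,gZg^{-1}-Z\rangle_\cG=2\langle\nabla_1 F,[g,Z]g^{-1}\rangle_\cG$, which is precisely $\langle \nabla_1 F,\dot g\,g^{-1}\rangle_\cG$ for $\dot g=2[g,Z]$. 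For the remaining term, the defining property of $D_2 F$ identifies $-\langle D_2 F,\ri(\nabla h(g))_0\rangle_\bI$ with the $\Gamma$-directional derivative along $\dot\Gamma\,\Gamma^{-1}=-\ri(\nabla h(g))_0\in\cB_0$. Passing to $P=\Gamma^2$ and using that $B_0=\exp(\ri\cG_0)$ is abelian gives $\dot P=\dot\Gamma\,\Gamma+\Gamma\,\dot\Gamma=2(\dot\Gamma\,\Gamma^{-1})P=-2\ri(\nabla h(g))_0\,P$, which is \eqref{REDeq2}.

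The steps are all routine substitutions, so I do not expect a genuine obstacle; the only points requiring care are the bookkeeping between the two pairings $\langle-,-\rangle_\bI$ and $\langle-,-\rangle_\cG$ with their orthogonal decompositions, and the correct handling of the left/right first-variable derivatives through $\nabla_1'F=g^{-1}(\nabla_1 F)g$, so that the two $g$-terms fuse into the commutator $[g,Z]$. As in Propositions \ref{prop:REDeq1} and \ref{prop:redeq2}, the evolution vector field is fixed only up to vector fields tangent to the $G_0$-orbits in $\fMpreg0$, which is the content of the phrase \emph{can be written as} in the statement.
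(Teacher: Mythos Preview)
Your proposal is correct and follows exactly the approach the paper intends: the proposition is stated without an explicit proof and is meant as a direct specialization of Theorem \ref{thm:RED2}, just as you carry out, with the subsequent reading-off of the evolution equation handled in the same spirit as Propositions \ref{prop:redeq2} and \ref{prop:REDeq1}. Your bookkeeping between $\langle-,-\rangle_\cG$ and $\langle-,-\rangle_\bI$, the use of $\nabla h=\nabla'h$, the identification $\nabla_1'F=g^{-1}(\nabla_1 F)g$, and the passage from $\Gamma$ to $P=\Gamma^2$ via the abelianity of $B_0$ are all accurate.
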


As a consistency check, we verified that the reduced evolution equation \eqref{REDeq2}  results also by applying the projection method to the corresponding
unreduced evolution  equation \eqref{s11}. Of course, the evolution equation on $\fMpreg0$ is unique only up
to infinitesimal gauge transformations that do not change its eventual projection on
${\fM'}^\reg/G \simeq \fMpreg0/\fN$.

In the $G=\mathrm{SU}(n)$ case the reduced system characterized by Theorem \ref{thm:RED2} and Proposition \ref{prop:REDeq2} gives \cite{FK1}
a special real form of the trigonometric Ruijsenaars--Schneider system on a small
symplectic leaf of dimension $2(n-1)$ in $\fMpreg0/\fN$. Thus one may expect to obtain spin
Ruijsenaars--Schneider type systems on generic symplectic leaves.
However,  it is not
known how to introduce positions, momenta and spin variables
in such a way that would endow the reductions of the pullback invariants $\pi_1^*(C^\infty(G)^G)$
with a many-body  interpretation. This is analogous to the open problem
that exists in relation to the second kind of reduced systems obtained from $T^*G$.

\subsubsection*{Toward integrability after reduction.}
We here explain that the mechanism described around equation \eqref{Psired} is applicable in the present case, too.
For this, we use the original model  $(M,\{- ,- \}_+)$ \eqref{A1T} of the Heisenberg double, and define the map
$\Psi_4: M\to G^\bC$ by
\be
\Psi_4(K) := W(K) \equiv  \Lambda_L(K) \Xi_R(K) \Lambda_L(K)^{-1}
\ee
using the formula \eqref{W} and the definitions \eqref{XiLaT}.
We have seen that this map is constant along the Hamiltonian flows generated by the
pullback invariants $\Xi_R^*( C^\infty(G)^G)$.
The conjugation action of $G$ on $G^\bC$ is defined by the maps $\cC_\eta$,
\be
\cC_\eta(K) := \eta K \eta^{-1}, \qquad \forall K\in G^\bC,\, \eta\in G.
\label{cCeta}\ee
We wish to show that $\Psi_4$ is equivariant with respect to the quasi-adjoint action \eqref{cA1}  and the conjugation action \eqref{cCeta},
\be
\Psi_4 \circ \cA^1_\eta = \cC_\eta \circ \Psi_4, \qquad \forall \eta\in G.
\label{Psi4equivar}\ee
In order to derive this, notice from \eqref{cA1} that
\be
\Lambda_L(\cA_\eta^1(K)) = \Lambda_L (\eta \Lambda_L(K))
\quad\hbox{and}\quad
\Xi_R (\cA_\eta^1(K))  = \Xi_R(\eta \Lambda_L(K))^{-1} \Xi_R(K)\Xi_R(\eta \Lambda_L(K)).
\ee
Therefore, we obtain
\be
\begin{aligned}
\Psi_4 (\cA^1_\eta(K))& =  \Lambda_L (\eta \Lambda_L(K)) \Xi_R(\eta \Lambda_L(K))^{-1}
\Xi_R(K)\Xi_R(\eta \Lambda_L(K))\Lambda_L (\eta \Lambda_L(K))^{-1}\\
& = \eta \Lambda_L(K) \Xi_R(K) (\eta \Lambda_L(K))^{-1} = \eta \Psi_4(K) \eta^{-1}.
\end{aligned}
\ee
Let us then consider the ring of $G$-invariant real functions
\be
C^\infty(G^\bC)^G_{\mathrm{conj}}:= \{ F \in C^\infty(G^\bC)\mid F \circ \cC_\eta = F,\quad \forall \eta\in G\}.
\ee
For every $F\in C^\infty(G^\bC)^G_{\mathrm{conj}}$, the function $F \circ \Psi_4\in C^\infty(M)$ is a joint constant of motion for
the pullback invariants
$\Xi_R^*(C^\infty(G)^G)$, and we see from \eqref{Psi4equivar} that  this function is invariant with respect to the
quasi-adjoint action of $G$ on $M$.
The alternative formula (cf. \eqref{W})
\be
\Psi_4(K) = \Lambda(K) \Xi_L(K)^{-1}
\ee
shows that the above constants of motion contain the elements of $\Lambda^*(C^\infty(B)^G)$, where $\Lambda$ is the moment map \eqref{mom}.
All  constants of motion $F\circ \Psi_4$ descend to the reduced space $M/G$, and those that depend only on $\Lambda$
become numerical constants on the symplectic leaves of this (stratified) Poisson space.
Thus, relying on a straightforward counting argument, we expect that the pullback invariants $\Xi_R^*(C^\infty(G)^G)$
engender degenerate integrable systems on generic symplectic leaves in $M/G$.

By using the diffeomorphism $m: M\to \fM$ \eqref{mMfM}, one can transfer the above construction to the alternative framework based
on the unreduced phase space $\fM = G \times B$.

Finally, let us note that the two kinds of reduced systems described in this section are subject
to a similar duality relation that we outlined at the end of Section \ref{S:Sec2}.

\begin{rem}
Let us consider the function $\chi_\rho(K):= \tr_\rho(K)$ on $G^\bC$, where $\rho$ is some finite dimensional
irreducible representation of $G^\bC$. Then we obtain $\tr_\rho(W(K))= \tr_\rho(\Xi_R(K))$.
This shows that the constants of motion $F\circ \Psi_4$ contain the basic pullback invariants associated with
the real and imaginary parts of the characters of the irreducible representations of $G$.
\end{rem}

\section{Reduction of the quasi-Poisson double $G\times G$}
\label{S:Sec4}

Quasi-Hamiltonian manifolds \cite{AMM} and quasi-Poisson manifolds  \cite{AKSM} were introduced  primarily in order to
provide a purely finite dimensional construction of the symplectic and Poisson structures
of  moduli spaces of flat connections, which were originally obtained by infinite dimensional
symplectic reduction. Since then, interesting applications of these concepts came to light in several fields,
including the construction of finite dimensional integrable Hamiltonian systems \cite{CF,FFM,FK2,FeKlu}.
The content of this section is closely related to the work Arthamonov and Reshetikhin \cite{AR}, who
constructed degenerate integrable systems on moduli spaces of flat connections, working mostly in a complex
holomorphic setting.

Let us recall that a quasi-Poisson manifold is a $G$-manifold, here denoted $\cS$, equipped with a $G$-invariant bivector, $\Pi$, whose
key property is that the formula
\be
\{\cF, \cH\} := (d\cF \otimes d\cH)(\Pi)
\label{qPB1}\ee
defines a Poisson algebra structure \emph{on the space of invariant function} $C^\infty(\cS)^G$.
By the standard identification $C^\infty(\cS/G) \equiv C^\infty(\cS)^G$, this leads to a Poisson structure on the quotient space.
Defining the
\emph{quasi-Poisson bracket} of any smooth functions by \eqref{qPB1}, we may associate a \emph{quasi-Hamiltonian vector field} $V_{\cH}$ to any
$\cH\in S^\infty(\cS)$ by putting
\be
 V_{\cH}[\cF]:= \{ \cF,\cH \} .
\label{qVH}\ee
The vector field $V_\cH$ descends to $\cS/G$ if $\cH$ is $G$-invariant. This means that the process of taking the quotient by the $G$-action
works for quasi-Poisson manifolds in the same way as it does for Poisson manifolds.
The quotient space is known to be a disjoint union of smooth symplectic manifolds, just as for reductions
defined by Hamiltonian actions of compact Lie groups \cite{OR,SL,Sn}.

For general functions, the quasi-Poisson bracket \eqref{qPB1} violates the Jacobi identity in a specific manner.
For this and further details,
one may consult \cite{AKSM}.

An important quasi-Poisson manifold is the double
\be
\fS:= G \times G
\label{D}\ee
of a connected compact Lie group.  For our purpose, we continue to assume that the Lie algebra $\cG$ of $G$ is simple and
$G$ is simply connected.
To describe the relevant bivector, we take a basis, $e_a$, of the Lie algebra $\cG$ that satisfies
\be
\langle e_a, e_b \rangle_\cG = - \delta_{a,b},
\ee
where, as before, we use the negative definite invariant bilinear form on  $\cG$.
To simplify the appearance of the formulas, below we omit the subscript $\cG$ from $\langle - ,- \rangle_\cG$.
We let $e_a^L$ and $e_a^R$ denote the left-invariant and right-invariant vector fields on $G$ that
extend the tangent vector $e_a$ at the unit element of $G$.
Furthermore, we let $e_a^{1,L}$ and $e_a^{2,L}$ stand for the corresponding vector fields on $\fS$ that are tangent to
the first and second $G$ factors of $\fS$, respectively. Similarly, we introduce $e_a^{1,R}$ and $e_a^{2,R}$ as well.
Then, the bivector $\Pi$ of the (internally fused) double \cite{AKSM} is given by\footnote{This is obtained from example 5.3 of
 \cite{AKSM} by performing the fusion procedure of Proposition 5.1 of this reference.}
 \be
2\Pi = e_a^{1,L}\wedge e_a^{2,R} + e_a^{1,R}\wedge e_a^{2,L} +e_a^{1,R}\wedge e_a^{1,L} +e_a^{2,L}\wedge e_a^{2,R}-
e_a^{1,R}\wedge e_a^{2,R} - e_a^{2,L}\wedge e_a^{1,L}.
\label{DP}\ee
Here and below, the summation convention is in force; the wedge product does not contain $\frac{1}{2}$.
This bivector is invariant with respect to the conjugation action of $G$ on $\fS$, where $\eta \in G$ operates by the map
\be
\cA_\eta: (g_1,g_2) \mapsto (\eta g_1 \eta^{-1}, \eta g_2 \eta^{-1}).
\ee
The $\cG$-valued derivatives of $\cF\in C^\infty(\fS)$, defined in the same way as in \eqref{nabG}, can be written as
\be
\nabla_1 \cF = - e_a e_a^{1,R}[\cF],
\qquad
\nabla_1' \cF =-e_a e_a^{1,L}[\cF],
\ee
and similarly for $\nabla_2 \cF$ and $\nabla_2' \cF$.   To be clear about the notations,  note that
\be
e_a^{1,R}[\cF](g_1,g_2) = \dt \cF( \exp(t e_a) g_1, g_2), \qquad \forall (g_1,g_2) \in \fS.
\ee
Then, the quasi-Poisson bracket \eqref{qPB1} of any two smooth functions $\cF$ and $\cH$ takes the following form:
\be
\begin{aligned}
2\{ \cF,\cH\} & = \langle \nabla_1' \cH, \nabla_2 \cF \rangle -\langle \nabla_2 \cH, \nabla_1' \cF \rangle+  \langle \nabla_1 \cH, \nabla_2' \cF \rangle
-\langle \nabla_2' \cH, \nabla_1 \cF \rangle \\
& +\langle \nabla_2 \cH, \nabla_1 \cF \rangle -\langle \nabla_1 \cH, \nabla_2 \cF \rangle+  \langle \nabla_1' \cH, \nabla_2' \cF \rangle
-\langle \nabla_2' \cH, \nabla_1' \cF \rangle \\
 &+ \langle \nabla_1 \cH, \nabla_1' \cF \rangle -\langle \nabla_1' \cH, \nabla_1 \cF \rangle+  \langle \nabla_2' \cH, \nabla_2 \cF \rangle
-\langle \nabla_2 \cH, \nabla_2' \cF \rangle.
\end{aligned}
\label{qSPB}\ee
This is obtained from \eqref{DP} by using identities like
\be
e_a^{1,L}[\cF] e_a^{2,R}[\cH] = - \langle \nabla'_1 \cF, \nabla_2 \cH \rangle.
\ee

In the next statement, we apply
 the natural projections $\pi_1$ and $\pi_2$ from $\fS$ to $G$,
\be
\pi_1(g_1,g_2):= g_1, \qquad \pi_2(g_1,g_2) := g_2.
\ee

\begin{prop}\label{prop:qsol}
Consider arbitrary functions $\cF \in C^\infty(\fS)$ and $\phi\in C^\infty(G)^G$.
Then, we have
\be
\{\cF, \pi_2^* \phi\}(g_1,g_2) = -\langle \nabla_1' \cF(g_1,g_2), \nabla\phi(g_2)\rangle,
\qquad
\{\cF, \pi_1^* \phi\}(g_1,g_2) = \langle \nabla_2' \cF(g_1,g_2), \nabla\phi(g_1)\rangle.
\label{qPB2}\ee
Thus, the quasi-Hamiltonian vector field \eqref{qVH} of $\pi_2^* \phi$ induces the evolution equation
\be
\dot{g}_1 =-g_1 \nabla\phi(g_2), \quad \dot{g}_2 = 0
\quad\hbox{whose solution is}\quad
 (g_1(t), g_2(t)) = (g_1(0) \exp( -t \nabla \phi(g_2(0))), g_2(0)),
 \label{qsolpi2}\ee
 and
 the quasi-Hamiltonian vector field of $\pi_1^* \phi$ induces the evolution equation
 \be
\dot{g}_1 =0, \quad \dot{g}_2 = g_2 \nabla \phi(g_1)
\quad\hbox{whose solution is}\quad
 (g_1(t), g_2(t)) = (g_1(0), g_2(0) \exp( t \nabla \phi(g_1(0))).
 \label{qsolpi1}\ee
 \end{prop}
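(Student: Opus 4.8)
The plan is to substitute the two pullback invariants directly into the explicit bracket formula \eqref{qSPB} and to exploit two elementary facts about $\phi\in C^\infty(G)^G$. First, $\nabla\phi=\nabla'\phi$; second, $\Ad_g\nabla\phi(g)=\nabla\phi(g)$. Both are obtained by differentiating the conjugation invariance $\phi(e^{tX}g e^{-tX})=\phi(g)$ at $t=0$, exactly as was used for $h\in C^\infty(G)^G$ in Section \ref{S:Sec2}: the vanishing of $\dt\phi(e^{tX}g e^{-tX})$ gives $\langle X,\nabla\phi-\nabla'\phi\rangle=0$ for all $X$, and combining $\nabla\phi=\nabla'\phi$ with the general adjoint relation $\nabla'\phi=\Ad_{g^{-1}}\nabla\phi$ yields the second identity.

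First I would treat $\cH:=\pi_2^*\phi$, so that $\cH(g_1,g_2)=\phi(g_2)$ is independent of $g_1$, whence $\nabla_1\cH=\nabla_1'\cH=0$, while $\nabla_2\cH=\nabla_2'\cH=\nabla\phi(g_2)$ by the equality $\nabla\phi=\nabla'\phi$. Feeding these into the twelve terms of \eqref{qSPB} and collecting, most contributions either vanish or cancel in pairs, and after dividing by $2$ one is left with
\[
\{\cF,\pi_2^*\phi\}=-\langle \nabla_1'\cF,\nabla\phi(g_2)\rangle+\tfrac12\langle \nabla\phi(g_2),\nabla_2\cF-\nabla_2'\cF\rangle .
\]
The first term already reproduces the claimed right-hand side, so the whole content of the statement reduces to showing that the residual term $\langle \nabla\phi(g_2),\nabla_2\cF-\nabla_2'\cF\rangle$ vanishes.

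This cancellation is the step I expect to be the crux. To dispose of it I would invoke the general relation $\nabla_2'\cF=\Ad_{g_2^{-1}}\nabla_2\cF$ between the left- and right-derivatives (valid for \emph{any} $\cF$, by the same argument that gave $\nabla'\cF=g^{-1}\nabla\cF\,g$ in Section \ref{S:Sec2}) together with the invariance identity $\Ad_{g_2}\nabla\phi(g_2)=\nabla\phi(g_2)$. Using that $\langle-,-\rangle$ is $\Ad$-invariant,
\[
\langle \nabla\phi(g_2),\nabla_2'\cF\rangle=\langle \nabla\phi(g_2),\Ad_{g_2^{-1}}\nabla_2\cF\rangle=\langle \Ad_{g_2}\nabla\phi(g_2),\nabla_2\cF\rangle=\langle \nabla\phi(g_2),\nabla_2\cF\rangle,
\]
so the residual term is zero and the first formula in \eqref{qPB2} follows. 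The second formula is obtained by the entirely parallel computation with $\cH:=\pi_1^*\phi$, where now $\nabla_2\cH=\nabla_2'\cH=0$ and $\nabla_1\cH=\nabla_1'\cH=\nabla\phi(g_1)$; the same two facts force the analogous residual term $\langle\nabla\phi(g_1),\nabla_1\cF-\nabla_1'\cF\rangle$ to vanish, leaving $+\langle \nabla_2'\cF,\nabla\phi(g_1)\rangle$. The sign flip relative to the first case is produced by the placement of the index-$1$ derivatives in \eqref{qSPB}, and could alternatively be read off from the manifest $1\leftrightarrow 2$ antisymmetry of the bivector \eqref{DP}.

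Finally, to extract the evolution equations I would read \eqref{qPB2} as prescribing the quasi-Hamiltonian vector field \eqref{qVH}. Since $\langle X',\nabla_1'\cF\rangle=\dt\cF(g_1 e^{tX'},g_2)$, the curve $\dot g_1=-g_1\nabla\phi(g_2)$, $\dot g_2=0$ satisfies $\tfrac{d}{dt}\cF=\langle -\nabla\phi(g_2),\nabla_1'\cF\rangle=\{\cF,\pi_2^*\phi\}$ for every $\cF$, identifying it as the integral curve of $V_{\pi_2^*\phi}$; as $g_2$, and hence $\nabla\phi(g_2)$, is conserved, the flow integrates to the stated exponential, and the case of $\pi_1^*\phi$ is identical with the two factors exchanged. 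No serious obstacle remains once the residual-term cancellation is in hand; the remainder is bookkeeping.
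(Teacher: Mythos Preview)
Your proposal is correct and follows essentially the same approach as the paper's own proof: substitute the pullback invariant into the explicit bracket \eqref{qSPB}, collect terms, and kill the residual cross-term using $\nabla_i'\cF=\Ad_{g_i^{-1}}\nabla_i\cF$ together with $\Ad_{g_i}\nabla\phi(g_i)=\nabla\phi(g_i)$ and $\Ad$-invariance of $\langle-,-\rangle$. The only cosmetic difference is that the paper writes out the case $\cH=\pi_1^*\phi$ and leaves $\pi_2^*\phi$ to inspection, while you do the reverse.
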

 \begin{proof}
 Consider, for example, $\cH= \pi_1^* \phi$. Then, $\nabla_2 \cH = \nabla_2' \cH =0$ and $\nabla_1 \cH = \nabla_1' \cH = \pi_1^* \nabla \phi$.
In this case, simply by collecting terms,
\be
2 \{\cF,\cH\} = 2 \langle \nabla_2'\cF, \pi_1^* \nabla \phi \rangle
 +\langle \nabla_1' \cF - \nabla_1 \cF, \pi_1^* \nabla \phi \rangle,
\ee
and the second term vanishes on account of the relations
\be
\nabla_1 \cF(g_1, g_2) = g_1 \nabla_1' \cF(g_1,g_2) g_1^{-1},
\quad
g_1^{-1} \nabla\phi(g_1) g_1= \nabla \phi(g_1),
\ee
and the $G$-invariance of $\langle -,- \rangle$.
The rest of the statement is verified by straightforward inspection.
\end{proof}

We see from Proposition \ref{prop:qsol}
that the ring $\pi_2^* C^\infty(G)^G$  forms an Abelian Poisson algebra,
and $g_2$ as well as $\tilde g_1:= g_1 g_2 g_1^{-1}$
 are constant along all of the
corresponding integral curves \eqref{qsolpi2}.
This shows that the functional dimension
of
 the joint constants of motion
for the evolution equations in \eqref{qsolpi2} is $\dim(\fS) - \rank(\cG)$.
In conclusion, the family of Hamiltonians $\pi_2^* C^\infty(G)^G$, of functional dimension $\rank(\cG)$,
behaves basically in the same way as a degenerate integrable system on a symplectic manifold.
Quite similar observations apply to the Poisson algebra $\pi_1^* C^\infty(G)^G$.
We merely note that the relevant constants of motion are now provided by arbitrary smooth functions
of $g_1$ and  $\tilde g_2:= g_2 g_1 g_2^{-1}$.

Mimicking the reduction procedure of Section \ref{S:Sec2}, we introduce the submanifolds
\be
\fS^\reg:=\{ (g_1,g_2)\in \fS \mid g_1 \in G^\reg\}, \qquad
\fS_0^\reg :=\{ (Q,g) \in \fS \mid Q\in G_0^\reg\},
\label{S01}\ee
and
\be
\FSpreg:=\{ (g_1,g_2) \in \fS \mid g_2\in G^\reg\}, \qquad
\fSpreg0:=\{ (g,Q)\in \fS\mid  Q \in G_0^\reg\}.
\label{S02}\ee
Using the normalizer $\fN$ \eqref{fN}, restriction of functions engenders the
isomorphisms
\be
C^\infty(\fS^\reg)^G \Longleftrightarrow  C^\infty(\fS_0^\reg)^\fN
\label{Sisom1}\ee
and
\be
C^\infty(\FSpreg)^G \Longleftrightarrow  C^\infty(\fSpreg0)^\fN.
\label{Sisom2}\ee

We next point out that the bracket \eqref{qSPB}  simplifies considerably for invariant functions.

\begin{prop}
If $\cF, \cH \in C^\infty(\fS)^G$, then the formula \eqref{qSPB} can be rewritten as
\be
2\{ \cF,\cH\}  = \langle \nabla_1 \cH, \nabla_2 \cF  + \nabla_2' \cF\rangle -\langle \nabla_1 \cF, \nabla_2 \cH +\nabla_2' \cH \rangle+
 \langle \nabla_2 \cH, \nabla_2' \cF \rangle  -  \langle \nabla_2' \cH, \nabla_2 \cF  \rangle,
\label{SinvPB}\ee
and alternatively also as
\be
2\{ \cF,\cH\}  =  \langle \nabla_2 \cF, \nabla_1 \cH  + \nabla_1' \cH\rangle -\langle \nabla_2 \cH, \nabla_1 \cF +\nabla_1' \cF \rangle+
 \langle \nabla_1 \cF, \nabla_1' \cH \rangle  -  \langle \nabla_1' \cF, \nabla_1 \cH \rangle.
\label{SinvPBprime}\ee
\end{prop}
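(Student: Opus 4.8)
The plan is to extract the single algebraic identity that encodes the conjugation-invariance of $\cF$ and $\cH$, and then to use it to eliminate one of the four `primed' derivatives from the twelve-term expression \eqref{qSPB}; everything after that is linear algebra.

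First I would differentiate the invariance condition $\cF(\eta g_1 \eta^{-1}, \eta g_2 \eta^{-1}) = \cF(g_1,g_2)$ at $\eta = e^{tX}$. In the first factor this is a left perturbation by $X$ together with a right perturbation by $-X$, and likewise in the second factor, so by the definition of the derivatives (as in \eqref{nabG}) one gets
\be
\langle X, \nabla_1 \cF - \nabla_1' \cF + \nabla_2 \cF - \nabla_2' \cF\rangle = 0, \qquad \forall X\in \cG.
\ee
Non-degeneracy of $\langle -,- \rangle$ then yields the invariance identity
\be
\nabla_1' \cF + \nabla_2' \cF = \nabla_1 \cF + \nabla_2 \cF,
\ee
and the same relation for $\cH$. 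This is the only structural input required; it plays for $\fS$ the role that \eqref{crucid1} and \eqref{crucid2} play in the previous two sections.

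To reach \eqref{SinvPB} I would substitute $\nabla_1'\cF = \nabla_1\cF + \nabla_2\cF - \nabla_2'\cF$, and the corresponding expression for $\nabla_1'\cH$, into the six terms of \eqref{qSPB} that contain a first-factor primed derivative (the remaining six already involve only $\nabla_1,\nabla_2,\nabla_2'$). After collecting the outcome and repeatedly invoking the symmetry $\langle U,V\rangle = \langle V,U\rangle$ of the inner product, the three `diagonal' pairings $\langle \nabla_1\cH,\nabla_1\cF\rangle$, $\langle\nabla_2\cH,\nabla_2\cF\rangle$ and $\langle\nabla_2'\cH,\nabla_2'\cF\rangle$ cancel, and the surviving six terms assemble precisely into \eqref{SinvPB}. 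The alternative form \eqref{SinvPBprime} follows from the mirror-image computation: instead of the first-factor primed derivatives I would use the identity as $\nabla_2'\cF = \nabla_1\cF + \nabla_2\cF - \nabla_1'\cF$ to remove the second-factor primed derivatives, whereupon the analogous diagonal cancellations produce \eqref{SinvPBprime}.

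I expect the only genuine difficulty to be the bookkeeping: \eqref{qSPB} carries twelve signed terms, each substitution splits a term into three, and one must track the cancellations and the inner-product symmetrizations without sign errors. There is no conceptual obstacle—once the invariance identity is in hand the claim is a finite, purely linear verification—so the care lies entirely in organizing the terms, for example by grouping them according to which pair $\bigl(\nabla_i\cH \text{ or } \nabla_i'\cH,\ \nabla_j\cF \text{ or } \nabla_j'\cF\bigr)$ each one contracts.
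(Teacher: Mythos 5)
Your proposal is correct and follows the same route as the paper: the paper's proof consists precisely of recording the invariance identity $\nabla_1\cF - \nabla_1'\cF + \nabla_2\cF - \nabla_2'\cF = 0$ (equation \eqref{qderrels}) and then eliminating $\nabla_1'\cF,\nabla_1'\cH$ from \eqref{qSPB} to get \eqref{SinvPB}, respectively $\nabla_2'\cF,\nabla_2'\cH$ to get \eqref{SinvPBprime}. Your version merely spells out the derivation of the identity and the term-by-term bookkeeping that the paper leaves to the reader.
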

\begin{proof}
The derivatives of the $G$-invariant functions $\cF$ and $\cH$ satisfy
\be\label{qderrels}
\nabla_1 \cF -\nabla_1' \cF  + \nabla_2 \cF - \nabla_2' \cF =0
\quad\hbox{and} \quad
\nabla_1\cH - \nabla_1' \cH  + \nabla_2 \cH - \nabla_2' \cH = 0.
\ee
Formula \eqref{SinvPB} results from \eqref{qSPB} by elimination of $\nabla_1'\cF $ and $\nabla_1' \cH$
via these relations, and \eqref{SinvPBprime} results by doing the same to $\nabla_2'\cF $ and $\nabla_2' \cH$.
\end{proof}

Formulae \eqref{SinvPB} and \eqref{SinvPBprime} are also valid for
invariant functions on any open $G$-invariant submanifold of $\fS$. This simple remark is applied below.
Any function $F\in C^\infty(\fS_0^\reg)$ has the $\cG_0$-valued derivative $\nabla_1 F$ and the $\cG$-valued
derivatives $\nabla_2 F$ and $\nabla_2'F$, which are defined in the natural manner.
For functions on $\fSpreg0$, the roles of the subscripts $1$ and $2$ are exchanged.

\begin{thm}
\label{thm:redqPB}
First, let $F,H \in C^\infty(\fS_0^\reg)^\fN$  be  the restrictions of $\cF, \cH \in C^\infty(\fS^\reg)^G$, respectively.
Then, the definition
\be
\{ F,H\}_\red(Q,g) := \{ \cF, \cH\}(Q,g), \qquad \forall (Q,g) \in \fS_0^\reg,
\ee
leads to the formula
\be\label{redqPB}
\{F,H\}_\red(Q,g) = \langle \nabla_1 H, \nabla_2 F  \rangle  -
 \langle \nabla_1 F, \nabla_2 H  \rangle
+ \langle \nabla_2' F, \cR(Q)  \nabla_2' H\rangle  - \langle \nabla_2 F, \cR(Q)  \nabla_2 H  \rangle.
 \ee
Second,
let $F,H \in C^\infty(\fSpreg0)^\fN$  be  the restrictions of $\cF, \cH \in C^\infty(\FSpreg)^G$, respectively.
Then, the definition
\be
\{ F,H\}'_\red(g,Q) := \{ \cF, \cH\}(g,Q), \qquad \forall (g,Q) \in \fSpreg0,
\ee
gives
\be\label{redqPBprime}
\{F,H\}'_\red(g,Q) =    \langle \nabla_2 F, \nabla_1 H \rangle  -
\langle \nabla_2 H, \nabla_1 F  \rangle
 + \langle \nabla_1 F, \cR(Q)  \nabla_1  H\rangle  - \langle \nabla_1' F, \cR(Q)  \nabla_1' H  \rangle.
 \ee
Here, $\cR(Q)$ is given by \eqref{RQ1},  and the derivatives are taken at $(Q,g)$ and at $(g,Q)$,
respectively.
\end{thm}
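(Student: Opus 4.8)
The plan is to mirror the derivation of Theorem~\ref{thm:spinSuth}, exploiting that for invariant functions the quasi-Poisson bracket collapses to the compact expressions \eqref{SinvPB} and \eqref{SinvPBprime}. These remain valid on any open $G$-invariant submanifold, in particular on $\fS^\reg$ and $\FSpreg$, so I may freely evaluate them at points of the slices. I would prove the first assertion in detail and deduce the second from it by a symmetry argument.

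For the first formula I start from \eqref{SinvPB} and restrict to $\fS_0^\reg$, writing $(Q,g)$ for the argument. The derivatives with respect to the free factor are inert, $\nabla_2\cF=\nabla_2 F$ and $\nabla_2'\cF=\nabla_2' F$, while $(\nabla_1\cF)_0=\nabla_1 F$. The crux is to express the missing piece $(\nabla_1\cF)_\perp$ through the restricted derivatives. Since $Q\in G_0$, the general relation $\nabla_1'\cF=\Ad_{Q^{-1}}\nabla_1\cF$ holds, so the invariance identity \eqref{qderrels} becomes $(\id-\Ad_{Q^{-1}})\nabla_1\cF=\nabla_2'\cF-\nabla_2\cF$. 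The left-hand side lies in $\cG_\perp$ and $\id-\Ad_{Q^{-1}}$ is invertible there (its $\cG_0$-part is automatically consistent, which is precisely the residual $G_0$-invariance $(\nabla_2 F-\nabla_2'F)_0=0$). Inverting on $\cG_\perp$ and rearranging exactly as in the proof of Theorem~\ref{thm:spinSuth} yields $(\nabla_1\cF)_\perp=(\tfrac12\id+\cR(Q))(\nabla_2'F-\nabla_2 F)$, with $\cR(Q)$ as in \eqref{RQ1}, and likewise for $\cH$.

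Substituting these into \eqref{SinvPB} and using the antisymmetry of $\cR(Q)$ \eqref{Rrasym} together with the invariance of $\langle-,-\rangle$, I expect the outcome to organize into three groups. The $\cG_0$-valued pieces $\nabla_1 F,\nabla_1 H$ produce the first two terms of \eqref{redqPB}; the factor $\cR(Q)$ inside the substituted perpendicular parts delivers $\langle\nabla_2'F,\cR(Q)\nabla_2'H\rangle-\langle\nabla_2 F,\cR(Q)\nabla_2 H\rangle$; and the leftover contributions of the $\tfrac12\id$ summand combine with the last two terms $\langle\nabla_2\cH,\nabla_2'\cF\rangle-\langle\nabla_2'\cH,\nabla_2\cF\rangle$ of \eqref{SinvPB} and cancel identically. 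The bookkeeping of this final cancellation is the only step needing care, but it is purely mechanical once the antisymmetry of $\langle\cR(Q)\,\cdot\,,\,\cdot\,\rangle$ is invoked.

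For the second formula I would invoke the flip $\sigma(g_1,g_2):=(g_2,g_1)$, which is $G$-equivariant for the diagonal conjugation and, as one checks directly from \eqref{DP}, reverses the bivector, $\sigma^*\Pi=-\Pi$. Thus $\sigma$ is anti-Poisson on invariant functions, interchanges the slices $\fS_0^\reg$ and $\fSpreg0$, and swaps the labels $1\leftrightarrow 2$ of all derivatives. Transporting \eqref{redqPB} through $\sigma$ then reproduces \eqref{redqPBprime}, the extra sign from $\sigma^*\Pi=-\Pi$ accounting for the reversed sign pattern of the $\cR(Q)$-terms; equivalently, one may simply rerun the first computation starting from the alternative expression \eqref{SinvPBprime} and diagonalizing the second factor. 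In both parts the main obstacle is the $\cG_\perp$-solve that brings in $\cR(Q)$, followed by the verification that every non-$\cR$ term drops out.
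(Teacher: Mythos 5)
Your proposal is correct and follows essentially the same route as the paper: it uses the invariance identity \eqref{qderrels} together with $\nabla_1'\cF=\Ad_{Q^{-1}}\nabla_1\cF$ to solve for $(\nabla_1\cF)_\perp=(\tfrac12\id+\cR(Q))(\nabla_2'F-\nabla_2F)$, substitutes into \eqref{SinvPB}, and checks the cancellation of the $\tfrac12\id$-terms against the last two terms — exactly the computation the paper performs (and omits the details of). Your derivation of the second formula via the flip $\sigma$ with $\sigma_*\Pi=-\Pi$ is just a cleaner justification of the paper's own remark that \eqref{redqPBprime} follows by exchanging the labels $1\leftrightarrow2$ and an overall sign.
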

\begin{proof}
By taking advantage of the identity \eqref{qderrels} at $(Q,g)\in \fS_0^\reg$ \eqref{S01}, we can express the derivatives of $\cF$
in terms of the derivatives of $F$ as follows:
\be
\nabla_2 \cF(Q,g) = \nabla_2 F (Q,g), \qquad \left( \nabla_2  F(Q,g) - \nabla'_2 F (Q,g) \right)_0 =0,
\label{F14}\ee
\be
\nabla_1 \cF(Q,g) = \nabla_1 F(Q,g) - (\cR(Q) + \frac{1}{2} \id)\left( \nabla_2 F(Q,g) - \nabla_2' F(Q,g)   \right).
\label{F15}\ee
By inserting this and the similar formula for the derivatives of $\cH$ into \eqref{SinvPB}, we obtain
\eqref{redqPB}. The details of this straightforward calculation are omitted.
The derivation of \eqref{redqPBprime} is fully analogous, and can also be obtained from the previous case
by  exchange of the subscripts $1$ and $2$, accompanied by applying an overall minus sign.
\end{proof}

\begin{prop}\label{prop:redqeq}
If $H$ is the restriction of $\cH = \pi_2^* \phi$  with $\phi\in C^\infty(G)^G$, then the reduced Poisson bracket \eqref{redqPB} gives
 \be\label{redqPBinv}
\{F,H\}_\red(Q,g) =   - \langle \nabla_1 F(Q,g), \nabla \phi(g)  \rangle
+ \langle \nabla_2' F(Q,g) - \nabla_2 F(Q,g), \cR(Q)  \nabla\phi (g)\rangle,
 \ee
 and if $H$ is the restriction of $\cH = \pi_1^* \phi$  with $\phi\in C^\infty(G)^G$, then the reduced Poisson bracket \eqref{redqPBprime} gives
 \be\label{redqPBprimeinv}
\{F,H\}'_\red(g,Q) =   \langle \nabla_2 F(Q,g), \nabla \phi(g)  \rangle
- \langle \nabla_1' F(Q,g) - \nabla_1 F(Q,g), \cR(Q)  \nabla\phi (g)\rangle.
 \ee
 Therefore, the reduced evolution equations associated with $H$ can be written, respectively, as
 \be
 \dot{Q} = - (\nabla\phi(g))_0 Q, \qquad \dot{g} = [ g, \cR(Q) \nabla\phi(g) ], \quad \hbox{on}\quad \fS_0^\reg,
 \label{qredeq}\ee
 and as
 \be
 \dot{Q} =  (\nabla\phi(g))_0 Q, \qquad \dot{g} = -[ g, \cR(Q) \nabla\phi(g) ], \quad \hbox{on}\quad \fSpreg0.
 \label{qredeqprime}\ee
 \end{prop}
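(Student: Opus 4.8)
The plan is to obtain both reduced brackets \eqref{redqPBinv} and \eqref{redqPBprimeinv} as immediate specializations of Theorem \ref{thm:redqPB}, and then to extract the evolution equations \eqref{qredeq} and \eqref{qredeqprime} by matching the resulting bracket against the time derivative of an arbitrary $F$ along a curve in $\fS_0^\reg$ (resp. $\fSpreg0$).

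First I would compute the derivatives of the restricted Hamiltonian. Taking $\cH = \pi_2^* \phi$, so that $H(Q,g) = \phi(g)$ on $\fS_0^\reg$, the function is independent of the first factor, whence $\nabla_1 H = 0$; its dependence on the free second factor $g$ yields $\nabla_2 H = \nabla\phi(g)$. Since $\phi\in C^\infty(G)^G$ is conjugation invariant, the identity $\nabla\phi = \nabla'\phi$ recorded before Proposition \ref{prop:redeq2} gives $\nabla_2' H = \nabla_2 H = \nabla\phi(g)$. Substituting these three relations into \eqref{redqPB} kills the term $\langle \nabla_1 H, \nabla_2 F\rangle$ and merges the last two terms into $\langle \nabla_2' F - \nabla_2 F, \cR(Q)\nabla\phi(g)\rangle$, producing \eqref{redqPBinv}. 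The primed case is handled identically, using $\cH = \pi_1^* \phi$, for which $\nabla_2 H = 0$ and $\nabla_1 H = \nabla_1' H = \nabla\phi(g)$; inserting these into \eqref{redqPBprime} yields \eqref{redqPBprimeinv}. Alternatively, the primed formula follows from the unprimed one through the $1\leftrightarrow 2$ exchange with an overall sign flip already noted in the proof of Theorem \ref{thm:redqPB}.

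To read off \eqref{qredeq} I would write the time derivative of $F$ along a curve $(Q(t),g(t))$ as $\dot F = \langle \dot Q Q^{-1}, \nabla_1 F\rangle + \langle \dot g g^{-1}, \nabla_2 F\rangle$ and demand that this equal $\{F,H\}_\red$ for every $F$. Comparing the $\nabla_1 F$ terms, and using that $\nabla_1 F$ is $\cG_0$-valued so that only $(\nabla\phi(g))_0$ survives the pairing, forces $\dot Q Q^{-1} = -(\nabla\phi(g))_0$, i.e. $\dot Q = -(\nabla\phi(g))_0 Q$. For the $g$-motion I would set $Y := \cR(Q)\nabla\phi(g)\in\cG$ and use the relation between the left and right derivatives, $\nabla_2' F = g^{-1}(\nabla_2 F)g$, together with the invariance of the Killing form, to rewrite $\langle \nabla_2' F - \nabla_2 F, Y\rangle = \langle gYg^{-1} - Y, \nabla_2 F\rangle$; this identifies $\dot g g^{-1} = gYg^{-1} - Y$, equivalently $\dot g = [g, \cR(Q)\nabla\phi(g)]$. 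The primed equation \eqref{qredeqprime} emerges the same way under the $1\leftrightarrow 2$, sign-flip symmetry.

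Since the proposition is essentially a substitution followed by a bookkeeping step, there is no serious obstacle; the only point demanding care is the passage from the bracket to the vector field, where one must exploit the left/right derivative relation to convert the $\cR(Q)\nabla\phi$ contribution into the commutator $[g,\cR(Q)\nabla\phi]$, and must keep in mind that $\dot Q Q^{-1}$ is constrained to lie in $\cG_0$. As in Propositions \ref{prop:redeq1} and \ref{prop:redeq2}, the evolution vector field on $\fS_0^\reg$ (resp. $\fSpreg0$) is determined only modulo vector fields tangent to the $G_0$-orbits, an ambiguity that disappears upon projection to $\fS^\reg/G \simeq \fS_0^\reg/\fN$; I would flag this to justify the phrase ``can be written as'' in the statement.
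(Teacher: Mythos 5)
Your proposal is correct and follows exactly the route the paper intends: the paper states Proposition \ref{prop:redqeq} without a written proof, since it is the immediate specialization of Theorem \ref{thm:redqPB} obtained by inserting $\nabla_1 H=0$, $\nabla_2 H=\nabla_2'H=\nabla\phi(g)$ (respectively the $1\leftrightarrow 2$ analogue) and then reading off the vector field, just as in Propositions \ref{prop:redeq1} and \ref{prop:redeq2}. Your substitutions, the use of $\nabla_2'F=g^{-1}(\nabla_2 F)g$ to produce the commutator, the restriction of $\dot Q Q^{-1}$ to $\cG_0$, and the remark on the residual gauge ambiguity are all accurate and complete.
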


\begin{rem}
It is easily seen that the formulae of Theorem \ref{thm:redqPB} yield Poisson algebra structures on $C^\infty(\fS_0^\reg)^{G_0}$ and,
respectively, on $C^\infty(\fSpreg0)^{G_0}$, too.
These Poisson algebras and also the evolution equations of Proposition \ref{prop:redqeq}
differ only by an overall sign (and the allocation of the labels $1$ and $2$).
They are converted into one another by the map
$(Q,g) \mapsto (g, Q^{-1})$.
Thus the two models of the reduced double that we developed carry equivalent copies of the same system.
\end{rem}

\begin{rem}\label{rem:Cas}
It is known \cite{AKSM} that the Poisson center of the Poisson algebra $C^\infty(\fS)^G$ of the  internally fused double $\fS$ \eqref{D}
is formed by the functions $\cC$ of the form
\be
\cC(g_1,g_2) = h(g_1 g_2 g_1^{-1} g_2^{-1}),
\qquad
h\in C^\infty(G)^G.
\label{Caseq}\ee
By fixing the values of all these Casimir functions,
one obtains a Poisson subspace of $\fS/G$, which is the disjoint union of a dense open symplectic manifold
and lower-dimensional symplectic strata.
The restrictions of the reduced systems on generic symplectic leaves of the reduced double $\fS/G$ are
expected to be integrable in the degenerate sense.  They inherit a large set of integrals of
motion from the unreduced master system, and the same counting arguments work as for the spin Sutherland
models of Section \ref{S:Sec2.1}.
\end{rem}

\begin{rem}
The investigations reported in \cite{FK2,FeKlu} are equivalent to studying particular Poisson subspaces of $\fS/G$ for $G=\mathrm{SU}(n)$.
They can be obtained by fixing the values of the functions $h$ in \eqref{Caseq} so that they define a minimal conjugacy class in $G$,
of dimension $2(n-1)$. The Poisson subspaces in question were shown to be smooth symplectic manifolds, and the reduced
integrable system was interpreted as a compactified trigonometric Ruijsenaars--Schneider model.
\end{rem}

We end by recalling \cite{FK3} that the group $\sl2z$ acts on $\fS/G$ as follows.
Define the diffeomorphisms
$S_\fS$ and $T_\fS$ of the double by
\be
S_\fS(g_1, g_2)= (g_2^{-1}, g_2^{-1} g_1 g_2)\quad\hbox{and}\quad T_\fS(g_1,g_2):= (g_1 g_2, g_2).
\ee
These maps descend to maps $\hat S$ and $\hat T$ of $\fS/G$ that satisfy the identities
\be
{\hat S}^2 = (\hat S \circ \hat T)^3, \qquad {\hat S}^4 = \mathrm{id},
\label{sl2z}\ee
and preserve the Poisson brackets on $C^\infty(\fS/G)\simeq C^\infty(\fS)^G$ as well as the level surfaces of the Casimir functions \eqref{Caseq}.
The identities \eqref{sl2z} represent the standard defining relations of the group $\sl2z$.   In the matrix realization
they are enjoyed by the generators
\be
S=   \begin{bmatrix}
0 &1  \\
-1 & 0
\end{bmatrix}, \qquad
T =  \begin{bmatrix}
1 &0  \\
1 & 1
\end{bmatrix}.
\ee
Notice that $\hat S$ maps into one another the two reduced Abelian Poisson algebras arising from the two sets of pullback invariants.
It can be interpreted as a self-duality map that converts the `global position variables' into the `Hamiltonians of interest'
of the reduced systems that descend from the double.
Referring to Proposition \ref{prop:redqeq},  the `global position variables' are
those functions of $Q$ that are restrictions of pullback invariants,
and the `Hamiltonians of interest' are the $G$-invariant functions of the Lax matrix $g$.

\section{Summary and outlook}
\label{S:Sec5}

In this paper, we performed a systematic study of Poisson reductions of  `master integrable systems' carried by the
classical doubles of any compact (connected and simply connected) Lie group $G$ associated with a simple Lie algebra $\cG$.
Informally, using the terminology of matrix Lie groups, the outcome of
our analysis  can be summarized as follows. The starting phase space always consists of a pair of matrices,
and the action of $G$ is equivalent to  simultaneous conjugation of those two matrices by the elements of $G$.
We proceeded by bringing one of those matrices to a `diagonal' normal form,  and letting the other matrix
serve as a Lax matrix that generates commuting Hamiltonians.
The Lax matrix then satisfies reduced evolution equations of the form
\be
\dot{\cL} = [ \bfR(X) Y(\cL), \cL],
\ee
where $Y(\cL)$ is a $\cG$-valued derivative of a $G$-invariant function of $\cL$, and $\bfR(X)$
is a dynamical $r$-matrix depending on the diagonal `position matrix' $X$.
The aligned evolution equation for $X$ contains the Cartan subalgebra
component $Y(\cL)_0$ of $Y(\cL)$.
The nature of the matrices $\cL$, $X$ and $\bfR(X)$ and
the derivative $Y(\cL)$
varies case by case, and is described in the text.  This description is valid on a dense open
subset, where $X$ satisfies a regularity condition.
The matrices $X$ and $\cL$ are subject to residual gauge transformations by the normalizer of a fixed maximal torus  of $G$,
and we found the explicit formula for the Poisson brackets of the corresponding invariant functions.
The dynamical $r$-matrices play prominent role in the reduced Poisson brackets as well.
In precise technical terms, the reduced evolution equations are given  by equations \eqref{redeq1}, \eqref{redeq2},
\eqref{REDeq1+}, \eqref{REDeq2}, \eqref{qredeq} and \eqref{qredeqprime}.
The reduced Poisson brackets are characterized by  Theorems \ref{thm:spinSuth},  \ref{thm:spinRS}, \ref{thm:defSuth}, \ref{thm:RED2} and
 \ref{thm:redqPB}.

We explained that the unreduced master systems  possess the characteristic properties of degenerate integrability.
Then, we presented convincing arguments indicating  that these properties are inherited by the reduced systems, on generic symplectic
leaves of the reduced Poisson space.
A fully rigorous proof of integrability after reduction
 is hindered by the fact that the orbit space of the $G$-action is  not a smooth manifold.
We conjecture that reduced integrability holds on all symplectic leaves of the quotient space,
generically degenerate integrability, and only Liouville integrability on exceptional symplectic leaves.

On special symplectic leaves of the reduced Poisson spaces associated with $G=\mathrm{SU}(n)$, one recovers
the trigonometric Sutherland and Ruijsenaars--Schneider models,  which are known to be (only) Liouville integrable \cite{RBanff}.
These special cases and the changes of variables discussed around equations \eqref{Jpar}, \eqref{spinSuthHam} and \eqref{defLax} motivated
us to call the reduced systems  spin Sutherland and spin Ruijsenaars--Schneider type models.  This terminology was also used in
the  papers by Reshetikhin \cite{Res1,Res2} dealing with related complex holomorphic systems.

A very interesting open problem concerns the generalization of our analysis to doubles of
loop groups. The investigation of quantum Hamiltonian
reductions corresponding to our classical reductions appears to
be a worthwhile project for the future, too. As far as we know, such a reduction treatment is so far available
(see, e.g., \cite{FP2}) only for the spin Sutherland models descending from $T^*G$.

\bigskip
\subsection*{Acknowledgement}
I wish to thank Maxime Fairon for  useful remarks on the manuscript.
This work was supported in part by the NKFIH research grant K134946.

\bigskip

\appendix
\section{Some Lie theoretic facts}
\label{sec:A}

We here collect a few Lie theoretic  definitions and results, which are used in the main text.
For references, see e.g.~\cite{DK,Knapp,Sam}.

Consider  a compact simple Lie algebra  $\cG$, i.e., a simple real Lie algebra whose Killing form is
negative definite.
Denote $\cG^\bC$ the complex simple Lie algebra
obtained as the complexification of $\cG$.
(Equivalently, one may start with a complex simple Lie algebra and then pick its compact real form.)
Let $\cG^\bC$ carry the normalized Killing form $\langle - ,- \rangle$, given by
\be
\langle Z_1,  Z_2 \rangle = c\, \tr (\ad_{Z_1} \circ \ad_{Z_2}), \quad Z_1, Z_2\in \cG^\bC,
\label{C1}\ee
where $c$ is some convenient, positive constant.
The restriction of $\langle - ,- \rangle$ to $\cG$ is the (normalized)
Killing form $\langle -,- \rangle_\cG$ of $\cG$.
We may regard $\cG^\bC$ also as a real Lie algebra,  in which case we denote it $\cG^\bC_\bR$.
Up to an overall, positive constant,
the Killing form of $\cG^\bC_\bR$ is given by the real part $\langle -,- \rangle_\bR$ of $\langle - , - \rangle$.
The real vector space $\cG^\bC_\bR$ can be written as the direct sum
\be
\cG^\bC_\bR = \cG + \ri \cG,
\ee
since every element $Z \in \cG^\bC_\bR$ can be decomposed uniquely as
\be
Z= X + \ri Y,
\quad X,Y\in \cG.
\ee
By definition,  the complex conjugation on $\cG^\bC_\bR$ with respect to $\cG$ is the map $\theta$ defined by
\be
\theta( X+ \ri Y) := X - \ri Y.
\ee
The complex conjugation $\theta$ is an involutive automorphism of the real Lie algebra $\cG_\bR^\bC$.
It is a \emph{Cartan involution}, since $\langle - , - \rangle_\bR$ is negative definite
on its fixed point set, $\cG$, and is positive definite on  its eigensubspace with eigenvalue $-1$,
$\ri \cG$.
When regarded as a map of $\cG^\bC$ to itself, $\theta$ is conjugate linear, i.e.,
$\theta(\lambda Z) = \bar \lambda \theta(Z)$ for all $\lambda \in \bC$.
Notice also from the definitions that
\be
\langle \theta(Z_1), \theta(Z_2) \rangle = \overline{\langle Z_1, Z_2 \rangle},
\qquad \forall Z_1, Z_2 \in \cG^\bC.
\label{thetaKill}\ee

We also need  the real bilinear form on $\cG_\bR^\bC$ provided by the imaginary part
of the complex Killing form,
\be
\langle Z_1, Z_2 \rangle_\bI := \Im \langle Z_1, Z_2 \rangle.
\ee
As a  result of \eqref{thetaKill},
this invariant, non-degenerate, symmetric bilinear form enjoys the equality
\be
\langle \theta(Z_1), \theta(Z_2) \rangle_\bI = - \langle Z_1, Z_2 \rangle_\bI, \quad \forall Z_1, Z_2\in \cG_\bR^\bC.
\ee

A crucial fact is that $\cG_\bR^\bC$ can be presented as the vector
space direct sum of two isotropic subalgebras with respect to the bilinear form $\langle - , - \rangle_\bI$:
\be
\cG^\bC_\bR = \cG + \cB,
\label{cGcB1}\ee
where $\cB$ is a suitable `Borel' subalgebra.
We next recall how these subalgebras can be described  using the root
space decomposition of $\cG^\bC$.
For this, let us pick a maximal Abelian subalgebra $\cG_0$ of $\cG$. Its complexification $\cG_0^\bC$
is a Cartan subalgebra of $\cG^\bC$.
Then consider the corresponding set of roots, $\fR$, and decompose $\fR$ into sets of
positive and negative roots $\fR^\pm$.
Moreover, let $\Delta$ be the associated set of simple roots.
It is easily seen that the Cartan involution $\theta$ maps any root subspace $\cG^\bC_\alpha$ ($\alpha \in \fR$) to
$\cG^\bC_{-\alpha}$.

We then choose a Weyl--Chevalley basis of $\cG^\bC$, which  consists of root vectors
$E_\alpha$ for which $\langle E_\alpha, E_{-\alpha} \rangle = 2/\vert \alpha \vert^2$ for all $\alpha\in \fR^+$,
and Cartan elements $H_{\alpha_j}:= [E_\alpha, E_{-\alpha_j}]$ for $\alpha_j \in \Delta$.
The root vectors are chosen in such a way that
all structure constants are real   and $E_{-\alpha} = - \theta(E_\alpha)$ holds.
(Then, if $\alpha, \beta$ and $(\alpha + \beta)$ are roots, one has $[E_\alpha, E_\beta ] = N_{\alpha, \beta} E_{\alpha+ \beta}$
and $[E_{-\alpha}, E_{-\beta} ] = -N_{\alpha, \beta} E_{-\alpha - \beta}$; and all structure constants are integers \cite{Sam}.)
Using any such basis, $\cG$ is given by
\be
\cG = \span_{\bR}\{\ri H_{\alpha_j},\, (E_\alpha -E_{-\alpha}),\, \ri(E_\alpha + E_{-\alpha}) \mid \alpha_j\in \Delta,\, \alpha\in \fR^+\},
\ee
and one can take
\be
\cB = \span_{\bR}\{ H_{\alpha_j},\, E_\alpha,\, \ri E_\alpha  \mid \alpha_j\in \Delta,\, \alpha\in \fR^+\}.
\ee
It is worth noting that there are as many choices for $\cB$ as systems of positive roots,
but all of them are equivalent by the action of the Weyl group of the root system.

Next,  we explain why the map $\nu: B\to \fP$ \eqref{nu} is a diffeomorphism. To start, define the
maps
\be
\mu_1: G_\bR^\bC/G \to B
\quad\hbox{and}\quad \mu_2: G_\bR^\bC/G \to \fP
\ee
by
\be
\mu_1: [K] \mapsto \Lambda_L(K)\quad \hbox{and}\quad
\mu_2: [K] \mapsto K K^\tau,
\ee
where $[K] = K G \in G_\bR^\bC/G$, $\forall K\in G^\bC_\bR$, and we used the definitions \eqref{XiLaT} and \eqref{taumap}.
Recall that $G^\bC_\bR$ is diffeomorphic to $B\times G$ and to $\fP \times G$ by the Iwasawa and global Cartan decompositions, respectively,
and $\fP=\exp(\ri \cG)$ is diffeomorphic to $\ri\cG$ by the exponential map \cite{Knapp}. It follows that $\mu_1$ and $\mu_2$
are (real analytic) diffeomorphisms with the inverses
\be
\mu_1^{-1}: b \mapsto bG,\,\, \forall b\in B \quad\hbox{and}\quad  \mu_2^{-1}: P \mapsto \sqrt{P} G,
\,\,
\forall P\in \fP.
\ee
Therefore the composed map $\nu = \mu_2 \circ \mu_1^{-1}: B \to \fP$ is a diffeomorphism, with the inverse
operating as
$\nu^{-1}: P \mapsto \Lambda_L( \sqrt{P})$.

At the end, we present some remarks on the rings of $G$-invariant functions on which
our integrable systems are  based.  Here, the following isomorphisms are fundamental:
\be
C^\infty(\cG)^G \longleftrightarrow  C^\infty(\cG_0)^\cW
\quad\hbox{and}\quad
 C^\infty(G)^G \longleftrightarrow C^\infty(G_0)^\cW,
\ee
where $\cW$ is the Weyl group.  These are generalizations \cite{Mic,PT} of the Chevalley isomorphism theorem between $G$-invariant
polynomials on $\cG$ and $\cW$-invariant polynomials on the Cartan subalgebra $\cG_0$.  The isomorphisms result from the pertinent
restrictions of functions, and they readily imply that both $C^\infty(\cG)^G$ and   $C^\infty(G)^G$ have functional dimension
$\ell= \rank(G)$.
By combining a theorem of \cite{Schw}  on smooth invariants with the fact that the ring of $G$-invariant polynomials on $\cG$ is freely generated by
$\ell$ homogeneous polynomials, $\sigma_1,\dots, \sigma_\ell$, one obtains that $C^\infty(\cG)^G$ consists of the functions $\phi$ of the form
$\phi = f(\sigma_1,\dots, \sigma_\ell)$ with arbitrary $f\in C^\infty(\bR^\ell)$.
This  gives the structure of the ring $C^\infty(B)^G$, too, by utilizing the isomorphisms
\be
C^\infty(B)^G \longleftrightarrow C^\infty(\fP)^G \longleftrightarrow C^\infty(\cG)^G,
\ee
which arise from the $G$-equivariant diffeomorphism $\nu$ \eqref{nu} and the  exponential parametrization of $\fP = \exp(\ri \cG)$.

Let $\rho: G \to \mathrm{GL}(V)$ be an irreducible unitary representations of $G$, and $\varrho$ the corresponding representation of
$\cG$. Then, the character
$G\ni g\mapsto \tr \rho(g)$ is a $G$-invariant (in general complex) function on $G$, and $\fP\ni e^{\ri X} \mapsto \tr e^{\ri  \varrho(X)}$ is
a $G$-invariant real function on $\fP$.
By taking suitable real or imaginary parts, it should be possible to obtain $\ell$ functionally independent elements of $C^\infty(G)^G$
from the fundamental irreducible representations of $G$.  In the case of $\cG$,
 the real trace functions $\cG\ni X\ \mapsto \tr(\ri \varrho(X))^k$, with $k\geq 2$, provide convenient invariants.

\section{Equivalence of two models of the Heisenberg double}
\label{sec:B}

According to the original definition \cite{STS}, the Heisenberg double of the Poisson--Lie group $G$
is the Poisson  manifold $(M, \{-,- \}_+)$, where $M= G^\bC_\bR$ and
for any $F, H \in C^\infty(M)$
\be
\{ F, H\}_+ = \langle \nabla F, \rho \nabla H \rangle_\bI +  \langle \nabla' F, \rho \nabla' H \rangle_\bI,
\label{A1}\ee
with  $\rho := \frac{1}{2}\left( \pi_{\cG} - \pi_{\cB}\right)$
defined with the aid of the vector space direct sum $\cG^\bC_\bR = \cG + \cB$.
The corresponding symplectic form was found in \cite{AM}.
An alternative model of this Poisson space is $(\fM,\{- ,- \})$, where  $\fM = G \times B$ and
\be
\{\cF, \cH\}(g,b) =\left\langle D_2' \cF, b^{-1} (D_2\cH) b \right\rangle_\bI
-\left\langle D'_1\cF, g^{-1} (D_1\cH) g\right\rangle_\bI
 +  \left\langle D_1\cF , D_2\cH \right\rangle_\bI
-\left\langle D_1 \cH , D_2\cF \right\rangle_\bI
\label{A2}\ee
for functions $\cF, \cH$ on $\fM$.
The derivatives on the right-hand side
are   taken at $(g,b)\in G\times B$, with respect to the first and second variable, respectively.
See also equations  \eqref{Nab}, \eqref{derB} and \eqref{derG} for the definitions of the derivatives.

The purpose of this appendix
 is to explain that the bracket \eqref{A2} on $\fM$ is the pushforward of the standard Poisson bracket \eqref{A1} by
the diffeomorphism $m$ \eqref{mMfM} between $M$ and $\fM$. In particular, this proves that $(\fM, \{-,- \})$ is indeed
a Poisson manifold.

\begin{lem} Using the definitions \eqref{XiLaT}, the map $m: M \to \fM$ given by
\be
m= (\Xi_R, \Lambda_R) \quad \hbox{that is}\quad m(K) = (g_R, b_R)
\label{A4}\ee
is a real analytic diffeomorphism.
\end{lem}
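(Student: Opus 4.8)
The plan is to deduce the statement from the global Iwasawa decomposition of $G^\bC_\bR$, which is a classical real-analytic diffeomorphism \cite{Knapp}. Concretely, the two factorizations recorded in \eqref{KdecT}, namely $K = g_L b_R^{-1}$ with $(g_L,b_R^{-1})\in G\times B$ and $K = b_L g_R^{-1}$ with $(b_L, g_R^{-1})\in B\times G$, are precisely the $G\cdot B$ and $B\cdot G$ Iwasawa factorizations. Hence the four maps $\Xi_L,\Xi_R,\Lambda_L,\Lambda_R$ of \eqref{XiLaT} are all real analytic, and in particular $m=(\Xi_R,\Lambda_R)$ of \eqref{mMfM} is real analytic; moreover each of $(\Xi_L,\Lambda_R)$ and $(\Lambda_L,\Xi_R)$ is already a diffeomorphism onto $G\times B$ and onto $B\times G$, respectively. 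The only genuine point to settle is that $m$, which couples the $B$-factor of one Iwasawa decomposition to the $G$-factor of the \emph{other}, is still a bijection with a real-analytic inverse.

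First I would exhibit the inverse explicitly. Given $(g_R,b_R)\in G\times B$, I form the auxiliary element $L:=g_R^{-1}b_R\in M$ and apply to it the $B\cdot G$ Iwasawa factorization, writing $L=b_L^{-1}g_L$ with $b_L\in B$ and $g_L\in G$; this determines $b_L$ and $g_L$ uniquely and real-analytically. I then set $n(g_R,b_R):=g_L\, b_R^{-1}$, which is real analytic as a composition of the Iwasawa factorization with group multiplication and inversion. The heart of the verification is the algebraic identity linking the two factorizations: equating the two expressions for $K$ in \eqref{KdecT} gives
\[
b_L^{-1} g_L = g_R^{-1} b_R .
\]
Read from right to left, this says that the $B\cdot G$ factors of $L=g_R^{-1}b_R$ are exactly the $(b_L,g_L)$ attached to $K$, which yields $n\circ m=\id_M$. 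For the opposite composition, starting from $(g_R,b_R)$ and $K:=g_L b_R^{-1}$ with $L=g_R^{-1}b_R=b_L^{-1}g_L$, I compute $K = g_L b_R^{-1}=b_L\,(b_L^{-1}g_L)\,b_R^{-1}=b_L\,L\,b_R^{-1}=b_L\, g_R^{-1}$, using $L b_R^{-1}=g_R^{-1}$. Thus $K=g_L b_R^{-1}=b_L g_R^{-1}$ are genuinely the two Iwasawa factorizations of $K$, and reading off \eqref{XiLaT} gives $\Xi_R(K)=g_R$ and $\Lambda_R(K)=b_R$, i.e. $m\circ n=\id$.

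I expect the main (and essentially only) subtle step to be this consistency check: that a single element $L$ simultaneously encodes the two distinct Iwasawa factorizations required to recover $\Xi_R$ and $\Lambda_R$. Once the identity $b_L^{-1}g_L=g_R^{-1}b_R$ is observed, the check collapses to a one-line computation, so the real external input is just the uniqueness and real analyticity of the global Iwasawa decomposition. With $m$ and $n$ mutually inverse and both real analytic, $m$ is a real-analytic diffeomorphism, completing the argument.
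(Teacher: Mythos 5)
Your proof is correct and follows essentially the same route as the paper: both rest on the identity $b_L^{-1}g_L = g_R^{-1}b_R$ (the paper writes its inverse, $g_L^{-1}b_L = b_R^{-1}g_R$) obtained by equating the two Iwasawa factorizations of $K$, together with the uniqueness and real analyticity of the global Iwasawa decomposition from \cite{Knapp}. You merely spell out the inverse map $n$ and the two compositions more explicitly than the paper, which contents itself with noting injectivity and surjectivity.
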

\begin{proof}
For any $K\in M$, the unique Iwasawa decompositions $K= b_L g_R^{-1} = g_L b_R^{-1}$
\eqref{KdecT} imply the equality $g_L^{-1} b_L = b_R^{-1} g_R$.  This shows that $b_L\in B$ and $g_L\in G$,
and thus also $K$, can be
recovered from $b_R \in B$ and $g_R \in G$. Hence, the map $m$ is injective.
The surjectivity of the map $m$ is also clear, since by re-decomposing $b_R^{-1} g_R$ in the other order we can construct
$K$ such that $(g_R, b_R) = m(K)$.  The real analytic nature of the relevant decompositions is well known \cite{Knapp}.
\end{proof}

Let $\pi_1$ and $\pi_2$ denote the obvious projections from $\fM= G\times B$ onto $G$ and $B$, respectively.
Then we have the identities
\be
\Xi_R = \pi_1\circ m,
\qquad
\Lambda_R = \pi_2 \circ m.
\label{A5}\ee
We wish to prove that
\be
\{ \cF, \cH\} \circ m = \{ \cF \circ m, \cH\circ m\}_+,
\qquad
\forall \cF, \cH \in C^\infty(\fM).
\label{A6}\ee
We start with  two useful lemmas.

\begin{lem}\label{lm:B2} For any $f\in C^\infty(G)$ and $\varphi \in C^\infty(B)$, consider
the functions $f \circ \Xi_R$ and $\varphi \circ \Lambda_R$ on $M$.
Then, the derivatives of these functions obey the identities
\be
(\nabla' \varphi \circ \Lambda_R)(K) = - b_R D' \varphi(b_R) b_R^{-1},
\qquad
(\nabla' f \circ \Xi_R)(K) = - g_R D' f(g_R) g_R^{-1},
\label{A7}\ee
and
\be
(\nabla \varphi \circ \Lambda_R)(K) = - g_L  (D' \varphi(b_R)) g_L^{-1},\qquad
(\nabla f \circ \Xi_R)(K)  = - b_L  (D' f(g_R)) b_L^{-1},
\label{A8}\ee
where the decompositions $K = b_L g_R^{-1} = b_L g_R^{-1}$ \eqref{KdecT} are used.
\end{lem}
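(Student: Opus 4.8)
The plan is to read off each of the four derivatives directly from the defining relation \eqref{Nab} for $\nabla$ and $\nabla'$ on $M$, namely $\langle X, \nabla F(K)\rangle_\bI + \langle X', \nabla' F(K)\rangle_\bI = \dt F(e^{tX} K e^{tX'})$, by tracking how the relevant Iwasawa factor of $K$ varies under an infinitesimal left or right translation. I would treat in full only the first identity of \eqref{A7}, since the remaining three follow by the identical mechanism. To compute $\nabla'(\varphi\circ\Lambda_R)$ I set $X=0$ and perturb $K$ to $K e^{tX'}$ with $X'\in\cG^\bC_\bR$. Writing $b_R^{-1} e^{tX'} = \gamma(t)\beta(t)^{-1}$ with $\gamma(t)\in G$, $\beta(t)\in B$, $\gamma(0)=\1$ and $\beta(0)=b_R$, I get $K e^{tX'} = g_L\gamma(t)\,\beta(t)^{-1}$, which is the Iwasawa decomposition \eqref{KdecT} of the perturbed point since $g_L\gamma(t)\in G$; hence $\Lambda_R(K e^{tX'}) = \beta(t)$.

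The crux is the infinitesimal form of this factorisation. Differentiating $b_R^{-1} e^{tX'} = \gamma(t)\beta(t)^{-1}$ at $t=0$ and conjugating by $b_R$ produces the relation $b_R^{-1} X' b_R = \dot\gamma(0) - b_R^{-1}\dot\beta(0)$, in which $\dot\gamma(0)\in\cG$ and $b_R^{-1}\dot\beta(0)\in\cB$. This is precisely a splitting of $b_R^{-1}X' b_R$ along the Manin triple decomposition $\cG^\bC_\bR=\cG+\cB$ \eqref{cGdec}, so that $b_R^{-1}\dot\beta(0) = -(b_R^{-1}X' b_R)_\cB$. Using the definition \eqref{derB} of $D'\varphi$ I then obtain $\dt\varphi(\beta(t)) = \langle b_R^{-1}\dot\beta(0), D'\varphi(b_R)\rangle_\bI = -\langle (b_R^{-1}X'b_R)_\cB, D'\varphi(b_R)\rangle_\bI$.

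To finish I would invoke that $D'\varphi(b_R)\in\cG$ and that $\cG$ is isotropic for $\langle-,-\rangle_\bI$: the $\cG$-component of $b_R^{-1}X'b_R$ then pairs to zero, so the projection $(\cdot)_\cB$ may be replaced by the identity inside the pairing, and $\Ad$-invariance of $\langle-,-\rangle_\bI$ moves the conjugation off $X'$, giving $\dt(\varphi\circ\Lambda_R)(K e^{tX'}) = -\langle X', b_R D'\varphi(b_R) b_R^{-1}\rangle_\bI$. Comparing with \eqref{Nab} and using nondegeneracy of $\langle-,-\rangle_\bI$ yields $\nabla'(\varphi\circ\Lambda_R)(K) = -b_R D'\varphi(b_R) b_R^{-1}$. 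The other three identities run the same way: for $\nabla(\varphi\circ\Lambda_R)$ one perturbs on the left, $K\mapsto e^{tX}K$, keeps the decomposition $K=g_L b_R^{-1}$, and finds $b_R^{-1}\dot b_R(0) = -(g_L^{-1}X g_L)_\cB$, whence the factor $-g_L D'\varphi(b_R) g_L^{-1}$; for the two $f\circ\Xi_R$ statements one repeats the argument with the decomposition $K=b_L g_R^{-1}$, and now the isotropy of $\cB$ (since $D'f(g_R)\in\cB$) is what allows the complementary projection to be dropped, producing $-g_R D'f(g_R) g_R^{-1}$ and $-b_L D'f(g_R) b_L^{-1}$.

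The main obstacle is the single geometric step of correctly differentiating the Iwasawa factorisation: because the factorisation map is nonlinear, one must differentiate the product $\gamma(t)\beta(t)^{-1}$ carefully, place each velocity in its proper tangent space, and only then project along $\cG^\bC_\bR=\cG+\cB$. Once this splitting relation is in hand, everything else is a routine application of isotropy of the two halves of the Manin triple, $\Ad$-invariance, and nondegeneracy of $\langle-,-\rangle_\bI$.
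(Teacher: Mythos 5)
Your proof is correct, and the underlying mechanism is the one the paper uses — differentiate the Iwasawa factorization and exploit the isotropy of $\cG$ and $\cB$ in the Manin triple — but the organization is genuinely different. The paper tests $\nabla'(\varphi\circ\Lambda_R)(K)$ separately against $X\in\cB$ (a one-line computation identifying the $\cG$-component of the gradient as $-D\varphi(b_R)=-(b_RD'\varphi(b_R)b_R^{-1})_\cG$) and against $X\in\cG$ (where $\Lambda_R(Ke^{tX})=\Dress_{e^{-tX}}(b_R)$ and the already-established formula \eqref{dressT} for the infinitesimal dressing action supplies the $\cB$-component), then adds the two pieces; you instead keep $X'$ general, differentiate the factorization $b_R^{-1}e^{tX'}=\gamma(t)\beta(t)^{-1}$ directly — which amounts to re-deriving \eqref{dressT} inline — and collapse the projection only at the end via isotropy. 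The two routes are interchangeable; yours is marginally more self-contained and uniform across the four cases, while the paper's reuses \eqref{dressT} and so is shorter on the page. The one real economy you forgo is that the paper deduces both identities in \eqref{A8} from \eqref{A7} in a single line via $\nabla F(K)=K\,\nabla'F(K)\,K^{-1}$ (conjugating $-b_RD'\varphi(b_R)b_R^{-1}$ by $K=g_Lb_R^{-1}$ immediately gives $-g_LD'\varphi(b_R)g_L^{-1}$), whereas you redo the left-translation computation from scratch; both are valid.
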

\begin{proof}
Denote $F:= \varphi \circ \Lambda_R$ and use the decompositions of $K\in M$ defined in \eqref{KdecT}.
Then, for any $X\in \cB$ and $K\in M$, we have
\be
\langle \nabla' F(K), X\rangle_\bI = \dt F(K e^{tX}) = \dt F(g_L b_R^{-1} e^{tX} ) = \dt \varphi (e^{- tX} b_R )  = - \langle X, D\varphi(b_R)\rangle_\bI,
\ee
which means that
\be
(\nabla' F(K))_\cG = - D\varphi(b_R) = - \left(b_R (D' \varphi (b_R)) b_R^{-1}\right)_\cG,
\ee
where the second equality reflects the relation of the left- and right-derivatives of $\varphi$.
Next, taking $X\in \cG$, notice from the definitions \eqref{XiLaT} and \eqref{Dress} that
\be
\Lambda_R( K e^{tX} ) = \Lambda_R ( g_L (e^{-tX} b_R)^{-1}) = \Dress_{e^{-tX}}(b_R),
\ee
and therefore
\be
\langle \nabla' F(K), X\rangle_\bI  = \dt \varphi( \Dress_{e^{-tX}}(b_R)) = \langle D'\varphi(b_R), - (b_R^{-1} X b_R)_\cB \rangle_\bI =
- \langle b_R D' \varphi (b_R) b_R^{-1}, X  \rangle_\bI,
\label{seceq}\ee
which means that
\be
(\nabla' F(K))_\cB = -\left(b_R (D' \varphi (b_R)) b_R^{-1}\right)_\cB.
\ee
The second equality in \eqref{seceq} follows from formula \eqref{dressT} of the infinitesimal dressing action.
Putting these together, we have proved the first relation in \eqref{A7}, and the second one is derived in a similar manner.
These imply the equalities \eqref{A8} since, for any function $F$ on $M$, $\nabla F(K) = K \nabla' F(K) K^{-1}$.
 \end{proof}

Let us recall that $(G, \{-,- \}_G)$ and $(B,\{-,- \}_B)$ are Poisson--Lie groups, with the
Poisson structures
\be
\{ \varphi_1, \varphi_2\}_B(b) = \langle D' \varphi_1(b), b^{-1} (D \varphi_2(b)) b \rangle_\bI
\quad\hbox{and}\quad
\{ f_1, f_2\}_G(g) = - \langle D' f_1(g), g^{-1} (D f_2(g)) g \rangle_\bI.
\label{PBBG}\ee

Based on the above definitions and the relations of the various derivatives,
the following statement is easily verified.

\begin{lem}\label{lm:B3} For arbitrary smooth functions $\varphi_i$ on $B$ and $f_i$ on $G$ ($i=1,2$), we have
\be
\{ \varphi_1 \circ \Lambda_R, \varphi_2 \circ\Lambda_R\}_+ = \{ \varphi_1, \varphi_2\}_B \circ \Lambda_R,
\quad
\{ \varphi_1 \circ \pi_2, \varphi_2 \circ \pi_2\} = \{ \varphi_1, \varphi_2\}_B \circ \pi_2,
\label{LB41}\ee
\be
\{ f_1 \circ \Xi_R, f_2 \circ\Xi_R\}_+ = \{ f_1, f_2\}_G \circ \Xi_R,
\quad
\{ f_1 \circ \pi_1, f_2 \circ \pi_1\} = \{ f_1, f_2\}_G \circ \pi_1,
\label{LB42}\ee
and
\be
\{ f_i \circ \Xi_R, \varphi_j \circ\Lambda_R\}_+ =\langle (Df_i)\circ \Xi_R, (D \varphi_j) \circ \Lambda_R \rangle_\bI,
\quad
\{ f_i \circ \pi_1, \varphi_j \circ\pi_2\} =\langle (Df_i)\circ \pi_1, (D \varphi_j) \circ \pi_2 \rangle_\bI.
\label{LB43}\ee
\end{lem}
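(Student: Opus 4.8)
The plan is to treat the six identities in two groups: the three brackets on $\fM$ (the second equation in each of \eqref{LB41}, \eqref{LB42}, \eqref{LB43}), which fall out directly from the explicit formula \eqref{A2}, and the three brackets on $M$ (the first equation in each line), which follow by feeding the derivative formulas of Lemma \ref{lm:B2} into the definition \eqref{A1}. The two halves use entirely different mechanisms, so I would present them separately.

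For the $\fM$-identities I would first record that a pullback $\varphi\circ\pi_2$ depends only on the second variable, so $D_1(\varphi\circ\pi_2)=D_1'(\varphi\circ\pi_2)=0$ while $D_2(\varphi\circ\pi_2)=D\varphi$ and $D_2'(\varphi\circ\pi_2)=D'\varphi$; symmetrically $f\circ\pi_1$ has $D_2(f\circ\pi_1)=D_2'(f\circ\pi_1)=0$ and $D_1(f\circ\pi_1)=Df$, $D_1'(f\circ\pi_1)=D'f$. Substituting these into \eqref{A2} annihilates all but one term in each case: for two pullbacks from $B$ only the first term of \eqref{A2} survives and reproduces $\{\varphi_1,\varphi_2\}_B$ of \eqref{PBBG}; for two pullbacks from $G$ only the second term survives and reproduces $\{f_1,f_2\}_G$; for the mixed bracket only $\langle D_1\cF, D_2\cH\rangle_\bI=\langle Df_i, D\varphi_j\rangle_\bI$ remains. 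This part is mechanical.

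The substantive part is the $M$-side, where I would insert the four formulas of Lemma \ref{lm:B2} into \eqref{A1} and exploit that $\cG$ and $\cB$ are isotropic subalgebras for $\langle-,-\rangle_\bI$, the $\Ad$-invariance of $\langle-,-\rangle_\bI$, and the relation $D\varphi(b)=(b\,D'\varphi(b)\,b^{-1})_\cG$ for $\varphi\in C^\infty(B)$ (with its $G$-analogue). The decisive observation is that the left-derivatives supplied by Lemma \ref{lm:B2} are ``pure'': for instance $\nabla(\varphi\circ\Lambda_R)=-g_L(D'\varphi)g_L^{-1}\in\cG$, since $g_L\in G$ and $D'\varphi\in\cG$. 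Hence $\rho$ acts on it as $\tfrac12\id$ and, being paired against another element of the isotropic $\cG$, the entire first term of \eqref{A1} vanishes, leaving only the right-derivative contribution. Writing $A_i:=b_R(D'\varphi_i)b_R^{-1}$ in the like-bracket \eqref{LB41}, the identity $\langle A_1,A_2\rangle_\bI=\langle D'\varphi_1,D'\varphi_2\rangle_\bI=0$ (by $\Ad$-invariance and isotropy of $\cG$) yields the cross-term relation $\langle(A_1)_\cB,(A_2)_\cG\rangle_\bI=-\langle(A_1)_\cG,(A_2)_\cB\rangle_\bI$, which collapses the $\rho$-symmetrized expression to the single term $\langle(A_1)_\cB,(A_2)_\cG\rangle_\bI=\{\varphi_1,\varphi_2\}_B(b_R)$. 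The case \eqref{LB42} runs identically with $D'f\in\cB$.

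I expect the mixed identity \eqref{LB43} on $M$ to be the main obstacle. There the projection $\rho=\tfrac12(\pi_\cG-\pi_\cB)$ acts asymmetrically on the two arguments, since one left-derivative lands in $\cG$ and the other in $\cB$, so a single isotropy cancellation no longer suffices; one must separate the $\cG$- and $\cB$-parts of each conjugated derivative and verify that the surviving pairing is exactly $\langle Df_i(g_R),D\varphi_j(b_R)\rangle_\bI$. Once the Manin-triple isotropy and the derivative relations of Lemma \ref{lm:B2} are in place this reduces to careful bookkeeping, and everything else is routine.
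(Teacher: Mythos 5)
Your proposal is correct and follows essentially the same route as the paper: the $\fM$-identities drop out of \eqref{A2} once the vanishing derivatives of the pullbacks are set to zero, and the $M$-identities come from feeding Lemma \ref{lm:B2} into \eqref{A1} and exploiting isotropy of $\cG$ and $\cB$ together with $\Ad$-invariance of $\langle-,-\rangle_\bI$ --- the paper merely makes the opposite choice of emphasis, writing out in full the mixed case \eqref{LB43} that you defer and dismissing the like-brackets as ``similar but shorter''. The one ingredient absent from your stated toolkit for that mixed case is the identity $g_L^{-1}b_L=b_R^{-1}g_R$ coming from the two Iwasawa decompositions \eqref{KdecT}, which is what eliminates $b_L$ and $g_L$ from the first term of \eqref{A1} before your $\cG$/$\cB$ bookkeeping can produce $\langle Df(g_R),D\varphi(b_R)\rangle_\bI$.
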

\begin{proof}
For example, let us consider  arbitrary
$f\in C^\infty(G)$ and $\varphi\in C^\infty(B)$.  Then, due to Lemma \ref{lm:B2}, the first term in the formula \eqref{A1} gives
\be
\begin{aligned}
&\langle \nabla f\circ \Xi_R (K), \rho \nabla \varphi \circ \Lambda_R (K) \rangle_\bI =
\frac{1}{2}  \langle b_L  D' f(g_R) b_L^{-1}, g_L  D' \varphi(b_R) g_L^{-1}\rangle_\bI
 = \frac{1}{2}  \langle g_L^{-1} b_L  D' f(g_R) b_L^{-1} g_L,  D' \varphi(b_R)  \rangle_\bI \\
&= \frac{1}{2}  \langle b_R^{-1} g_R   D' f(g_R) g_R^{-1} b_R,  D' \varphi(b_R)  \rangle_\bI
 =  \frac{1}{2}  \langle  g_R  D' f(g_R) g_R^{-1} ,  b_R D' \varphi(b_R) b_R^{-1}  \rangle_\bI \\
 &= \frac{1}{2}  \langle   D f(g_R)  ,   D \varphi(b_R)   \rangle_\bI +
 \frac{1}{2} \langle  (g_R  D' f(g_R) g_R^{-1})_\cG ,  (b_R D' \varphi(b_R) b_R^{-1})_\cB  \rangle_\bI.
\end{aligned}
\ee
On the other hand, the second term gives
\be
\begin{aligned}
&\langle \nabla' f\circ \Xi_R (K), \rho \nabla' \varphi \circ \Lambda_R (K) \rangle_\bI =
  \langle g_R  D' f(g_R) g_R^{-1}, \rho(b_R  D' \varphi(b_R) b_R^{-1})\rangle_\bI \\
  \quad &= \frac{1}{2}\langle  D f(g_R) ,  D \varphi(b_R) \rangle_\bI -
  \frac{1}{2}  \langle (g_R  D' f(g_R) g_R^{-1})_\cG, (b_R  D' \varphi(b_R) b_R^{-1})_\cB\rangle_\bI.
\end{aligned}
\ee
Combining these terms, we obtain the first identity in \eqref{LB43}. The rest of the identities follows by similar, but shorter, calculations.
\end{proof}

\begin{rem} Lemma \ref{lm:B3} says, in particular, that $\Lambda_R$ and $\Xi_R$ are Poisson maps from $(M, \{- ,- \}_+)$ to
$B$ and $G$ equipped with the
Poisson structures \eqref{PBBG} on $B$ and $G$, respectively. One can show that $\Lambda_L$ and $\Xi_L$
have the same properties. Moreover,
\be
\{ \varphi_1 \circ \Lambda_L, \varphi_2 \circ \Lambda_R\}_+ = \{ f_1 \circ \Xi_L , f_2\circ \Xi_R\}_+ =0
\ee
holds for all $\varphi_i\in C^\infty(B)$ and $f_i\in C^\infty(G)$.
These statements follow also from the general theory of the Heisenberg double \cite{STS,STS+}.
\end{rem}

\begin{prop}
The map $m$ \eqref{A4} is a Poisson diffeomorphism between $(M,\{- ,- \}_+)$ \eqref{A1} and $(\fM, \{- ,- \})$ \eqref{A2},
that is, the equality \eqref{A6} holds.
\end{prop}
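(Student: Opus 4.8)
The goal is the pushforward identity \eqref{A6}, that is $\{\cF,\cH\}\circ m = \{\cF\circ m,\cH\circ m\}_+$ for all $\cF,\cH\in C^\infty(\fM)$; together with the diffeomorphism property of $m$ established above and the fact that $\{-,-\}_+$ is Poisson, this simultaneously certifies that $(\fM,\{-,-\})$ is a genuine Poisson manifold. The plan is to avoid any direct wrestling with the formula \eqref{A2} by exploiting that both sides of \eqref{A6} are \emph{biderivations} in $(\cF,\cH)$, so that the whole identity is pinned down by its behavior on a spanning family of test functions, which is precisely the content of Lemma \ref{lm:B3}.

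First I would record the structural facts. Both $\{-,-\}_+$ and $\{-,-\}$ are assembled solely from first-order derivatives of their arguments multiplied by point-dependent coefficients, hence each is bilinear over $\bR$ and obeys the Leibniz rule in each slot; notably this uses no Jacobi identity, so it is legitimate to treat $\{-,-\}$ as a biderivation \emph{before} it is known to be Poisson. Both brackets are moreover antisymmetric: for $\{-,-\}_+$ automatically, and for $\{-,-\}$ by a short computation resting on the isotropy of $\cG$ and $\cB$ under $\langle-,-\rangle_\bI$, which makes each of the four terms of \eqref{A2} separately antisymmetric. Consequently each bracket is encoded by a bivector field on its manifold, and since $m$ is a diffeomorphism, $m_*\{-,-\}_+$ is a bivector field on $\fM$; the identity \eqref{A6} is exactly $m_*\{-,-\}_+ = \{-,-\}$. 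A bivector field on the product $\fM = G\times B$ is determined by its pairings with $d\cF\wedge d\cH$ as $d\cF,d\cH$ range over any family of differentials spanning $T^*\fM$ pointwise, and the factor-pullbacks $\pi_1^*C^\infty(G)$ and $\pi_2^*C^\infty(B)$ furnish such a family, since $d(f\circ\pi_1)$ sweeps out the summand $T^*_gG$ and $d(\varphi\circ\pi_2)$ the summand $T^*_bB$ of $T^*_{(g,b)}\fM = T^*_gG\oplus T^*_bB$.

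It then remains to verify \eqref{A6} for the three unordered pairs of generators, and here I would simply transport Lemma \ref{lm:B3} through the identities $\Xi_R = \pi_1\circ m$ and $\Lambda_R = \pi_2\circ m$ of \eqref{A5}. For example, for the pair $(f_1\circ\pi_1, f_2\circ\pi_1)$ the second equation of \eqref{LB42} gives $\{f_1\circ\pi_1, f_2\circ\pi_1\} = \{f_1,f_2\}_G\circ\pi_1$, whose composite with $m$ equals $\{f_1,f_2\}_G\circ\Xi_R$, which by the first equation of \eqref{LB42} is exactly $\{f_1\circ\Xi_R, f_2\circ\Xi_R\}_+ = \{(f_1\circ\pi_1)\circ m, (f_2\circ\pi_1)\circ m\}_+$. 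The pair $(\varphi_1\circ\pi_2, \varphi_2\circ\pi_2)$ runs identically through \eqref{LB41}, and the mixed pair $(f\circ\pi_1,\varphi\circ\pi_2)$ through \eqref{LB43}, the antisymmetry of both brackets removing any need to treat the reversed ordering separately. Assembling the pointwise bivectors, the two fields coincide, which is \eqref{A6}.

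I expect the only genuine subtlety — the main obstacle — to be the reduction step rather than any computation: one must be comfortable asserting that a biderivation is determined pointwise by its values on a mere spanning set of differentials, and that the product structure of $\fM$ makes the factor-pullbacks such a set even though they do not generate $C^\infty(\fM)$ as an algebra in finitely many terms. All the analytic labor has already been front-loaded into Lemma \ref{lm:B2} (the derivative formulas for $f\circ\Xi_R$ and $\varphi\circ\Lambda_R$) and Lemma \ref{lm:B3} (the resulting brackets), so once the reduction is granted the proof collapses to a short bookkeeping of the three cases, after which the Poisson property of $\{-,-\}$ is immediate as a pushforward of $\{-,-\}_+$.
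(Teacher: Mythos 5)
Your proposal is correct and follows essentially the same route as the paper: reduce \eqref{A6} to the generating functions $f\circ\pi_1$ and $\varphi\circ\pi_2$ (justified because their differentials span $T^*\fM$ pointwise and both brackets are biderivations), then conclude by chaining Lemma \ref{lm:B3} through the identities $\Xi_R=\pi_1\circ m$ and $\Lambda_R=\pi_2\circ m$. The only difference is that you spell out the reduction step in more detail than the paper does; the substance is identical.
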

\begin{proof}
Notice that the equality \eqref{A6} follows for all smooth functions on $\fM$ if we prove it for those functions
that are of the form
$f \circ \pi_1$ and $\varphi \circ \pi_2$ for arbitrary smooth functions $f$ on $G$ and $\varphi$ on $B$.
In order to see this, it is enough to remark that the exterior derivatives of such functions span
the cotangent space to $\fM$ at any point.

For the types of functions that feature in Lemma \ref{lm:B3}, using also \eqref{A5}, we can write
\be
\{ \varphi_1 \circ \pi_2, \varphi_2 \circ \pi_2\}\circ m = \{ \varphi_1, \varphi_2\}_B \circ \pi_2 \circ m =
\{ \varphi_1, \varphi_2\}_B \circ \Lambda_R = \{ \varphi_1\circ \Lambda_R, \varphi_2\circ \Lambda_R \}_+ =
 \{ \varphi_1 \circ \pi_2\circ m, \varphi_2 \circ \pi_2\circ m\}_+.
\ee
The other cases of functions are handled in exactly the same way.
\end{proof}

\section{On the construction of $G$-invariant constants of motion via averaging}
\label{sec:C}

Let $X$ be a $G$-manifold and $V$ a $G$-invariant vector field on $X$,
\be
V = (A_\eta)_* V, \qquad \forall \eta \in G,
\ee
where $A_\eta$ denotes the diffeomorphism of $X$ associated with $\eta\in G$.
The $G$-invariance of the vector field is equivalent  to the property that if
$x(t)$ is an integral curve of $V$,  then $A_\eta(x(t))$
is also an integral curve, for each $\eta \in G$.
Suppose now that $G$ is compact and denote by $d_G$ the Haar measure normalized so
that the volume of $G$ is 1.
For any real function $\cF \in C^\infty(X)$ define the function $\cF^G$ by averaging
the functions $A_\eta^* \cF$ over $G$,
\be
\cF^G(x):= \int_G \cF(A_\eta(x)) d_G(\eta), \qquad \forall x\in X.
\ee
It is clear that $\cF^G\in C^\infty(X)^G$.
Moreover, if $\cF$ is a constant of motion for the vector field $V$, then
$\cF^G$ is also a constant of motion for $V$. Indeed,
for any integral curve $x(t)$
\be
\frac{d}{dt} \cF^G(x(t)) = \int_G \frac{d}{dt} \cF(A_\eta(x(t)) d_G(\eta) =0,
\ee
since $A_\eta(x(t))$ is an integral curve for all $\eta$.
In \cite{J,Zung}  this mechanism was used for arguing that, generically,
degenerate integrability  survives Hamiltonian reduction.
In these papers the starting point was a Hamiltonian
action on a symplectic manifold, in which case the Hamiltonian vector fields
of the $G$-invariant Hamiltonians are $G$-invariant.

The averaging of the constants of motion is applicable to the unreduced
integrable systems of our interest if the relevant unreduced
evolution vector fields are $G$-invariant.
This obviously holds for the two degenerate integrable systems on $T^*G$
considered in Section \ref{S:Sec2}, and is also easily checked for the unreduced
evolution vector fields on the quasi-Poisson double $\fS$ studied in Section \ref{S:Sec4}.
The Hamiltonian vector fields  belonging to our master systems on the Heisenberg double $\fM$
enjoy the relevant invariance property \emph{with respect to the quasi-adjoint action} $\cA^2$ \eqref{cA2}.
Indeed,  Hamiltonians invariant with respect to a Poisson action of $G$ always
generate invariant Hamiltonian vector fields. This follows, for example, from  Proposition 5.12 in \cite{STS+}.

For completeness, we below answer the question whether the invariance  property
holds for the Hamiltonian vector fields associated
with the two sets of pullback invariants on $\fM$,  \emph{with respect to
`conjugation action' $\cA$ of $G$ on $\fM$ defined by equation \eqref{cA}}.

\begin{prop}\label{prop:Dphivar}
The derivatives of any $\phi \in C^\infty(B)^G$ satisfy the relations
\be
D'\phi(\Dress_\eta(b)) = \Xi_R(\eta b)^{-1} D'\phi(b) \Xi_R(\eta b),
\quad D\phi(\Dress_\eta(b)) = \eta  D\phi(b) \eta^{-1},
\quad \forall \eta\in G,\,b\in B.
\label{Dphivar}\ee
As a consequence, if $(g(t), b(t))$ is an integral curve of the Hamiltonian
vector field of
$\cH=\pi_2^*\phi \in C^\infty(\fM)^G$, then $\cA_\eta(g(t), b(t))$ is also
an integral curve (with the $G$-action defined in \eqref{cA}).
\end{prop}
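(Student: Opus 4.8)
The plan is to prove the two derivative identities \eqref{Dphivar} first, after which the stated consequence about integral curves is almost immediate. The key step is the transformation law for the right-derivative $D'\phi$; the law for $D\phi$ will then come for free by combining it with the dressing-invariance relation \eqref{S6}, namely $D\phi(b)=b\,D'\phi(b)\,b^{-1}$. Throughout I fix $\eta\in G$, write $\tilde b:=\Dress_\eta(b)=\Lambda_L(\eta b)$ and $\kappa:=\Xi_R(\eta b)$, so that the Iwasawa factorization reads $\eta b=\tilde b\,\kappa^{-1}$, and I recall that $D'\phi$ is $\cG$-valued for functions on $B$.

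To establish the first identity I would differentiate the invariance $\phi(\Dress_\eta(b))=\phi(b)$ along the right-translation $b\mapsto b\,e^{tY'}$ with $Y'\in\cB$. The right-hand side yields $\langle Y',D'\phi(b)\rangle_\bI$ directly. For the left-hand side I need the first-order variation of $\Lambda_L(\eta b\,e^{tY'})$, which I would extract by differentiating the factorization $\eta b\,e^{tY'}=\Lambda_L(\eta b\,e^{tY'})\,\Xi_R(\eta b\,e^{tY'})^{-1}$ at $t=0$. Decomposing the tangent vector $\Ad_{\kappa^{-1}}Y'\in\cG^\bC_\bR$ into its $\cG$ and $\cB$ parts according to \eqref{cGdec}, the $\cB$-component $\beta:=(\Ad_{\kappa^{-1}}Y')_\cB$ turns out to be exactly the right-trivialized velocity $\tilde b^{-1}\dot{\tilde b}(0)$ of the curve $t\mapsto\Dress_\eta(b\,e^{tY'})$. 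Comparing the two sides then gives
\[
\langle (\Ad_{\kappa^{-1}}Y')_\cB,\,D'\phi(\tilde b)\rangle_\bI=\langle Y',D'\phi(b)\rangle_\bI,\qquad\forall Y'\in\cB.
\]
Since $D'\phi(\tilde b)\in\cG$ and $\cG$ is isotropic, I may replace $(\Ad_{\kappa^{-1}}Y')_\cB$ by the full $\Ad_{\kappa^{-1}}Y'$ without changing the pairing; then $\Ad$-invariance of $\langle-,-\rangle_\bI$ together with non-degeneracy of the pairing between $\cB$ and $\cG$ forces $D'\phi(\tilde b)=\Ad_{\kappa^{-1}}D'\phi(b)$, which is the first line of \eqref{Dphivar}.

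The second identity follows by conjugation. Using \eqref{S6} at $\tilde b$ and substituting the first identity, I get $D\phi(\tilde b)=\tilde b\,D'\phi(\tilde b)\,\tilde b^{-1}=(\tilde b\kappa^{-1})\,D'\phi(b)\,(\kappa\tilde b^{-1})$. Since $\tilde b\kappa^{-1}=\eta b$ and $\kappa\tilde b^{-1}=b^{-1}\eta^{-1}$, this becomes $\eta\,(b\,D'\phi(b)\,b^{-1})\,\eta^{-1}=\eta\,D\phi(b)\,\eta^{-1}$, invoking \eqref{S6} once more. This is the second line of \eqref{Dphivar}.

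Finally, for the consequence I would use the explicit unreduced flow of Proposition \ref{prop:sol1}, which in the variables $(g,b)=(g_R,b_R)$ reads $\dot g=D\phi(b)\,g$ and $\dot b=0$. Writing $(\hat g(t),\hat b(t)):=\cA_\eta(g(t),b(t))=(\eta g(t)\eta^{-1},\Dress_\eta(b(t)))$, constancy of $b$ gives $\dot{\hat b}=0$, while $\dot{\hat g}=\eta\,D\phi(b)\,\eta^{-1}\hat g=D\phi(\hat b)\,\hat g$ by the second identity just proved; hence $\cA_\eta(g(t),b(t))$ solves the same equation. The main obstacle is the middle step of the second paragraph: differentiating the Iwasawa factorization correctly to identify the variation of $\Lambda_L$, and carefully tracking which subalgebra, $\cG$ or $\cB$, each term lives in. Everything else is routine bookkeeping with the form $\langle-,-\rangle_\bI$ and its isotropic subalgebras.
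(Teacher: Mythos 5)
Your proof is correct and follows essentially the same route as the paper: both differentiate the dressing-invariance of $\phi$, identify the relevant tangent vector as $(\mathrm{Ad}_{\Xi_R(\eta b)^{-1}}Y')_{\mathcal B}$ (you via direct differentiation of the Iwasawa factorization, the paper via the composition property of the dressing action), and then use isotropy of $\mathcal G$ plus non-degeneracy of the $\langle-,-\rangle_{\mathbb I}$ pairing, with the second identity obtained from \eqref{S6} in exactly the same way. The only cosmetic difference in the last step is that you verify the transformed curve satisfies the ODE, whereas the paper conjugates the explicit solution formula \eqref{S4}; both are fine.
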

\begin{proof}
The result will follow by taking the $t$-derivative of the identity
\be
\phi(b e^{tX}) = \phi(\Dress_\eta(b e^{tX})),\qquad \forall b\in B,\,X\in \cB,\, t\in \bR.
\ee
We see directly from the definitions that
\be
\Dress_\eta(b e^{tX}) = \Dress_\eta(b) \Dress_{\Xi_R(\eta b)^{-1}}(e^{tX}),
\ee
and therefore we get
\be
\langle D'\phi(b), X \rangle_\bI= \langle D'\phi(\Dress_\eta(b)), \dt  \Dress_{\Xi_R(\eta b)^{-1}}(e^{tX})\rangle_\bI.
\ee
Now, we have
\be
\Xi_R(\eta b)^{-1} e^{tX} =  \Dress_{\Xi_R(\eta b)^{-1}}(e^{tX}) \Xi_R (\Xi_R(\eta b)^{-1} e^{tX})^{-1},
\ee
and, at $t=0$,
\be
\Xi_R (\Xi_R(\eta b)^{-1} )^{-1}  =\Xi_R(\eta b)^{-1}, \qquad  \Dress_{\Xi_R(\eta b)^{-1}}(\1_B) = \1_B.
\ee
Hence, taking the derivative at $t=0$ gives
\be
 \dt  \Dress_{\Xi_R(\eta b)^{-1}}(e^{tX}) = \left(\Xi_R(\eta b)^{-1} X \Xi_R(\eta b)\right)_\cB .
\ee
So far we obtained
\be
\langle D'\phi(b), X \rangle_\bI= \langle D'\phi(\Dress_\eta(b)), \left(\Xi_R(\eta b)^{-1} X \Xi_R(\eta b)\right)_\cB \rangle_\bI
= \langle \Xi_R(\eta b)D'\phi(\Dress_\eta(b)) \Xi_R(\eta b)^{-1}, X\rangle_\bI,
\ee
which is equivalent to the equivariance property of $D'\phi$  \eqref{Dphivar}.
Regarding $D\phi$,
we have seen in equation \eqref{S6}  that for the $G$-invariant functions on $B$
\be
D\phi(b) = b D'\phi(b) b^{-1}.
\ee
By combining this with the transformation property of $D'\phi$, we get
\be
D\phi(\Dress_\eta(b)) = \left(\Dress_\eta(b) \Xi_R(\eta b)^{-1} b^{-1}\right) D\phi(b)  \left(\Dress_\eta(b) \Xi_R(\eta b)^{-1} b^{-1}\right)
= \eta D\phi(b) \eta^{-1},
\ee
simply since $\Dress_\eta(b) \Xi_R(\eta b)^{-1} b^{-1} = \eta$.

Next, recall from Proposition \ref{prop:sol1} (equation \eqref{S4}) that the integral curves of $\cH= \pi_2^* \phi$ read
\be
(g(t), b(t)) = \left( \exp\left(t D\phi(b(0))\right) g(0), b(0) \right).
\ee
Therefore
\be
\begin{aligned}
\cA_\eta(g(t), b(t)) &= \left( \eta \exp\left(t D\phi(b(0))\right) g(0) \eta^{-1}, \Dress_\eta(b(0)) \right)\\
&= \left( \exp\left(t \eta D\phi(b(0)) \eta^{-1}\right) \eta g(0) \eta^{-1}, \Dress_\eta(b(0)) \right)\\
&=  \left( \exp\left(t D\phi(\Dress_\eta(b(0)))\right) \eta g(0) \eta^{-1}, \Dress_\eta(b(0)) \right),
\end{aligned}
\ee
which is the integral curve through the initial value $\cA_\eta (g(0), b(0))$.
\end{proof}

 We see from Proposition \ref{prop:Dphivar} that taking the $G$-average of an arbitrary constant of motion
using the conjugation action $\cA$ \eqref{cA}
yields a $G$-invariant constant of motion for the degenerate integrable system
on the Heisenberg double whose Hamiltonians arise from $C^\infty(B)^G$.
However, as we shall see below, the Hamiltonian vector fields stemming from $C^\infty(G)^G$ do not
have the relevant invariance property \emph{with respect to this action}.

For any $h\in C^\infty(G)^G$, the integral curves $(g(t), b(t))\in \fM$ of the Hamiltonian
$\cH = \pi_1^* h\in C^\infty(\fM)^G$  can be read off from equation \eqref{s3} in  Proposition \ref{prop:sol2}.
The identification $(g(t), b(t)) = (g_R(t), b_R(t))$ gives
\be
(g(t), b(t)) = \left( \gamma(t) g(0) \gamma(t)^{-1}, \beta(t)^{-1} b(0)\right),
\label{F1}\ee
where $(\gamma(t), \beta(t)) \in G \times B$ is defined by
\be
\exp(\ri t \nabla h(g(0))) = \beta(t) \gamma(t).
\label{F2}\ee
We are going to prove the following result.

\begin{prop}\label{prop:eqvarprop}
Let $(g(t), b(t))$ be the integral curve \eqref{F1}
of the Hamiltonian vector field of the pullback invariant $\cH = \pi_1^* h\in C^\infty(\fM)^G$
  associated with the initial value $(g(0), b(0))$.
Then the integral curve associated with the transformed initial value
\be
\left(\eta g(0) \eta^{-1}, \Dress_\eta(b(0))\right), \qquad \eta \in G,
\label{F3}\ee
is given by
\be
\cA_{\Xi_R(\eta \beta(t))^{-1}} \left( g(t), b(t)\right),
\label{F4}\ee
where $\beta(t)$ is the determined by the initial value $g(0)$ via the factorization \eqref{F2}.
\end{prop}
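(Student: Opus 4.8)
The plan is to compute the integral curve emanating from the transformed initial value \eqref{F3} directly from Proposition~\ref{prop:sol2} and to match it term by term against \eqref{F4}. Write $(g'(0),b'(0)) := (\eta g(0)\eta^{-1}, \Dress_\eta(b(0)))$. By \eqref{F1}--\eqref{F2} the corresponding integral curve is $(g'(t),b'(t)) = (\gamma'(t) g'(0)\gamma'(t)^{-1},\, \beta'(t)^{-1} b'(0))$, where $\beta'(t)\in B$ and $\gamma'(t)\in G$ solve $\exp(\ri t\nabla h(g'(0))) = \beta'(t)\gamma'(t)$. The first task is to relate this new factorization to the old one governed by $g(0)$. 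Since $h$ is $G$-invariant, its gradient is equivariant, $\nabla h(\eta g\eta^{-1}) = \eta\,\nabla h(g)\,\eta^{-1}$ (this follows by differentiating $h(\eta^{-1}e^{tX}\eta g) = h(e^{tX}g)$ and using invariance of $\langle-,-\rangle_\cG$), whence $\exp(\ri t\nabla h(g'(0))) = \eta\beta(t)\gamma(t)\eta^{-1}$.

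Next I would feed this into the factorization problem. Applying the decomposition $K = \Lambda_L(K)\,\Xi_R(K)^{-1}$ from \eqref{KdecT} to $K = \eta\beta(t)$, and using $\Lambda_L(\eta\beta(t)) = \Dress_\eta(\beta(t))$ from \eqref{Dress}, gives $\eta\beta(t) = \Dress_\eta(\beta(t))\,\Xi_R(\eta\beta(t))^{-1}$. Writing $\zeta(t):= \Xi_R(\eta\beta(t))^{-1}\in G$, this yields
\[
\eta\beta(t)\gamma(t)\eta^{-1} = \Dress_\eta(\beta(t))\cdot\bigl(\zeta(t)\gamma(t)\eta^{-1}\bigr),
\]
a decomposition of a $B$-factor times a $G$-factor. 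By uniqueness of the factorization in \eqref{F2} I can then read off $\beta'(t) = \Dress_\eta(\beta(t))$ and $\gamma'(t) = \zeta(t)\gamma(t)\eta^{-1}$.

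It remains to verify the two components of \eqref{F4}. For the $G$-part, a direct conjugation computation gives
\[
g'(t) = \gamma'(t) g'(0)\gamma'(t)^{-1} = \zeta(t)\gamma(t)\eta^{-1}\,\eta g(0)\eta^{-1}\,\eta\gamma(t)^{-1}\zeta(t)^{-1} = \zeta(t)\,g(t)\,\zeta(t)^{-1},
\]
which is exactly the conjugation in $\cA_{\zeta(t)}$. The $B$-part is the delicate step and the main obstacle. Here I would use the left $B$-equivariance $\Lambda_L(b'K) = b'\Lambda_L(K)$ of the Iwasawa projection (immediate from $K = \Lambda_L(K)\Xi_R(K)^{-1}$ and uniqueness), together with the rearrangement $\zeta(t)\beta(t)^{-1} = \Lambda_L(\eta\beta(t))^{-1}\eta$ of the decomposition above. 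Recalling $b(t) = \beta(t)^{-1}b(0)$, this gives
\[
\Dress_{\zeta(t)}(b(t)) = \Lambda_L\bigl(\zeta(t)\beta(t)^{-1}b(0)\bigr) = \Lambda_L\bigl(\Lambda_L(\eta\beta(t))^{-1}\eta\,b(0)\bigr) = \Dress_\eta(\beta(t))^{-1}\Dress_\eta(b(0)) = b'(t).
\]
Combining the two components shows $(g'(t),b'(t)) = \cA_{\zeta(t)}(g(t),b(t))$ with $\zeta(t) = \Xi_R(\eta\beta(t))^{-1}$, as claimed. The hard part is purely the bookkeeping of the dressing action in the $B$-component; everything hinges on expressing $\Dress$ through $\Lambda_L$ and exploiting its left $B$-equivariance, after which the computation collapses.
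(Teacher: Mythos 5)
Your proof is correct and follows essentially the same route as the paper: use $G$-equivariance of $\nabla h$ together with uniqueness of the Iwasawa factorization to identify $\beta'(t)=\Dress_\eta(\beta(t))$ and $\gamma'(t)=\Xi_R(\eta\beta(t))^{-1}\gamma(t)\eta^{-1}$, then verify the two components. Your treatment of the $B$-component via the left $B$-equivariance $\Lambda_L(b'K)=b'\Lambda_L(K)$ is a slightly cleaner packaging of the same cancellation the paper carries out by re-decomposing $\Xi_R(\eta\beta(t))^{-1}b(t)$.
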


\begin{proof}
Denote by $\tilde \beta, \tilde \gamma$ the solution of the factorization \eqref{F2} at the transformed initial value:
\be
\exp(\ri t \nabla h(\eta g(0) \eta^{-1}))= \tilde \beta(t) \tilde \gamma(t).
\label{F5}\ee
Since $\nabla h$ is $G$-equivariant, we get
\be
\tilde \beta(t) = \Dress_\eta(\beta(t))
\quad \hbox{and}\quad
\tilde \gamma(t) = \Xi_R( \eta \beta(t))^{-1} \gamma(t) \eta^{-1}.
\label{F6}\ee
Therefore, the integral curve $(\tilde g(t), \tilde b(t))$ associated with the transformed initial value
can be written as
\be
\tilde g(t) = \tilde \gamma(t) \eta g(0) \eta^{-1} \tilde \gamma(t)^{-1}
= \Xi_R( \eta \beta(t))^{-1} g(t) \Xi_R( \eta \beta(t)).
\label{F7}\ee
This proves half of our claim. To prove the other half, we inspect
\be
\tilde b(t)= (\Dress_\eta(\beta(t)))^{-1} \Dress_\eta(b(0)).
\label{F8}\ee
Now,
\be
(\Dress_\eta(\beta(t)))^{-1} = \Xi_R( \eta \beta(t))^{-1} \beta(t)^{-1} \eta^{-1}.
\label{F9}\ee
Consequently,
\be
\begin{aligned}
\tilde b(t) & = \Xi_R( \eta \beta(t))^{-1} \beta(t)^{-1} \eta^{-1} \Dress_\eta(b(0)) \\
&= \Xi_R( \eta \beta(t))^{-1}  b(t) b(0)^{-1} \eta^{-1} \Dress_\eta(b(0))\\
&= \left(\Dress_{\Xi_R( \eta \beta(t))^{-1}}(b(t)) \right)   \Xi_R( \Xi_R( \eta \beta(t))^{-1} b(t))^{-1}
b(0)^{-1} \eta^{-1} \Dress_\eta(b(0)).
\end{aligned}
\label{F10}\ee
Furthermore,
\be
\begin{aligned}
\Xi_R( \eta \beta(t))^{-1} b(t) &= (\Dress_\eta \beta(t))^{-1} \eta \beta(t) b(t)\\
&= (\Dress_\eta \beta(t))^{-1} \eta b(0) \\
&= (\Dress_\eta \beta(t))^{-1} (\Dress_\eta b(0)) \Xi_R(\eta b(0))^{-1},
\end{aligned}
\label{F11}\ee
and hence
\be
\Xi_R (\Xi_R( \eta \beta(t))^{-1} b(t))^{-1} = \Xi_R(\eta b(0))^{-1}.
\label{F12}\ee
Plugging this back into the last line of \eqref{F10}  gives
\be
\tilde b(t)  = \left(\Dress_{\Xi_R( \eta \beta(t))^{-1}}(b(t)) \right)
\Xi_R(\eta b(0))^{-1}
 (\eta b(0))^{-1} \Dress_\eta(b(0)) = \Dress_{\Xi_R( \eta \beta(t))^{-1}}(b(t)),
\label{F13}\ee
which finishes the proof.
\end{proof}

\begin{rem}
Proposition \ref{prop:eqvarprop} shows that $\cA_\eta$ \eqref{cA} does not map the pertinent
integral curves \eqref{F1} onto integral curves.
At the same time, it confirms
that changing the initial value by the $G$-action \eqref{cA} does not effect
the projection of the integral curve to the quotient space $\fM/G$.
This is equivalent to the fact that the Hamiltonian vector field $V$ of $\cH = \pi_1^* h$, for $h\in C^\infty(G)^G$,
satisfies
\be
(\cA_\eta)_* V = V + Z,
\ee
where the vector field $Z$ is tangent to the $G$-orbits.
One could find $Z$ explicitly, if desired.
It is worth observing from this state of affairs  that the  action \eqref{cA} on $\fM$  is not a Poisson action.
\end{rem}

\end{document}